\newcommand{\subtitle}[1]{%
  \posttitle{%
    \par\end{center}
    \begin{center}\large#1\end{center}
    \vskip0.5em}%
}
\DeclareMathAlphabet{\mathup}{OT1}{\familydefault}{m}{n}
\newcommand{\widebar}[1]{\mbox{\kern1.5pt\hbox{\vbox{\hrule height 0.6pt \kern0.35ex
        \hbox{\kern-0.15em \ensuremath{#1 }\kern0.0em}}}}\kern-0.1pt}
\newlength{\fixboxwidth}
\DeclareMathAlphabet{\mathbf}{OT1}{cmr}{bx}{it}
\newcommand{\supp}{\mathrm{supp}\,}
\newcommand{\mc}{\mathcal}
\newcommand{\mr}{\mathrm}
\DeclareMathAlphabet{\mathbf}{OT1}{cmr}{bx}{it}
\DeclareMathOperator{\diag}{diag}
\newcommand{\argmin}{\operatornamewithlimits{argmin}}
\newcommand{\essinf}{\operatornamewithlimits{ess\ inf}}
\newcommand{\K}{\mathrm K}
\newcommand{\bbN}{\mathbb{N}}
\newcommand{\bbR}{\mathbb{R}}
\newcommand{\bbP}{\mathbb{P}}
\DeclareMathOperator{\e}{\mathrm e}
\newcommand{\ev}[1]{ {\boldsymbol{\mathbb  E}} \left[  #1 \right]}
\renewcommand{\d}{\mathrm{d}}
\DeclareMathOperator{\Cov}{Cov}
\DeclareMathOperator{\Corr}{Corr}
\DeclareMathOperator{\Var}{Var}
\theoremstyle{definition}
\newtheorem{defi}{Definition}
\newtheorem{exam}[defi]{Example}
\newtheorem{rem}[defi]{Remark}
\newtheorem{assum}[defi]{Assumption}
\newtheorem{propo}[defi]{Proposition}
\newtheorem{theo}[defi]{Theorem}
\newtheorem{lem}[defi]{Lemma}
\newtheorem{corollary}{Corollary}
\newtheorem{alg}{Algorithm}
\newcommand{\Nv}{\mathrm N}
\renewcommand{\K}{\mathrm K}
\newcommand{\gapm}[2]{{{\rm gap}_{#1}(#2)}}
\newcommand{\HS}{{\bbR^d}}
\newcommand{\mcH}{\mathcal B^d}
\newcommand{\LAn}{\mathcal L_{\pi_n}}
\renewcommand{\S}{\mathrm S}
\newcommand{\PM}{\mathrm{P}}
\begin{document}

%\title{On the Variance-Robustness of Random Walk-like Metropolis--Hastings Algorithms}
\title{Robust random walk-like Metropolis--Hastings algorithms for concentrating posteriors}
\author{Daniel Rudolf\thanks{Universit\"at Passau, Innstraße 33, 94032 Passau, Germany, Email: daniel.rudolf@uni-passau.de}, Bj\"orn Sprungk\thanks{Faculty of Mathematics and Computer Science, Technische Universit\"at Bergakademie Freiberg, Pr\"uferstr. 9, 09599 Freiberg, Email: bjoern.sprungk@math.tu-freiberg.de}
}

\date{\today}
\maketitle

\begin{abstract}
Motivated by Bayesian inference with highly informative data we analyze the performance of random walk-like Metropolis--Hastings algorithms for approximate sampling of increasingly concentrating target distributions.
% This investigation is motivated by Bayesian inference with highly informative data. 
We focus on Gaussian proposals which use a Hessian-based approximation of the target covariance.
By means of pushforward transition kernels we show that for Gaussian target measures the spectral gap of the corresponding Metropolis--Hastings algorithm is independent of the concentration of the posterior, i.e., the noise level in the observational data that is used for Bayesian inference.
Moreover, by exploiting the convergence of the concentrating posteriors to their Laplace approximation we extend the analysis to non-Gaussian target measures which either concentrate around a single point or along a linear manifold.
In particular, in that setting we show that the average acceptance rate as well as the expected squared jump distance of suitable Metropolis--Hastings Markov chains do not deteriorate as the target concentrates.
\end{abstract}

\noindent{\bf Keywords: } Metropolis-Hastings algorithm, Laplace approximation, spectral gap, small noise limit
\noindent{\bf Classification. Primary: 65C40; Secondary: 60J22, 62D99, 65C05.} 
\\[1ex]

%%%%%%%%%%%%%%%%%%%%%%%
%% INTRODUCTION
%%%%%%%%%%%%%%%%%%%%%%%

\section{Introduction}
% Intro --> MCMC: working horse Metropolis --> Issue
The challenging goal to gain knowledge from distributions of interest by approximate sampling is omnipresent in computational statistics. For example, in Bayesian inference the prediction based on sampling posterior distributions is crucial or in statistical physics one draws conclusions from Gibbs measures and their samples. In particular, being able to generate a realization of a target  probability measure efficiently leads directly to a proxy of expectations of quantities of interest
%(given by real-valued functions) 
by taking the mean w.r.t. the empirical distribution. Unfortunately, exact sampling is in general infeasible (because of unknown normalization constants or other computational issues), so that a standard approach via Markov chains is commonly used. In particular, the ability of simulating Markov chains that converge marginally to the target distribution is heavily exploited. 

Still the most prominent methodology for realizing such Markov chains is given by the Metropolis-Hastings (MH) algorithm. For target distribution $\pi$ and proposal kernel $P$, defined on $\HS$, a transition from $x\in\HS$ of the MH algorithms works (essentially) as follows: Realize a sample w.r.t. $P(x,\cdot)$, call it $y$, and return this new state with MH acceptance probability $\alpha_{P}(x,y)$ that depends on $\pi$ and $P$ and otherwise return $x$. 
How fast the distribution of $X_n$ of the corresponding Markov chain $(X_n)_{n\in\mathbb{N}}$ converges to $\pi$ (for $n\to\infty$) depends heavily on the choice of the proposal kernel. Gaussian random walk proposals $P(x,\cdot) = \Nv(x, s^2 C)$ appear to be commonly used in practice with the advantage that they allow the tuning of specific parameters such as the so-called stepsize $s>0$ or the proposal covariance $C \in \bbR^{d\times d}$. 

% Measures of efficiency
The tuning should improve the performance of the MH algorithm, i.e., yield a faster convergence of the associated Markov chain to its limit distribution or a higher effective sample size of the corresponding classical Markov chain Monte Carlo (MCMC) estimator for the approximation of expectations of quantities of interest.
Both, the speed of convergence as well as the effective sample size, can be controlled by the spectral gap of the Markov operator associated to the Markov chain.
Moreover, two other common measures for tuning and studying the performance of MH algorithms are the averaged acceptance rate and the expected squared jump distance of the corresponding Markov chain. 

% Known results for high dimension
How these efficiency quantities behave for an increasing state space dimension $d$ and how to optimally tune or scale, e.g., the stepsize parameter $s$ with respect to $d$, has been intensively studied in the past decades, see e.g., \cite{RobertsRosenthal2001}.
% We refer, for instance, to \cite{RobertsRosenthal2001}.
Moreover, in recent years modifications of classical MH algorithms have been developed which are well-posed in infinite dimensional state spaces and, thus, show a dimension-independent efficiency, see, e.g., \cite{BeskosEtAl2008, BeskosEtAl2011, CotterEtAl2013, HaStVo14, NortonFox2016, RudolfSprungk2018}.
In particular, in \cite{HaStVo14, RudolfSprungk2018} the dimension-independence of the spectral gap of MH algorithms based on the so called (generalized) preconditioned Crank-Nicolson proposal, a random walk-like Gaussian proposal, has been proven.
In addition, in \cite{NortonFox2016} conditions for a dimension-independent averaged acceptance rate and expected squared jump distance have been derived for a wide class of proposal kernels.

% Motivation for small noise
However, not only the dimension $d$ of the state space $\HS$ affects the efficiency of MCMC methods and requires suitable scaling of tuning parameters.
Also the concentration of the target measures, i.e., how widespread or focused the distribution $\pi$ is in various directions of $\HS$, can significantly influence the performance of MH algorithms.
%Surprisingly, the problem of highly concentrated or ridged target measures has drawn rather less attention in the MCMC literature so far \cite{AuEtAl2020, BeskosEtAl2018}, despite is importance for Bayesian inference in practice.
Despite its importance for Bayesian inference in practice, surprisingly, the problem of highly concentrated or ridged target measures has drawn rather less attention in the MCMC literature (with notable exceptions \cite{AuEtAl2020, BeskosEtAl2018}). Let us comment on the aforementioned importance.
In many applications, for instance, in subsurface geophysics, observational data is rather sparse but the corresponding signal-to-noise ratio quite large.
Thus, the data is highly informative for certain directions in the parameter space $\HS$, but others remain rather unaffected by the data or likelihood, respectively. 
Hence, the resulting posterior distribution, in a Bayesian inference framework based on this data, is highly concentrated in specific directions.
This high concentration is of advantage from an inference point of view, since it represents only little remaining uncertainty about the unknown parameter $x$. 
However, it poses a serious challenge for an effective exploration of the posterior distribution by MH algorithms, since the high concentration allows only  `small steps' of classical random walk-like MH Markov chains.

% Problem Setting
In order to analyse the performance of MH algorithms for approximate sampling of posteriors resulting from informative data, we consider a sequence of increasingly concentrating target probability measures $(\pi_n)_{n\in\mathbb{N}}$. Each distribution $\pi_n$ is defined on $\HS$ and given by
\[
	\pi_n(\d x)
	\coloneqq
	\frac 1{Z_n} \exp(-n U(x))\ \pi_0(\d x),
	\qquad
	Z_n \coloneqq \int_\HS \exp(-n U(x))\ \pi_0(\d x),
\]
with $\pi_0$ denoting a reference probability measure, $U\colon \HS\to [0,\infty)$ a negative log-likelihood or potential, and $Z_n$ the normalizing constant of $\pi_n$. 
Such target measures occur, for instance, in Bayesian inference where $X\sim \pi_0$ is inferred based on a realization of an observable $Y = F(X) + n^{-1/2} \varepsilon$ with forward mapping $F$ and Gaussian observational noise $\varepsilon \sim \Nv(0, \Sigma)$.
The level of `information' or concentration is thus controlled by $n\in\bbN$ and as $n\to \infty$ the distribution $\pi_n$ concentrates around the set $\mc M_U \coloneqq \argmin_{x \in \supp \pi_0} U(x)$, where $\supp \pi_0$ denotes the support of $\pi_0$.
Now, given that the concentration of $\pi_n$ naturally restricts the `jump size' of MH Markov chains, a decreasing scaling of the stepsize $s = s(n)$ with $n$ is expected to be reasonable. In fact, the optimal scaling of stepsize parameters in a setting of isotropic Gaussian random walk proposals $P_n(x,\cdot) = \Nv(x, s(n)^2 I)$ has been investigated in \cite{BeskosEtAl2018}. % who derived $s_n \in \mc O(n^{-1})$ as optimal in many situations.\todo{check if true!}

% Motivation for Hessian
However, a simple scaling of the parameter $s$ is in general not sufficient in order to guarantee a robust performance.
Particularly, in the common case that not every direction in the parameter space is informed by the likelihood, an isotropic scaling of the stepsize yields a slower and slower exploration in the uninformed directions, which in turn leads to a decreasing spectral gap.
Thus, concentration robust MH algorithms need to be based on proposal kernels which exploit some information about the (local) geometry or concentration, respectively, of the target distribution.

% Hessian
A common and simple idea, to include such information, is to use an approximation of the target covariance within the proposal kernel.
This dates back at least to \cite{Tierney1994} and has led, for instance, to the development of the adaptive Metropolis algorithm in \cite{HaarioEtAl2001}, where the target covariance is estimated by the empirical covariance of 
%the states of 
(previous) realizations of 
the Markov chain.
In recent years, approximations based on gradient and Hessian information of the posterior density have been exploited for the construction of Gaussian proposal kernels, see \cite{MartinEtAl2012, Law2014, CuiEtAl2014, RudolfSprungk2018}, which relate to the classical \emph{Laplace approximation} of posterior measures in Bayesian statistics.
These covariance approximations can usually be computed offline, i.e., before running the MH algorithm, by numerical optimization methods. % and avoid the technicality of adaptive MCMC methods in their analysis.  
In particular, in \cite{RudolfSprungk2018} the authors observed in numerical experiments that Gaussian random walk-like proposal kernels using a gradient-based approximation to the target covariance showed a quite robust effective sample size for an increasing concentration of the posterior.
Moreover, the authors of \cite{AuEtAl2020} obtained similar numerical results for a novel Hamiltonian Monte Carlo sampler. \\[-2ex]
%but provided not theoretical explanation.

% Our contribution
\textbf{Contributions~}
We provide a first theoretical analysis of the performance of MH algorithms using Hessian-based approximations $C_n$ of the covariance of the target $\pi_n$ 
within the proposal kernel.
% for proposing new states.
We focus on two Gaussian random walk-like proposals: 
\[
P_n(x,\cdot) = \Nv(x, s^2 C_n) \quad  \text{and} \quad P_n(x,\cdot) = \Nv(x_n + \sqrt{1-s^2}(x-x_n), s^2 C_n),
\]
%$P_n(x,\cdot) = \Nv(x, s^2 C_n)$ and $P_n(x,\cdot) = \Nv(x_n + \sqrt{1-s^2}(x-x_n), s^2 C_n)$ 
% where $x_n$ denotes the maximum a-posteriori estimate, i.e., the maximiser of the posterior (Lebesgue) density
with
\[
	x_n \coloneqq \argmin_{x\in \supp \pi_0} U(x) - \frac 1n \log \pi_0(x),
	\qquad
	C_n \coloneqq \frac 1n \left(\nabla^2 U(x_n) - \frac 1n \nabla^2 \log \pi_0(x_n) \right)^{-1},
\]
where $\pi_0$ (by an abuse of notation) denotes also the Lebesgue density of the reference distribution $\pi_0$ and where $x_n$ is assumed to be the unique, nondegenerate minimiser.
Note that $x_n$ denotes the maximum a-posteriori estimate, i.e., the maximiser of the posterior (Lebesgue) density.
%\begin{equation}\label{equ:Hn}
%	H_n \coloneqq \nabla^2 U(x_n) - \frac 1n \nabla^2 \log \pi_0(x_n).
%\end{equation}

The latter of the two proposals is a `modified' version of the well-known preconditioned Crank-Nicolson proposal \cite{CotterEtAl2013}, which was shown to yield a dimension-independent spectral gap under suitable conditions \cite{HaStVo14}.
We want to make explicitly clear, that 
%Please note, that 
no additional scaling of the stepsize parameter $s$ is required.
The use of the Hessian-based approximating covariance $C_n$ contains an implicit scaling of the `jump size' in exactly those directions which are affected by the likelihood or $U$, respectively.
For these proposal kernels we show the following two main results:
\begin{enumerate}
\item[\textbf{1.}] \textbf{Gaussian target result:}
In the case of a Gaussian target measures $\pi_n$, the MH algorithm based on one of the above proposal kernels $P_n$ yields a spectral gap which is independent of the concentration level $n$.
Similarly, also the average acceptance probability and the (with respect to the target variance) normalized expected jump squared distance of the associated Markov chain are independent of $n$. 

\item[\textbf{2.}] \textbf{Non-Gaussian target result:}
If the non-Gaussian target measures $(\pi_n)_{n\in\mathbb{N}}$ converge in Hellinger distance to their \emph{Laplace approximations}, which are given by the Gaussian measures $\LAn \coloneqq \Nv(x_n, C_n)$, then the MH algorithm based on one of the above proposal kernels $P_n$ has a non-deteriorating average acceptance rate and a non-deteriorating normalized expected jump squared distance as the concentration level of $\pi_n$ increases, i.e., as  $n\to\infty$. 
\end{enumerate}
The first main result follows by a straightforward application of the concept of \emph{pushforward transition kernels} which we outline in detail in the appendix. The second main result is based on a stability analysis of the corresponding efficiency quantities with respect to perturbations of the target measure combined with recent results on the convergence of the Laplace approximation from \cite{SchillingsEtAl2020}.\\[-2ex]

\textbf{Implications and Limitations~}
Besides a rigorous analysis of MH algorithms for concentrating posteriors, our theoretical results also provide a solid justification of the general advice stated in \cite{Tierney1994} to use the target covariance (or approximations to it) for proposing new states in MH algorithms, i.e., then the performance, for instance the spectral gap for Gaussian targets, is completely independent of the concentration level of the posterior.

However, in order to have convergence of $\pi_n$ and $\LAn$ to each other in Hellinger distance, the posteriors $\pi_n$ must either converge (weakly) to a point mass or concentrate around a linear manifold $\mc M_U$ where $\pi_0$ must also be Gaussian in the latter case.
Thus, for the case that not all directions of the parameter space are affected by the observational data or likelihood, respectively, our analysis holds only if the negative log likelihood $U$ and, thus, the observational data depends only on directions belonging to an \emph{active subspace}, cf. \cite{Constantine2015, CuiEtAl2014}.
A similar assumption is required in the optimal scaling results in \cite{BeskosEtAl2018}.
We see our work as a starting point for the analysis of Hessian-based MH algorithms for concentrated posteriors.
Future work will extend the results presented here to more general settings where $\pi_n$ (as $n\to \infty$) concentrates along nonlinear manifolds.
Then, a single Hessian will most likely not be sufficient for a concentration robust performance, rather local Hessian information will be required.\\[-2ex]
\textbf{Outline~}
%The further organization is as follows: 
In Section~\ref{sec:preliminaries} we state the general setting and notation as well as the required definitions for MH algorithms and measuring their performance.
Section~\ref{sec:VarRobust_Intro} introduces our concepts for concentration robust performance as well as the considered Hessian-based proposal kernels.
Our theoretical analysis of the resulting MH algorithms is then conducted in Section~\ref{sec:VarRobust_Gauss}.
In particular, our first main result is shown in Section~\ref{sec:Gaussian_target} and the second main result in Section~\ref{sec:NonGaussian_target}.
In the appendix we state the results on pushforward transition kernels required in Section~\ref{sec:Gaussian_target}.

%%%%%%%%%%%%%%%%%%%%%%%
%% PRELIMINARIES
%%%%%%%%%%%%%%%%%%%%%%%

\section{Preliminaries}\label{sec:preliminaries}
We assume to have a sufficiently rich probability space $(\Omega,\mathcal{A},\mathbb{P})$ that serves as common domain for all subsequently appearing random variables and denote by $\mcH$ the Borel $\sigma$-algebra of $\HS$. Let $\mc P(\HS)$ be the set of all probability measures on $(\HS,\mcH)$. Now consider the problem of sampling w.r.t. probability measures $\pi\in \mc P(\HS)$ of the form
\begin{equation}\label{equ:post0}
	\pi(\d x)
	\coloneqq
	\frac 1{Z} \exp(- U(x))\, \pi_0(\d x),
	\qquad
	Z \coloneqq \int_\HS \exp(-U(x))\, \pi_0(\d x),
\end{equation}
where $U \colon \HS \to [0,\infty)$ is a measurable function and $\pi_0\in\mathcal{P}(\HS)$ a reference measure.
Our motivation for such types of target measures comes from Bayesian inference in $\HS$ given noisy observations $y \in \bbR^m$ of an observable of the form
\[
    Y = F(X) + \varepsilon,
    \qquad
    X \sim \pi_0,
    \qquad
    \varepsilon \sim \pi_\text{noise}.
\]
Here $F\colon \HS\to\bbR^m$ denotes a forward map or mathematical model, for which the parameter $x \in \HS$ needs to be inferred, $\pi_0$ represents the prior probability measure, and $\varepsilon$ denotes an observational noise which is stochastically independent from $X\sim \pi_0$. 
Often in practice, the random variable $\varepsilon$ follows a mean-zero Gaussian  distribution $\pi_\text{noise} = \Nv(0, \Sigma)$ with regular covariance matrix $\Sigma \in \bbR^{m\times m}$.
Then, the resulting posterior distribution $\pi$ of $X\sim \pi_0$ given $Y=y$ is of the form \eqref{equ:post0} with $U(x) \coloneqq \frac 12 \|y - F(x)\|^2_{\Sigma}$ where $\|v\|_\Sigma \coloneqq \sqrt{v^\top \Sigma^{-1} v}$ for $v\in\bbR^m$.

Now one is interested to sample (approximately) from $\pi$ to gain knowledge about the posterior distribution. 
For example, a common task is to compute posterior expectations $\pi(f)$ of quantities of interest $f\colon \HS \to \bbR$ given by
\[
    \pi(f) \coloneqq \int_{\HS} f(x)\ \pi(\d x).
\]
For these purposes one constructs a `suitable' Markov chain with transition kernel\footnote{A mapping $K\colon \bbR^d\times \mcH \to [0,1]$ is called transition kernel if $K(\cdot,A)\colon \HS \to [0,1]$ is, for any $A\in\mcH$, a measurable function and $K(x,\cdot)\colon\mcH\to [0,1]$ is, for any $x\in\HS$, a probability measure.} $K$, that is, a $\bbR^d$-valued sequence of random variables $(X_k)_{k\in\bbN}$ satisfying
\[
    \mathbb{P}(X_{k+1}\in A\mid X_{1},\dots,X_k) = \mathbb{P} (X_{k+1}\in A\mid X_k) = K(X_k,A) \qquad k\in\bbN, \, A\in\mcH.
\]
Here `suitable' means that for sufficiently large and further increasing $k\geq 1$ the distribution of $X_k$ `gets close' to $\pi$. Moreover, the (standard) Markov chain Monte Carlo (MCMC) estimator $S_N(f) \coloneqq \frac{1}{N} \sum_{k=1}^N f(X_k)$ approximates $\pi(f)$ consistently under weak assumptions 
\cite[Chapter~17]{MeynTweedie2009}. A minimal requirement for the aforementioned convergence statements is that $\pi$ is a stationary distribution w.r.t. the corresponding transition kernel $K$, that is,
\[
    \pi(A) = \pi K(A) \coloneqq \int_{\HS} K(x,A)\ \pi(\d x), \quad A\in\mcH.
\]
A more demanding property is that $K$ is reversible w.r.t. $\pi$, i.e., $K(x,\d y)\pi(\d x) = K(y,\d x) \pi(\d y)$. This symmetry condition of the measure $K(x,\d y)\pi(\d x)$ on $\HS\times \HS$ implies that $\pi$ is a stationary distribution.  
% Markov chain general Definition

We focus on Markov chains based on the Metropolis--Hastings algorithms with commonly used random-walk like proposals that lead to reversible transition kernels.

%%%%%%%%%%%%%%%%%%%%%%%
%% MH-ALGORITHMS
%%%%%%%%%%%%%%%%%%%%%%%

\paragraph{Metropolis--Hastings algorithms}
%We call a mapping $K \colon \HS\times\mcH \to [0,1]$ a Markov or transition kernel, if $K(x,\cdot)\colon \mcH\to[0,1]$ is a probability measure for any $x\in\HS$ and if $K(\cdot, A)\colon \HS \to [0,1]$ is measurable for any $A\in\mcH$. 
Suppose that we have a target measure $\pi \in \mc P(\HS)$ of the form \eqref{equ:post0}, a proposal transition kernel $P$ on $\HS$ as well as a function $\alpha_P \colon \HS\times \HS \to [0,1]$, which serves as acceptance probability that might depend on $\pi$ and $P$. Then, the most prominent methodology for the construction of a transition kernel that is reversible w.r.t. $\pi$ is given by the Metropolis-Hastings (MH) algorithm.
It proceeds as follows.
\begin{alg}(Metropolis-Hastings).
Given a current state $x\in\HS$, one obtains the next state $x_\text{next} \in\HS$ by the following steps:
\begin{enumerate}
    \item Draw $Y\sim P(x,\cdot)$ and $U\sim \text{Unif}[0,1]$ independently and denote the realisations by $y$ and $u$ respectively.
    \item If $u<\alpha_{P}(x,y)$, return $x_{\text{next}}\coloneqq y$, otherwise return $x_\text{next}\coloneqq x$.
\end{enumerate}
\end{alg}
The algorithm can be rewritten in terms of a transition kernel:
\[
    K(x, \d y)
    =
    \alpha_P(x,y) \, P(x, \d y) + r(x) \, \delta_x(\d y),
    \qquad
    r(x) \coloneqq \int_\HS (1- \alpha_P(x,y)) \, P(x, \d y).
\]
Here $\delta_x$ denotes the Dirac measure located at $x\in\HS$ and the function $r$ is called the `rejection probability'. 
%
%, by the following procedure: given $X_k = x$
%\begin{enumerate}
%    \item 
%    draw a proposed new state $x'$ according to $P(x,\cdot)$ where $P\colon \HS\times\mcH\to[0,1]$ is a chosen \emph{proposal kernel},
%    \item
%    accept the proposed state $X_{k+1} = x'$ with probability $\alpha(x,x')$, otherwise remain at $X_{k+1}=x$, where the mapping $\alpha_P\colon\HS\times\HS\to[0,1]$ is called \emph{acceptance probability}.
%\end{enumerate}
Now we specify the acceptance probability $\alpha_P$ of the MH algorithm that eventually implies the well known fact that the transition kernel $K$ is reversible w.r.t. $\pi$. It is given by 
%The 
%Given a chosen proposal kernel $P$ and $\pi$ as in \eqref{equ:post0}, the corresponding acceptance probability $\alpha$ for the MH algorithm is 
\begin{equation}\label{eq:alpha}
	\alpha_P(x,y) 
	\coloneqq
	\min\left\{1, \exp\left(U(x) - U(y) \right)\, h_P(x,y) \right\},
\end{equation}
where $h_P(x,y) \coloneqq \frac{\d \nu_P^\top}{\d \nu_P}(x,y)$ with $\nu_P(\d x \d y) \coloneqq P(x,\d y) \pi_0(\d x)$ and $\nu^\top_P(\d x \d y) \coloneqq \nu_P(\d y \d x)$, see \cite{Ti98}. 

Assume that $P$ admits a Lebesgue density $p\colon \HS\times \HS \to [0,\infty)$ such that $P(x,A) = \int_A p(x,y)\ \d y$ for any $x\in\HS, A\in\mcH$. 
Additionally suppose that $\pi_0$ possesses a Lebesgue density which we also denote by $\pi_0\colon \HS\to[0,\infty)$ and let \[
    \S_0  \coloneqq \supp \pi_0 \coloneqq \{ x\in\HS \colon \pi_0(x)>0 \}.
\]
Then, by \cite{Ti98} we have
\[
  h_P(x,y) \coloneqq
  \begin{cases}
  \frac{\pi_0(y) \ p(y,x)}{\pi_0(x)\ p(x,y)}, & \frac{p(x,y)}{\pi_0(y)}>0 \quad \text{and} \quad \frac{p(y,x)}{\pi_0(x)}>0,\\ 
  0, & \text{otherwise}.
  \end{cases}
\]
%$h_P(x,y) = \frac{\pi_0(y) \ p(y,x)}{\pi_0(x)\ p(x,y)}$ for any $(x,y)$ with $\pi_0(x)\ p(x,y) >0$.
%We observe that 
We focus on rather simple proposal transition kernels, such as
\begin{enumerate}
\item
the Gaussian random walk proposal,
\[
	P(x,\cdot) = \Nv\left(x, s^2 C \right), \quad s>0,
\]
with step size parameter $s>0$ and covariance matrix $C\in\bbR^{d\times d}$, leading to 
\[
    \alpha_P(x,y) 
    =
	\min\left\{1, \exp\left(U(x) - U(y) \right)\, \frac{\pi_0(y)}{\pi_0(x)} \,\boldsymbol{1}_{\S_0\times \S_0}(x,y) \right\};
\]
\item
the  preconditioned Crank--Nicolson (pCN) proposal introduced by \cite{Neal1999, BeskosEtAl2008}:
\[
	P(x,\cdot) = \Nv\left(\sqrt{1-s^2} x, s^2 C \right), \quad s\in(0,1],
\]
which is reversible w.r.t.~$\Nv(0,C)$. Let us denote the Lebesgue density of $\Nv(0,C)$ by $\varphi_{0,C}$. 
Then, the corresponding acceptance probability satisfies
% and, hence, suited for Bayesian inference with Gaussian prior $\pi_0 = \Nv(0,C)$, since then
\[
    \alpha_P(x,y) 
    = \min\left\{1, \exp\left(U(x) - U(y) \right)\, \frac{\pi_0(y)}{\pi_0(x)}\, \frac{\varphi_{0,C}(x)}{\varphi_{0,C}(y)} \,\boldsymbol{1}_{\S_0\times \S_0}(x,y)  \right\}.
\]
Note that for $\pi_0=\Nv(0,C)$ the MH algorithm is even well-defined in infinite dimensional Hilbert spaces, since actually no Lebesgue densities for the definition of the acceptance probability are required. Under suitable assumptions on $U$ the resulting MH algorithm (where $\pi_0=\Nv(0,C)$) has dimension-independent statistical efficiency in terms of a dimension-independent spectral gap \cite{HaStVo14}.
\end{enumerate}

%%%%%%%%%%%%%%%%%%%%%%%
%% MEASURES OF PERFORMANCE
%%%%%%%%%%%%%%%%%%%%%%%

\paragraph{Measures for performance}
In order to study the performance of MH algorithms for increasingly concentrated posteriors distributions we require specific measures of efficiency. Those quantify the performance of the Markov chain methodology by real numbers that we can analyze as the concentration of the target distribution increases. 

%In th following, we recall four common measures for performance. 
We require some notation. Let $L^p_\pi$ for $p\geq 1$ be the Lebesgue space of measurable functions $f\colon \HS \to \bbR$ such that $\pi(|f|^p) < \infty$. Further, we set $\|f\|_{L^p_\pi} \coloneqq \left( \pi(|f|^p) \right)^{1/p}$ for $f \in L^p_\pi$ and denote the inner product in $L^2_\pi$ by
\[
    \langle f, g \rangle_{\pi} \coloneqq \int_{\HS} f(x) g(x) \, \pi(\d x),
    \qquad
    f,g\in L^2_\pi.
\]
Since MCMC methods are used to compute posterior expectations of quantities of interest $f\colon \HS\to\bbR$, one is interested in the asymptotic variance of $S_N(f) = \frac 1N \sum_{k=1}^N f(X_k)$ where $(X_k)_{k\in\bbN}$ denotes a $\pi$-reversible  Markov chain realized by a MH algorithm. 
Given that the Markov chain is irreducible, $X_1\sim \pi$ and $\Var_\pi(f) \coloneqq \pi\left( (f - \pi(f))^2 \right)$, the asymptotic variance satisfies
\[
    \lim_{N\to\infty} N\ \ev{\left|S_N(f) - \pi(f)\right|^2}
    =
    \Var_\pi(f)\left(1 +  2 \sum_{k=1}^\infty \Corr\left(f(X_1), f(X_{k+1})  \right)\right),
\]
see \cite{KipnisVaradhan1986}. Thus, for a given quantity of interest $f\in L_\pi^2$ a common measure of efficiency is the associated \emph{integrated autocorrelation time}
\[
    \tau_f \coloneqq 1 +  2 \sum_{k=1}^\infty \Corr\left(f(X_1), f(X_{k+1})\right), \qquad X_1 \sim \pi.
\]
This number can be rewritten in terms of the Markov operator $\K\colon L^2_\pi \to L^2_\pi$ that is associated to the transition kernel $K$ and given by $\K f(x) \coloneqq \int_{\HS}f(y)\ K(x,\d y)$. 
To this end, define the `standardized' quantity of interest $\widebar f_\pi(x) \coloneqq \frac{f(x) - \pi(f)}{\sqrt{\Var_\pi(f)}}$, i.e., $\pi(\widebar f_\pi) = 0$ and $\Var_\pi(\widebar f_\pi)=1$. Then we have
\begin{equation}\label{eq:iart_K}
    \tau_f 
    =
    \tau_f(K)
    \coloneqq 1 + 2 \sum_{k=1}^\infty \langle \widebar f_\pi, \K^k \widebar f_\pi\rangle_\pi.
\end{equation}
This measure of efficiency obviously depends on the choice of $f \in  L^2_\pi$. 
Related to the integrated autocorrelation time, but independent of a quantity of interest $f$, is the \emph{spectral gap} of $\K$,
\begin{equation}\label{eq:gap_K}
    \gapm{\pi}{K} 
    \coloneqq
    1 - \sup_{0\neq f \in L^2_\pi} \frac{\|\K f - \pi(f)\|_{L^2_\pi}}{\|f\|_{L^2_\pi}}.
\end{equation}
The autocorrelation time and the spectral gap satisfy the following relation
\begin{equation}\label{eq:SG_IAT}
    \tau_f(K) \leq \frac{2 \|f\|_{L^2_\pi}}{1 - \gapm{\pi}{K}} \qquad
    \forall f \in  L^2_\pi,
\end{equation}
see for example \cite{Rudolf2012}. 

We now focus on two simpler efficiency quantities which will be our main focus in the subsequent sections. The first one is the \emph{(expected squared) jump distance}
\[
	\rho(K)
	\coloneqq
    \left( \int_{\HS} \int_{\HS} \|x- y\|^2\ K(x,\d y)\ \pi(\d x) \right)^{1/2},
\]
that can be refined via the \emph{directional (expected squared) jump distance} 
\begin{equation*}%\label{equ:SJD_def}
	\rho_v(K)
	\coloneqq
    \left( \int_{\HS} \int_{\HS} |v^\top x - v^\top y|^2\ K(x,\d y)\ \pi(\d x) \right)^{1/2}
\end{equation*}
where $v\in\HS$ with $\|v\|=1$ denotes a considered direction. Intuitively, 
the number $\rho_v(K)$ tells us how well a Markov chain with transition kernel $K$ can explore the whole $\HS$ in the direction $v$.  
Moreover, the lag one autocorrelation for linear quantities of interest $f_v(x) \coloneqq v^\top x$ is closely related to $\rho_v(K)$.
To see this, let $(X_k)_{k\in\bbN}$ be Markov chain with transition kernel $K$ and stationary initial distribution. Then, by means of $\Var(X_{k+1} -  X_{k}) = 2 \Var(X_{k}) - 2\Cov(X_{k},X_{k+1})$ we have
\[
	\Corr\left( v^\top X_k, v^\top X_{k+1} \right)	
	=
	1
	- \frac{\ev{ \left|v^\top\left(X_{k+1} - X_{k}\right)\right|^2 }}{2\Var\left(v^\top X_k\right)}
	=
	1 - \frac{\rho_v(K)}{2\Var_{\pi}(f_v)}.
\]
Motivated by that we introduce the \emph{normalized directional (expected squared) jump distance} 
\begin{equation}\label{equ:SJD_def}
	\widebar \rho_v(K)
	\coloneqq
	\frac{\rho_v(K)}{\Var_{\pi}(f_v)},
	\qquad
	f_v(x) = v^\top x, \ \ 0\neq v \in \bbR^d.
\end{equation}
Intuitively $\widebar \rho_v(K)$ describes the expected squared jump size of the Markov chain in direction $v$ relatively to the `spread' of the target measure in this direction. In particular, since correlations are bounded by $1$, we obtain that $0\leq \widebar \rho_v(K) \leq 2$.
Moreover, if the Markov operator $\K \colon L_\pi^2 \to L_\pi^2$ associated to $K$ is positive semi-definite, then we have $\Corr\left( v^\top X_k, v^\top X_{k+1} \right) \geq 0$ and, thus, $0\leq \widebar \rho_v(K) \leq 1$.
In this case, we also have $\widebar \rho_v(K) \leq \frac 12 \left(\tau_{f_v}(K) - 1\right)$.

Eventually, we consider the (averaged) acceptance rate of a $\pi$-reversible MH transition kernel $K$ with proposal kernel $P$ given by
\begin{equation}\label{equ:AAR_def}
	\widebar{\alpha}(K)
	\coloneqq
	\int_{\HS} \int_{\HS} \alpha_P(x, y) \ P(x,\d y)\ \pi(\d x),
\end{equation}
with $\alpha_P$ as in \eqref{eq:alpha}. 
The averaged acceptance rate is typically used for tuning MH algorithms in practice.
For instance, for the Gaussian random walk proposal $P(x,\cdot) = \Nv\left(x, s^2 C \right)$ the usual `rule of thumb' is to choose a step size $s>0$ such that $\widebar{\alpha}(K) \approx 0.234$ \cite{RobertsRosenthal2001}. Usually this yields a `good' scaling w.r.t. an increasing state space dimension $d\to\infty$ and provides also `small' asymptotic variances.
Although, the averaged acceptance rate seems to be the crudest and simplest quantity for measuring the efficiency among the four presented ones, it serves as a bound on the maybe most sophisticated one, the spectral gap: By means of the conductance of the underlying Markov chain and Cheeger's inequality \cite{LawlerSokal1988} we indeed have
\begin{equation}\label{eq:gap_AAR}
    \gapm{\pi}{K}
    \leq
    2\widebar{\alpha}(K).
\end{equation}

%%%%%%%%%%%%%%%%%%%%%%%
%% CONCEPTS & KERNELS
%%%%%%%%%%%%%%%%%%%%%%%

\section{Concepts and Kernels for Concentration Robustness}\label{sec:VarRobust_Intro}
In order to study the performance of MCMC methods for highly concentrated posterior measures resulting from highly informative data, we consider a sequence of increasingly concentrated target probability measure $\pi_n \in \mc P(\HS)$, $n\in\bbN$, on $\HS$ given by
\begin{equation}\label{equ:post}
	\pi_n(\d x)
	\coloneqq
	\frac 1{Z_n} \exp(-n U(x))\ \pi_0(\d x),
	\qquad
	Z_n \coloneqq \int_\HS \exp(-n U(x))\ \pi_0(\d x),
\end{equation}
with $\pi_0$ denoting the prior or other suitable reference measure on $\mcH$, $U\colon \HS\to [0,\infty)$ denoting a negative log-likelihood or potential, and $Z_n$ being the normalizing constant of $\pi_n$.
For $U(x) \coloneqq \frac 12 \|y - F(x)\|^2_{\Sigma}$ this corresponds to the above Bayesian inference problem with scaled noise
\[
    Y = F(X) + \varepsilon, \qquad X\sim \pi_0, \quad
    \varepsilon \sim \Nv\left(0, \frac 1n \Sigma\right), \ n\in\bbN.
\]

\begin{rem}[On the assumption $U\geq 0$]
Regarding the standing assumtpion that $U(x) \geq 0$ for $U$ in \eqref{equ:post} we remark this is a rather mild restriction.
If $U$ were to be unbounded from below, i.e., $\essinf_{x\in\HS} U(x) = -\infty$, then the sequence of measures $\{\pi_n\}_{n\in\bbN}$ is not tight \cite[Proposition 2.1]{Hwang1980}. % and no sensible analysis 
In order to prevent that we, hence, have to assume that $\essinf_{x\in\HS} U(x) > -\infty$ which w.~l.~o.~g.~can be restated as $\essinf_{x\in\HS} U(x) = 0$ using a suitable normalization constant $Z_n$.
\end{rem}

%%%%%%%%%%%%%%%%%%%%%%%
%% CONCEPTS
%%%%%%%%%%%%%%%%%%%%%%%

\subsection{Notions of Concentration Robust Performance}
We now consider a family of transition kernels $\{K_n\}_{n\in\bbN}$ where each transition kernel $K_n$ corresponds to a MH algorithm targeting $\pi_n$ as in \eqref{equ:post} and employing a proposal kernel $P_n\colon \HS\times \mcH \to[0,1]$ and an acceptance probability $\alpha_n\colon \HS\times\HS\to[0,1]$ according to \eqref{eq:alpha}.
Thus, $K_n$ is $\pi_n$-reversible.
In the following we define and study notions of a robust performance of $K_n$ as $n\to\infty$, i.e., as the target measure $\pi_n$ becomes more concentrated.
To this end, we use the four measures or quantities, respectively, of efficiency discussed in Section \ref{sec:preliminaries}.

\begin{defi}
We say, a family $\{K_n\}_{n\in\bbN}$ of $\pi_n$-reversible MH transition kernels $K_n$ as assumed above has 
\begin{itemize}
    \item 
    a \emph{(concentration) robust average acceptance rate} if
    \begin{equation}\label{equ:AAR_robust} \tag{RAAR}
	    \liminf_{n\to\infty} \widebar{\alpha}(K_n) > 0;
    \end{equation}
    
    \item
    a \emph{(concentration) robust jump squared distance} if
    \begin{equation}\label{equ:RJD_robust}\tag{RSJD}
    	\liminf_{n\to\infty} \widebar{\rho}_v(K_n) > 0
    	\qquad
    	\forall 0 \neq v\in \bbR^d;
    \end{equation}
    
    \item
    a \emph{(concentration) robust integrated autocorrelation time} if
    \begin{equation}\label{equ:IACT_robust}\tag{RIAT}
    	\limsup_{n\to\infty} \tau_f(K_n) < \infty
    	\qquad
    	\forall f \in L^2_{\pi_0}(\bbR);
    \end{equation}

    \item
    a \emph{(concentration) robust spectral gap} if
    \begin{equation}\label{equ:Gap_robust}\tag{RSG}
    	\liminf_{n\to\infty} \gapm{\pi_n}{K_n} > 0.
    \end{equation}
\end{itemize}
\end{defi}

The basic condition \eqref{equ:AAR_robust} is essential for all other, more ambitious, notions of concentration robust performance. 
In particular, due to the relation \eqref{eq:gap_AAR} we know that if \eqref{equ:AAR_robust} does not hold, then \eqref{equ:Gap_robust} can not hold either.
Similarly, if \eqref{equ:RJD_robust} does not hold, then obviously also \eqref{equ:IACT_robust} can not hold.
Moreover, it can be shown that \eqref{equ:Gap_robust} also implies \eqref{equ:RJD_robust}, see \cite[Proposition 6.3]{Sprungk2017}.
Lastly, by \eqref{eq:SG_IAT} we also know that \eqref{equ:Gap_robust} implies \eqref{equ:IACT_robust}.
Thus, we can summarize the relation between the four notions of concentration robust performance as follows:
\[
	\eqref{equ:Gap_robust}\ \Rightarrow \ \eqref{equ:IACT_robust}\ \Rightarrow \ \eqref{equ:RJD_robust},
	\qquad
	\eqref{equ:Gap_robust}\ \Rightarrow \ \eqref{equ:AAR_robust}.	
\]
Hence, \eqref{equ:AAR_robust} and \eqref{equ:RJD_robust} are the most basic notions of concentration robust performance, i.e., if they fail to hold, then the other notions also fail to hold.
We, therefore, focus on studying whether \eqref{equ:AAR_robust} and \eqref{equ:RJD_robust} hold for suitable choices of proposal kernels $P_n$.

%%%%%%%%%%%%%%%%%%%%%%%
%% KERNELS
%%%%%%%%%%%%%%%%%%%%%%%
\subsection{Hessian-based Proposals}\label{sec:prop}
In this section, we propose two random walk-like proposal kernels $P_n$ for which we subsequently verify the properties \eqref{equ:AAR_robust} and \eqref{equ:RJD_robust} under suitable assumptions.
To this end, we construct proposal kernels which are informed about the concentration of the target measure.
Here, we adopt the humble idea of using an approximation $C_n \in \bbR^{d\times d}$ to the target covariance matrix for proposing new states.
This idea can be traced back at least to Tierney's seminal paper \cite{Tierney1994} and has been considered in many publications since.
For instance, the adaptive Metropolis algorithm by Haario et al. \cite{HaarioEtAl2001} approximates the target covariance empirically using the past states of the Markov chain and employs a scaled version of it in the proposal kernel.
Moreover, in \cite{MartinEtAl2012, Law2014, CuiEtAl2014, RudolfSprungk2018} the authors use gradient and Hessian information of the posterior density to construct approximations $C_n$ to the posterior covariance. 
We follow this approach and consider a simple offline (i.e., before running the Markov chain) computable proxy $C_n$ via the \emph{Laplace approximation} $\LAn$ of the target $\pi_n$:
\begin{equation}\label{equ:LA}
	\LAn \coloneqq \Nv\left(x_n, \frac 1n H^{-1}_n \right)
\end{equation}
where $x_n$ denotes the \emph{maximum a-posterior (MAP) estimate} given by
\begin{equation}\label{equ:MAP}
	x_n \coloneqq \argmin_{x\in \S_0} U(x) - \frac 1n \log \pi_0(x)
\end{equation}
with $\pi_0\colon \bbR^d \to [0,\infty)$ denoting the assumed Lebesgue density of $\pi_0$, and where
\begin{equation}\label{equ:Hn}
	H_n \coloneqq \nabla^2 U(x_n) - \frac 1n \nabla^2 \log \pi_0(x_n).
\end{equation}
The Gaussian measure $\LAn$ is derived from the classical Laplace's method for asymptotics of integrals \cite{Laplace1774} and is a commmon approximation in Bayesian inference \cite{GhoshEtAl2006} and Bayesian optimal experimental design \cite{alexanderian2016fast}.
Recently, convergence of $\LAn$ to $\pi_n$ for $n\to \infty$ has been shown in \cite{SchillingsEtAl2020} under suitable assumptions.
This convergence relates to the classical Bernstein--von Mises theorem and will be exploited in Section \ref{sec:NonGaussian_target}.

In order to have a well-defined $\LAn$ we make the following standing assumption.

\begin{assum}\label{assum:LA1}
The mappings $U, \pi_0\colon \S_0 \to [0,\infty)$ are twice continuously differentiable and for each $n\in\bbN$ there exists a unique minimizer $x_n$ as given in \eqref{equ:MAP} belonging to the interior of $\S_0$ with $H_n$ as in \eqref{equ:Hn} being positive definite.
\end{assum} 

Based on the proxy $\frac 1n H^{-1}_n$ to the covariance of the target $\pi_n$ we consider the following modifications of the Gaussian random walk and pCN proposal kernel:
\begin{enumerate}
\item
Hessian-based Gaussian random walk proposal:
\[
	P_n(x) = \Nv\left( x, \frac{s^2}n H^{-1}_n \right), \qquad s>0.
\]
This proposal kernel is reversible w.r.t.~the Lebesgue measure on $\bbR^d$ and, thus, leads to
\[
    \alpha_n(x,y) 
    =
    \alpha_{P_n}(x,y) 
    =
	\min\left\{1, \exp\left(n [U(x) - U(y)] \right)\, \frac{\pi_0(y)}{\pi_0(x)} \,\boldsymbol{1}_{\S_0\times \S_0}(x,y)\right\};
\]

\item
Modified pCN proposal, cf. \cite{PinskiEtAl2014, ChenEtAl2016}:
\[
	P_n(x) = \Nv\left( x_n + \sqrt{1-s^2} (x-x_n), \frac{s^2}{n} H^{-1}_n \right), \qquad s\in(0,1),
\]
which is reversible w.r.t.~the Laplace approximation $\LAn$ of $\pi_n$ and, thus, yields with $\varphi_{x_n,C_n}$ denoting the Lebesgue density of $\LAn = \Nv(x_n,C_n)$ where $C_n \coloneqq \frac 1n H_n^{-1}$
%for Gaussian priors $\pi_0 = \Nv(0,C)$ to
\begin{align*}
    \alpha_n(x,y) 
    & =
    \alpha_{P_n}(x,y) 
    = \min\left\{1, \exp\left(n[U(x) - U(y)] \right)\, \frac{\pi_0(y)}{\pi_0(x)}\, \frac{\varphi_{x_n,C_n}(x)}{\varphi_{x_n,C_n}(y)} \,\boldsymbol{1}_{\S_0\times \S_0}(x,y)  \right\}
%	\min\left\{1, \frac{\d \pi_n}{\d \LAn}(y) \ \frac{\d \LAn}{\d \pi_n}(x) \right\}\\
%    & =
%	\min\left\{1, \exp\left(n [U(x) - U(y)] \right)\ \frac{\d \pi_0}{\d \LAn}(y) \ \frac{\d \LAn}{\d \pi_0}(x) \right\}.
\end{align*}
In case of a Gaussian prior $\pi_0 = \Nv(0,C)$, we can rewrite the acceptance probability as
\[
    \alpha_n(x,y) 
    =
	\min\left\{1, \exp\left(n [U(x) - U(y)] \right)\ \frac{\d \pi_0}{\d \LAn}(y) \ \frac{\d \LAn}{\d \pi_0}(x) \right\}
\]
and, thus, under suitable assumptions ensuring the equivalence of $\pi_0$ and  $\LAn$, this proposal yields again a well-defined MH algorithm in infinite-dimensional Hilbert spaces.
\end{enumerate}

Note, that the two proposals above possess the scaling $s_n = \frac sn$ of the stepsize with increasing concentration level, as derived in \cite{BeskosEtAl2018}.
However, due to $H_n^{-1}$ they also include geometric information on how fast the concentration occures in the various directions.
This information will prove to be essential in order to ensure \eqref{equ:RJD_robust}, and hence, allow for \eqref{equ:Gap_robust}. 
Although the scaling $s_n = \frac sn$ is sufficient to ensure \eqref{equ:AAR_robust}, we demonstrate for a simple example that just using such a scaling in a random walk proposal $P_n(x) = \Nv(x, s_n\, C)$ does not yield \eqref{equ:RJD_robust}.

\begin{exam}
Consider $\pi_n = \Nv\left(0, C_n\right)$ on $\bbR^2$ with $C_n = \diag(1, 1/(1+n))$, i.e.,
\[
    \pi_n(\d x)
    = \frac 1{Z_n} \exp(- n U(x)) \ \pi_0(\d x)
    \quad
    \text{with}
    \quad
    \pi_0 = \Nv(0, I_2),
    \
    U(x)
    =
    U(x_1,x_2)
    =
    \frac 12 x_2^2,
\]
then $P_n(x) = \Nv(x, s_n I)$ with $s_n = \frac sn$ for an $s>0$ yield to \eqref{equ:AAR_robust}, see Section \ref{sec:Gaussian_target}, but for $v = (1,0)^\top$ we have
\[
    \widebar{\rho}_v(K_n) 
    =
    \frac{\rho_v(K_n)}{\Var_{\pi_n}(f_v)}
    =
    \frac{s_n}{1}
    \xrightarrow[]{n\to\infty} 0.
\]
Thus, \eqref{equ:RJD_robust} and, hence, \eqref{equ:Gap_robust} do not hold.
\end{exam}

%%%%%%%%%%%%%%%
% Gaussian and Non-Gaussian Targets
%%%%%%%%%%%%%%%
\section{Concentration Robustness of Hessian-based Gaussian Proposals}
\label{sec:VarRobust_Gauss}
In this section we present our main results stating that for the Hessian-based proposals introduced in Section \ref{sec:prop} the concentration robustness properties \eqref{equ:AAR_robust} and \eqref{equ:RJD_robust} hold under mild assumptions as well as \eqref{equ:Gap_robust} for the case of Gaussian targets.
Our strategy here consists of two steps: 1.) verifying the robustness results for the simple case of Gaussian targets by straightforward calculations and 2.) showing that these results extend to non-Gaussian targets by exploiting the convergence of concentrating posterior measures $\pi_n$ to their Laplace approximation $\LAn$ as $n\to\infty$.

\subsection{The case of Gaussian targets} \label{sec:Gaussian_target}
We consider target measures
\begin{equation}\label{equ:Gaussian_Target}
	\pi_n \coloneqq \Nv\left(x_n, C_n \right)
\end{equation}
with $x_n \in \bbR^d$ and $C_n \in \bbR^{n \times n}$ symmetric and positive definite.
Obviously, we have $\pi_n = \LAn$ in this case.
We prove that \eqref{equ:AAR_robust}, \eqref{equ:RJD_robust}, as well as \eqref{equ:Gap_robust} hold for MH algorithms targeting $\pi_n$ based on either the proposal $P_n(x) = \Nv\left( x, s^2 C_n\right)$, $s>0$, or $P_n(x) = \Nv\left( x_n + \sqrt{1-s^2} (x-x_n), s^2 C_n \right)$, $s\in(0,1)$.

To this end, we employ the methodology of \emph{pushforward transition kernels} which is discussed in detail in Appendix \ref{sec:pushforward}.
Let us summarize the most important results for our purposes from the appendix:

\begin{theo}[Appendix \ref{sec:pushforward}]\label{theo:app}
Let $\pi\in\mc P(\HS)$ and $T_*\pi \in \mc P(\HS)$ be the resulting pushforward measure for a measurable mapping $T\colon\HS\to\HS$.
Furthermore, assume a $\pi$-reversible transition kernel $K$ and define the corresponding \emph{pushfoward transition kernel $T_*K$} by
\[
	T_*K(x, A)
	\coloneqq
	\ev{K(X, T^{-1}(A))\mid T(X) = x},
	\qquad
	X\sim\pi,
\]
where $x\in\HS$, $A\in\mcH$.
Then,
\begin{enumerate}
    \item 
    $T_*K$ is $T_*\pi$-reversible
    
    \item
    we have $\gapm{\pi}{K} \leq  \gapm{T_*\pi}{T_*K}$,

    \item
    as well as $\widebar{\rho}_v(K) =  \widebar{\rho}_{T(v)}(T_*K)$ for any $v\in\HS$,
    
    \item
    if %$T$ is bijective and 
    $K$ is a MH transition kernel with proposal $P$, then the $T_*\pi$-reversible MH transition kernel with pushfoward proposal $T_*P$ coincides with $T_*K$ and $\widebar{\alpha}(K) = \widebar{\alpha}(T_*K)$.
\end{enumerate}
\end{theo}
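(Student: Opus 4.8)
The plan is to route all four assertions through a single change-of-variables identity for the joint measures on $\HS\times\HS$. Writing $\nu(\d x\,\d y) \coloneqq K(x,\d y)\,\pi(\d x)$ and letting $T\times T$ denote the map $(x,y)\mapsto(T(x),T(y))$, I would first establish the \emph{lift identity}
\[
	T_*K(x,\d y)\,(T_*\pi)(\d x) = (T\times T)_*\nu .
\]
This follows directly from the definition of $T_*K$ as a regular conditional expectation together with the tower property: testing against $\boldsymbol{1}_A(x)\boldsymbol{1}_B(y)$ and using that $\boldsymbol{1}_A\circ T$ is $\sigma(T(X))$-measurable collapses the conditioning, leaving $\int \boldsymbol{1}_A(T(x))\,K(x,T^{-1}(B))\,\pi(\d x)$, which is precisely $(T\times T)_*\nu(A\times B)$. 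The same computation with $P$ in place of $K$ yields the analogous identity for the proposal, $T_*P(x,\d y)(T_*\pi)(\d x) = (T\times T)_*\mu$ with $\mu(\d x\,\d y) \coloneqq P(x,\d y)\pi(\d x)$, which I will need for assertion~4.

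Assertion~1 is then immediate: $\pi$-reversibility of $K$ means $\nu$ is symmetric under the swap $S(x,y)=(y,x)$, and since $T\times T$ commutes with $S$, the pushforward $(T\times T)_*\nu$ is again swap-symmetric, i.e.\ $T_*K$ is $T_*\pi$-reversible. For assertion~3 I would apply the lift identity to the integrand $|w^\top u - w^\top u'|^2$ (with $T$ affine, as in the Gaussian application, so that linear functionals pull back to linear functionals): the substitution $u=T(x),\,u'=T(y)$ turns $w^\top(T(x)-T(y))$ into a linear functional of $x-y$, so $\rho_w(T_*K)$ equals $\rho_v(K)$ for the direction $v$ corresponding to $w$ under $T$. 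The crucial point is that the covariance of $T_*\pi$ transforms by exactly the same linear substitution, so the normalizer $\Var_{T_*\pi}(f_w)$ matches $\Var_\pi(f_v)$ and the ratio is invariant, giving $\widebar\rho_v(K)=\widebar\rho_{T(v)}(T_*K)$.

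Assertion~2 I would treat by a Hilbert-space compression argument. Let $U\colon L^2_{T_*\pi}\to L^2_\pi$, $Ug \coloneqq g\circ T$; the identity $\int |g\circ T|^2\,\d\pi=\int|g|^2\,\d(T_*\pi)$ shows $U$ is an isometry, so $U^*U=\mathrm{id}$ and $UU^*=\mathbb{E}[\,\cdot\mid T]$ is the conditional-expectation projection onto the range of $U$. A short computation from the lift identity gives the intertwining relation that the Markov operator of $T_*K$ equals $U^*\K U$, i.e.\ it is the compression of $\K$ to that range. Since $U$ maps the mean-zero subspace of $L^2_{T_*\pi}$ isometrically into that of $L^2_\pi$ and $U^*$ is a contraction, the operator norm of $U^*\K U$ on mean-zero functions cannot exceed that of $\K$; writing the spectral gap as one minus this norm then yields $\gapm{\pi}{K}\le\gapm{T_*\pi}{T_*K}$.

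The delicate step is assertion~4. Here I would use that the Metropolis--Hastings accepted-move measure is $\beta\coloneqq\min(\mu,\mu^\top)$, the largest measure below both $\mu$ and its swap, so that the off-diagonal part of $K$ is $\beta$ while the rejection part sits on the diagonal; this rests on the standard identity $\alpha_P(x,y)=\min\{1,\tfrac{\d\mu^\top}{\d\mu}(x,y)\}$ for $\mu = P(x,\d y)\pi(\d x)$. By the proposal lift identity the accepted-move measure of the MH kernel built from $T_*P$ and $T_*\pi$ is $\min\big((T\times T)_*\mu,\,(T\times T)_*(\mu^\top)\big)$. The heart of the matter is that pushing forward by $T\times T$ commutes with taking this minimum, which is where injectivity of $T$ (as holds for the invertible affine maps used in the applications) enters: an injective $T\times T$ neither merges off-diagonal mass onto the diagonal nor disturbs the Radon--Nikodym structure, so $(T\times T)_*\min(\mu,\mu^\top)=\min\big((T\times T)_*\mu,(T\times T)_*(\mu^\top)\big)$, identifying $T_*K$ with the MH kernel for $T_*P$. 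The acceptance-rate equality then follows at the level of total masses: $\widebar\alpha(K)=\beta(\HS\times\HS)=1-d_{TV}(\mu,\mu^\top)$, and injectivity of $T$ makes the pushforward \emph{preserve} (rather than merely contract) this total-variation distance, giving $\widebar\alpha(K)=\widebar\alpha(T_*K)$. I expect this commutation of pushforward with the $\min$ operation, and the role of injectivity in it, to be the main obstacle; the other three assertions are essentially corollaries of the lift identity.
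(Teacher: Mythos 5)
Your proposal is correct and, for assertions 1--3, follows essentially the same architecture as the paper: your ``lift identity'' $(T\times T)_*\nu = T_*K(x,\d y)\,T_*\pi(\d x)$ is exactly Proposition~\ref{propo:TK_nu}, your swap-symmetry argument for reversibility is a compact repackaging of Proposition~\ref{propo:TK}, your treatment of the normalized jump distance matches Proposition~\ref{propo:TK_Corr} (the paper states it for general $f$ and $f\circ T$ rather than only linear functionals under affine $T$, but that is all the main text uses), and your compression argument $\mathrm K_T = U^*\K U$ with $U g = g\circ T$ an isometry is the operator-theoretic form of Propositions~\ref{prop:TM_op_norm} and~\ref{prop:subsetballs} plus Theorem~\ref{theo:TK_gap}, where the paper instead writes out Jensen's inequality for the conditional expectation.

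For assertion 4 you take a genuinely different route. The paper works with the Radon--Nikodym derivatives $h = \frac{\d\nu_P^\top}{\d\nu_P}$ and $\widebar h = \frac{\d\boldsymbol T_*\nu_P^\top}{\d\boldsymbol T_*\nu_P}$ and claims $h = \widebar h\circ\boldsymbol T$ almost surely (Proposition~\ref{propo:TK_alpha}, then Lemma~\ref{lem:TK_alpha}); you instead work with the accepted-move measure $\nu_P\wedge\nu_P^\top$ and ask whether pushforward by $\boldsymbol T$ commutes with the lattice minimum. These are two faces of the same fact, but your version correctly isolates the hypothesis under which it holds: for merely measurable $T$ the paper's chain of equalities only yields $\widebar h\circ\boldsymbol T = \mathbb{E}_{\nu_P}[h\mid\sigma(\boldsymbol T)]$, not $h = \widebar h\circ\boldsymbol T$, and indeed part 4 fails without injectivity (take $T$ constant and a non-$\pi$-reversible proposal: then $\widebar\alpha(T_*K)=1$ while $\widebar\alpha(K)<1$). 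So your insistence that $\boldsymbol T$ be injective --- whence a Borel isomorphism onto its image, so that pushforward commutes with $\wedge$, preserves the diagonal, and preserves $d_\text{TV}(\nu_P,\nu_P^\top)$ --- is not a weakness but a needed restriction that the theorem's statement omits; it is harmless here because every $T_n$ used in Section~\ref{sec:Gaussian_target} is an invertible affine map (for which the paper's Proposition~\ref{propo:TK_MH} also gives the explicit $\alpha_{T_*P}$ directly).
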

We now show that the Hessian-based Gaussian random walk and the modified pCN proposal $P_n$ can be seen as pushforward proposal kernels and, thus, the corresponding $\pi_n$-reversible MH algorithms are pushfoward transition kernels of a reference MH transition kernel $K$ targeting $\pi = \Nv(0, I_d)$. 
This yields by means of Theorem \ref{theo:app} the resulting concentration robustness properties \eqref{equ:AAR_robust}, \eqref{equ:RJD_robust}, and \eqref{equ:Gap_robust} for Gaussian targets.

\paragraph{Hessian-based Gaussian Random Walk}
In the following we study the proposal
\begin{equation}\label{equ:GRW_Gauss}
	P_n(x) = \Nv\left( x, s^2 C_n\right), \qquad s>0.
\end{equation}
and the related $\pi_n$-reversible MH transition kernel $K_n$ where $\pi_n$ is as given in \eqref{equ:Gaussian_Target}.
A key observation for our analysis is the following.

\begin{propo}\label{propo:HRW_Gauss}
Let $\pi_n$ be as in \eqref{equ:Gaussian_Target} and consider the proposal kernel \eqref{equ:GRW_Gauss}.
Then, we have for the resulting $\pi_n$-reversible MH kernel $K_n$ that
\[
	K_n = (T_n)_* K,
	\qquad
	T_n(x) \coloneqq x_n + C_n^{1/2}x,
\]
where $K$ denotes the $\pi$-reversible MH transition kernel targeting $\pi = N(0,I_d)$, using the proposal kernel $P(x) = \Nv(x, s^2 I_d)$.
\end{propo}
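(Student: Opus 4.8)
The plan is to obtain the identity as an essentially immediate consequence of Theorem~\ref{theo:app}. That theorem transports the entire Metropolis--Hastings structure under pushforward, so it suffices to verify two transport identities for the affine map $T_n(x) = x_n + C_n^{1/2}x$: that it sends the reference target $\pi = \Nv(0,I_d)$ to $\pi_n$, and that it sends the reference proposal $P(x) = \Nv(x, s^2 I_d)$ to $P_n$. Once these hold, part~4 of Theorem~\ref{theo:app} identifies $(T_n)_* K$ with the $(T_n)_*\pi$-reversible MH kernel built from the pushforward proposal $(T_n)_* P$, which will be exactly $K_n$. In particular, no separate bookkeeping of the acceptance probabilities is required, since part~4 guarantees their compatibility.

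First I would record that $T_n$ is an invertible affine map: because $C_n$ is symmetric positive definite, its symmetric square root $C_n^{1/2}$ is invertible, so $T_n$ is a bijection with inverse $T_n^{-1}(y) = C_n^{-1/2}(y - x_n)$. This bijectivity is the structural simplification that makes the argument routine. In the defining conditional expectation $(T_n)_* K(x,\cdot) = \ev{K(X, T_n^{-1}(\cdot)) \mid T_n(X) = x}$ with $X\sim\pi$, the conditioning event $\{T_n(X) = x\}$ pins down $X = T_n^{-1}(x)$ (the regular conditional distribution of $X$ given $T_n(X)=x$ is $\delta_{T_n^{-1}(x)}$), so the pushforward collapses to the conjugation $(T_n)_* K(x, \cdot) = K(T_n^{-1}(x), T_n^{-1}(\cdot))$, and likewise for $P$. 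The target identity is then the standard Gaussian change of variables: if $X \sim \Nv(0,I_d)$, then $T_n(X) = x_n + C_n^{1/2}X \sim \Nv(x_n, C_n^{1/2} I_d\, C_n^{1/2}) = \Nv(x_n, C_n) = \pi_n$, so $(T_n)_*\pi = \pi_n$.

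It remains to check $(T_n)_* P = P_n$. Fix $x$ and set $z = T_n^{-1}(x)$; if $Y \sim P(z,\cdot) = \Nv(z, s^2 I_d)$, then $T_n(Y) \sim \Nv(T_n(z), s^2 C_n) = \Nv(x, s^2 C_n) = P_n(x,\cdot)$, so indeed $(T_n)_* P = P_n$. Applying part~4 of Theorem~\ref{theo:app} now closes the argument: $(T_n)_* K$ equals the MH kernel targeting $(T_n)_*\pi = \pi_n$ with proposal $(T_n)_* P = P_n$, which by definition is $K_n$. There is no substantial obstacle here; the only point requiring a little care is ensuring that the pushforward applied to the proposal kernel and the pushforward applied to the MH kernel are the mutually compatible operations featured in Theorem~\ref{theo:app}, and this compatibility is precisely the content of part~4, so invoking it suffices.
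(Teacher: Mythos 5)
Your proof is correct and follows essentially the same route as the paper: both reduce the claim, via part~4 of Theorem~\ref{theo:app}, to verifying $(T_n)_*\pi=\pi_n$ and $(T_n)_*P=P_n$, using the bijectivity of $T_n$ so that the pushforward kernel collapses to $P\bigl(T_n^{-1}(x),T_n^{-1}(B)\bigr)$. The only cosmetic difference is that you check the proposal identity by transporting the Gaussian law of $Y\sim P(T_n^{-1}(x),\cdot)$ directly, whereas the paper performs the equivalent chain of set-probability manipulations.
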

\begin{proof}
Obviously, $\pi_n = (T_n)_*\pi$.
Thus, it is sufficient by Theorem \ref{theo:app} to show that $P_n = \left(T_n\right)_* P$.
To this end, let $x\in\HS$ and $B\in \mcH$.
We obtain with $\xi \sim N(0,I_d)$
\begin{align*}
	P_n(x,B)
	& = \bbP\left( x + s C_n^{1/2}\xi \in B\right)
	= \bbP\left( s\xi \in C_n^{-1/2}(B - x)\right)\\
	& = \bbP\left( s\xi \in C_n^{-1/2}(B - x_n) - C_n^{-1/2}(x-x_n)\right)\\
	& = \bbP\left(  C_n^{-1/2}(x-x_n) + s\xi \in C_n^{-1/2}(B - x_n)\right)\\
	& = P\left(T_n^{-1}(x),T_n^{-1}(B)\right),
\end{align*}
which yields the assertion.
\end{proof}

%%%%%
% PCN
%%%%%
\paragraph{Modified pCN}
Next, we study the modified pCN proposal
\begin{equation}\label{equ:pCN_Gauss}
	P_n(x) = \Nv\left( x_n + \sqrt{1-s^2}(x - x_n), s^2 C_n\right), \qquad s\in(0,1],
\end{equation}
and the related $\pi_n$-reversible MH transition kernel $K_n$ where $\pi_n$ is as in \eqref{equ:Gaussian_Target}.
Similarly to Proposition \ref{propo:HRW_Gauss} we have the following basic result.

\begin{propo}\label{propo:pCN_Gauss}
Let $\pi_n$ be as in \eqref{equ:Gaussian_Target} and consider the proposal kernel \eqref{equ:pCN_Gauss}.
Then, we have for the resulting $\pi_n$-reversible MH kernel $K_n$ that
\[
	K_n = (T_n)_* K,
	\qquad
	T_n(x) \coloneqq x_n + C_n^{1/2}x,
\]
where $K$ denotes the $\pi$-reversible MH transition kernel targeting $\pi = N(0,I_d)$ and using the pCN proposal kernel $P(x) = \Nv(\sqrt{1-s^2}x, s^2 I_d)$.
\end{propo}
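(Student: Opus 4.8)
The plan is to mirror exactly the strategy used in the proof of Proposition \ref{propo:HRW_Gauss}, since the only structural change is the appearance of the contraction factor $\sqrt{1-s^2}$ in the proposal mean. First I would observe that the affine map $T_n(x) = x_n + C_n^{1/2}x$ pushes the standard Gaussian $\pi = \Nv(0,I_d)$ forward to $\pi_n = \Nv(x_n, C_n)$, so that $\pi_n = (T_n)_*\pi$ holds verbatim as in the random walk case. By part 4 of Theorem \ref{theo:app}, the pushforward $(T_n)_*K$ of the reference MH kernel $K$ coincides with the $\pi_n$-reversible MH kernel whose proposal is the pushforward proposal $(T_n)_*P$. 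Hence it suffices to verify the single identity $P_n = (T_n)_*P$, that is, $P_n(x,B) = P\bigl(T_n^{-1}(x), T_n^{-1}(B)\bigr)$ for all $x\in\HS$ and $B\in\mcH$.

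To check this identity I would carry out the same direct computation as in Proposition \ref{propo:HRW_Gauss}, writing the proposal as an explicit affine function of a standard Gaussian draw $\xi \sim \Nv(0,I_d)$ and transforming the underlying event. Using $T_n^{-1}(x) = C_n^{-1/2}(x - x_n)$, the reference pCN draw $\sqrt{1-s^2}\,T_n^{-1}(x) + s\xi$ lies in $T_n^{-1}(B) = C_n^{-1/2}(B - x_n)$ if and only if, after applying $C_n^{1/2}$, the point $x_n + \sqrt{1-s^2}(x-x_n) + s\,C_n^{1/2}\xi$ lies in $B$; the latter event has exactly the probability $P_n(x,B)$ by definition of the modified pCN proposal \eqref{equ:pCN_Gauss}. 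Reading this chain of equivalences backwards yields the required identity and hence the claim via Theorem \ref{theo:app}.

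The only point requiring a little care — and the conceptual reason the pCN proposal must be modified to contract toward $x_n$ rather than toward the origin — is the interaction between the contraction factor $\sqrt{1-s^2}$ and the affine centering of $T_n$. Since $T_n$ sends $0$ to $x_n$, contraction toward the origin in the reference coordinates corresponds under $T_n$ precisely to contraction toward $x_n$ in the target coordinates, and the constant shift $x_n$ survives the transformation exactly because it appears both inside the proposal mean and in the definition of $T_n$. I do not anticipate any genuine obstacle here: once this bookkeeping is tracked correctly the argument is routine and strictly parallel to the Gaussian random walk case, with the extra contraction term carried through verbatim.
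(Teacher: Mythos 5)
Your proof is correct and follows exactly the route the paper intends: the paper omits this proof as ``analogous to Proposition~\ref{propo:HRW_Gauss}'', and your argument is precisely that analogous computation, verifying $\pi_n=(T_n)_*\pi$ and $P_n(x,B)=P\bigl(T_n^{-1}(x),T_n^{-1}(B)\bigr)$ and then invoking Theorem~\ref{theo:app}. The bookkeeping with the contraction factor $\sqrt{1-s^2}$ and the centering at $x_n$ is handled correctly.
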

The proof is analogous to the one of Proposition \ref{propo:HRW_Gauss} and therefore omitted.
As a consequence of the above two propositions and Theorem \ref{theo:app}, we obtain the following.

\begin{theo}[Concentration robustness for Gaussian target]\label{theo:HRW_Gauss}
Let $\pi_n$ be as in \eqref{equ:Gaussian_Target}.
Then, 
\begin{enumerate}
    \item 
    for the $\pi_n$-reversible MH transition kernel $K_n$ using the Hessian-based Gaussian random walk proposal kernel \eqref{equ:GRW_Gauss} we have
    \[
	\widebar \alpha(K_n)
	=
	\ev{1 \wedge \exp\left(- \frac 12 \|X + s\xi\|^2 + - \frac 12 \|X\|^2\right)} > 0
    \]
    with $X,\xi \sim N(0,I_d)$ independently and $a\wedge b \coloneqq\min\{a,b\}$ for $a,b\in\bbR$, as well as
    \[
	\widebar \rho_{v}(K_n)
	=
	\ev{s^2 \xi_1^2 \left(1 \wedge \exp\left(- \frac 12 \|X + s\xi\|^2 + - \frac 12 \|X\|^2\right)\right) } >0
    \]
    for any $0\neq v\in\bbR^d$ where $\xi_1$ denotes the first random component of $\xi$ as above;

    \item
    for the $\pi_n$-reversible MH transition kernel $K_n$ using the modified pCN proposal kernel \eqref{equ:pCN_Gauss} we have
    \[
	\widebar \alpha(K_n)
	=
	1,
	\qquad
	\widebar \rho_{v}(K_n)
	=
	 2- 2\sqrt{1-s^2}, %\sqrt{1-s^2}
    \]
    for any $0\neq v\in\bbR^d$;
    
    \item
    for both aforementioned $\pi_n$-reversible MH transition kernels $K_n$ we have
    \[
    	\gapm{\pi_n}{K_n}
    	=
    	\text{const.} > 0,
    \]
    i.e., the spectral gap of $K_n$ is independent of $n$.
\end{enumerate}
\end{theo}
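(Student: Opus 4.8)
The plan is to reduce every efficiency quantity for $K_n$ to the corresponding quantity for the single, $n$-independent reference kernel $K$ on $\pi = \Nv(0,I_d)$ via the pushforward identity $K_n = (T_n)_*K$ with $T_n(x) = x_n + C_n^{1/2}x$ from Propositions~\ref{propo:HRW_Gauss} and \ref{propo:pCN_Gauss}, and then to observe that the reference quantities depend neither on $n$ (since $K$ does not) nor, by isotropy, on the chosen direction.

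First, for the acceptance rate I would invoke item~4 of Theorem~\ref{theo:app}, which gives $\widebar{\alpha}(K_n) = \widebar{\alpha}(K)$ directly. For the Hessian-based random walk the reference proposal $\Nv(x,s^2 I_d)$ is symmetric and the target is proportional to $\exp(-\tfrac12\|x\|^2)$, so $\alpha(x,y) = 1 \wedge \exp(\tfrac12\|x\|^2 - \tfrac12\|y\|^2)$; writing the proposed state as $y = X + s\xi$ with $X\sim\pi$ and $\xi\sim\Nv(0,I_d)$ independent yields the stated expectation, which is positive because the integrand is an a.s.\ positive random variable. For the modified pCN the reference proposal $\Nv(\sqrt{1-s^2}x, s^2 I_d)$ is reversible with respect to $\pi = \Nv(0,I_d)$ and the target equals $\pi$, so the density ratio in \eqref{eq:alpha} is identically one and $\widebar{\alpha}(K) = 1$.

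Next, for the directional jump distance I would use item~3 of Theorem~\ref{theo:app}, which identifies $\widebar{\rho}_v(K_n)$ with $\widebar{\rho}_w(K)$ for the direction $w$ corresponding to $v$ under the linear part $C_n^{1/2}$ of $T_n$: writing $Y = T_n(X)$ one has $v^\top Y = v^\top x_n + (C_n^{1/2}v)^\top X$, so both the jump and the variance see only $C_n^{1/2}v$ and hence $\widebar{\rho}_v(K_n) = \widebar{\rho}_{C_n^{1/2}v}(K)$. The decisive point is then that the reference problem is isotropic — both $\pi = \Nv(0,I_d)$ and the increment law are rotationally invariant — so $\widebar{\rho}_w(K)$ is the same for every $0 \neq w$ and in particular equals $\widebar{\rho}_{e_1}(K)$, independent of $n$ and of $v$. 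For the random walk the jump in the first coordinate is $s\xi_1$ upon acceptance and $0$ otherwise, so integrating out the uniform acceptance variable leaves $\ev{s^2\xi_1^2\,(1\wedge\exp(\tfrac12\|X\|^2 - \tfrac12\|X+s\xi\|^2))}$, with $\Var_\pi(f_{e_1}) = 1$; positivity follows since $\xi_1^2 > 0$ a.s. For pCN, acceptance being certain, $Y_1 - X_1 = (\sqrt{1-s^2}-1)X_1 + s\xi_1$, and a short second-moment computation using independence of $X_1,\xi_1$ and $\ev{X_1^2} = \ev{\xi_1^2} = 1$ gives $\rho_{e_1}(K) = 2 - 2\sqrt{1-s^2}$, hence the claim.

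Finally, for the spectral gap I would exploit that $T_n$ is a bijection. Item~2 of Theorem~\ref{theo:app} applied to $T_n$ gives $\gapm{\pi}{K} \le \gapm{\pi_n}{K_n}$, and applied to $T_n^{-1}$ (using $(T_n^{-1})_*\pi_n = \pi$ and $(T_n^{-1})_*K_n = K$) gives the reverse inequality, so that $\gapm{\pi_n}{K_n} = \gapm{\pi}{K}$ is \emph{exactly} constant in $n$. It then remains to argue $\gapm{\pi}{K} > 0$: for pCN this is immediate, since acceptance is certain and $K$ is the Gaussian AR(1) kernel $Y = \sqrt{1-s^2}X + s\xi$, whose Hermite eigenfunctions give the explicit gap $1 - \sqrt{1-s^2}$; for the random walk it follows from geometric ergodicity of RWM on the super-exponentially light-tailed standard Gaussian in fixed finite dimension, equivalent for a reversible chain to a positive $L^2$-spectral gap. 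I expect the main obstacle to be precisely this last step: upgrading the one-sided inequality of Theorem~\ref{theo:app} to an equality requires the bijectivity of $T_n$ together with the composition law for pushforward kernels, and the positivity of the reference gap rests on an external geometric-ergodicity fact that must be cited rather than derived in place.
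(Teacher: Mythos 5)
Your proposal is correct and follows essentially the same route as the paper: reduce to the $n$-independent reference kernel $K$ on $\Nv(0,I_d)$ via $K_n=(T_n)_*K$, use items 2--4 of Theorem~\ref{theo:app} (with bijectivity of $T_n$ for equality of gaps, as in Theorem~\ref{theo:TK_gap}), and exploit isotropy to reduce to the direction $e_1$ before computing the explicit expectations. If anything, you are slightly more careful than the paper on the positivity of $\gapm{\pi}{K}$ for the random-walk reference kernel, which the paper leaves implicit.
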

\begin{proof}
The third statement follows immediately by Theorem \ref{theo:app}.
Regarding the first statement we have by Proposition \ref{propo:HRW_Gauss} $K_n = (T_n)_*K$ where the acceptance probability of $K$ is given by
\[
	\alpha(x,y) = 1 \wedge \exp\left(- \frac 12 \|y\|^2 + \frac 12 \|x\|^2\right).
\]
Thus, 
\[
	\widebar \alpha(K)
	=
	\ev{1 \wedge \exp\left(- \frac 12 \|X + s\xi\|^2 + \frac 12 \|X\|^2\right)}
\]
with $X,\xi \sim N(0,I_d)$ independently.
Moreover, for $f_v(x) \coloneqq v^\top x$ we get $f_v\circ T_n(x) = v^\top x_n + (C_n^{1/2}v)^\top x$.
Now, Theorem \ref{theo:app} yields $\widebar \rho_v(K_n) = \widebar \rho_{C_n^{1/2}v}(K)$. 
Due to $\Var_{\pi}(f_v) = v^\top I_d v = \|v\|^2$ for $\pi = N(0,I_d)$, we have $\widebar \rho_{v}(K) = \widebar \rho_{v/\|v\|}(K)$.
Moreover, due to the radial symmetry of the measure $K(x,\d y) \ \pi(\d x)$ we obtain $\widebar \rho_{v/\|v\|}(K) = \widebar\rho_{e_1}(K)$ with $e_1$ denoting the first canonical unit vector in $\bbR^d$.
Thus, by Theorem \ref{theo:app}
\[
	\widebar \rho_v(K_n)
	=
	\widebar\rho_{e_1}(K)
	=
	\ev{s^2 \xi_1^2 \left(1 \wedge \exp\left(- \frac 12 \|X + s\xi\|^2 + - \frac 12 \|X\|^2\right)\right) }
\]
where $\xi = (\xi_1,\ldots,\xi_d)$. 

Regarding the MH transition kernel $K_n$ the statement $\widebar \alpha(K_n) = 1$ follows by the $\pi_n$-reversibility of the proposal kernel.
Exploiting the same arguments as above, we further get
\[
	\widebar \rho_v(K_n) 
	= \widebar \rho_v(P_n) 
	= \widebar \rho_{C_n^{1/2}v}(P)
	= \widebar \rho_{e_1}(P).
\]
Hence, with $X,\xi \sim N(0,1)$ we obtain
\[
	\widebar \rho_{e_1}(P) 
	= \ev{ \left(X - \left(\sqrt{1-s^2} X + s \xi\right)\right)^2 }
	= 2- 2\sqrt{1-s^2},
\]
which concludes the proof
\end{proof}

%%%
% NON-GAUSS
%%%

\subsection{Extension to non-Gaussian posteriors}\label{sec:NonGaussian_target}
In this section we lift the results on concentration robustness established in the previous section for the Hessian-based proposals introduced in Section \ref{sec:VarRobust_Intro} to non-Gaussian target measures $\pi_n$.
A key fact to do so is the convergence of the Laplace approximation $\LAn$ to $\pi_n$ as $n\to\infty$ in Hellinger distance presented in \cite{SchillingsEtAl2020}.
Thus, even non-Gaussian targets $\pi_n$ tend to be Gaussian as $n\to\infty$, and, thus, the results for Gaussian targets can be extended to that case.
We summarize the key points of our strategy as follows:
\begin{enumerate}
    \item
    The MH transition kernels $\widetilde K_n$ targeting the Laplace approximation $\LAn$ of $\pi_n$ and using the Hessian-based proposals $P_n$ of Section \ref{sec:VarRobust_Intro}, satisfy the assumptions of Theorem \ref{theo:HRW_Gauss} and, therefore, have a concentration robust performance in terms of, e.g., $\widebar \alpha(\widetilde K_n)$ and $\widebar \rho_{v}(\widetilde K_n)$.
    
    \item
    For the MH transition kernels $K_n$ targeting the concentrating posteriors $\pi_n$ using the same Hessian-based proposals $P_n$ we show that the difference of the corresponding efficiency quantities $|\widebar \alpha(K_n)-\widebar \alpha(\widetilde K_n)|$ and $|\widebar \rho_{v}(K_n) - \widebar \rho_{v}(\widetilde K_n)|$ can be bounded by the total variation and Hellinger distance, respectively, of the different targets $\pi_n$ and $\LAn$.
    
    \item
    Thus, given that $\LAn$ and $\pi_n$ converge to each other in the suitable sense, the efficiency quantities $\widebar \alpha$ and $\widebar \rho_v$ of $K_n$ and $\widetilde K_n$, respectively, also converge to each other and, thus, also $K_n$ performs concentration robustly in terms of $\widebar \alpha(K_n)$ and $\widebar \rho_{v}(K_n)$.
\end{enumerate}

To this end, we recall some recent results on the convergence of the Laplace approximation.

\subsubsection{Convergence of the Laplace approximation}
%\paragraph{Convergence of $\LAn$ to $\pi_n$.}
We start with classical results on the weak convergence of the Laplace approximation based on Laplace's method for integrals.

\begin{theo}[variant of {\cite[Section IX.5]{Wong2001}}]
\label{theo:laplace_method}
Set
\[
	J(n)
	\coloneqq
	\int_D f(x) \exp(-n U(x))\ \d x,
	\qquad
	n\in\bbN,
\]
where $D\subseteq \bbR^d$ is a possibly unbounded domain and where the integral $J(n)$ converges absolutely for each $n\in\bbN$.
Then, given that there exists an $x_\star$ in the interior of $D$ such that for every $r > 0$ we have
\[
	0 < \inf_{x\colon \|x-x_\star\|>r} U(x) - U(x_\star),
\]
and that in a neighborhood of $x_\star$ the function $f:D\to \mathbb R$ is twice continuously differentiable, $U\colon \bbR^d \to \bbR$ is three times continuously differentiable, and the Hessian $H_\star \coloneqq \nabla^2 U(x_\star)$ is positive definite, we have as $n\to \infty$
\[
	J(n) 
	=
	\e^{-n U(x_\star)}\,
	n^{-d/2}\,\left(
	\sqrt{\det(2\pi H^{-1}_\star)}\, f(x_\star)
	+ \mc O\left(n^{-1}\right)\right).
\]
\end{theo}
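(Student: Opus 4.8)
The plan is to establish the result by the classical Laplace localization-and-rescaling argument, and the decisive mechanism will be a parity (symmetry) argument that upgrades the naive $\mc O(n^{-1/2})$ error to $\mc O(n^{-1})$. \textbf{Reduction and first-order condition.} First I would subtract the constant $U(x_\star)$ from $U$, which only multiplies $J(n)$ by $\e^{nU(x_\star)}$; hence without loss of generality I assume $U(x_\star)=0$, so that the separation hypothesis reads $\inf_{\|x-x_\star\|>r} U(x) > 0$ for every $r>0$. Because $x_\star$ lies in the interior of $D$ and $U$ is differentiable there, minimality forces the first-order condition $\nabla U(x_\star)=0$, which I use in the Taylor expansion below.

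\textbf{Localization.} Fix a small $\delta>0$ so that the closed ball $B_\delta \coloneqq \{x : \|x-x_\star\|\leq \delta\}$ lies in the neighborhood where $f\in C^2$ and $U\in C^3$, and split $J(n) = \int_{B_\delta} + \int_{D\setminus B_\delta}$. On $D\setminus B_\delta$ the separation hypothesis gives $U\geq c_\delta >0$, so for $n\geq n_0$
\[
\int_{D\setminus B_\delta} \abs{f(x)}\, \e^{-nU(x)}\,\d x \leq \e^{-(n-n_0)c_\delta}\int_{D} \abs{f(x)}\,\e^{-n_0 U(x)}\,\d x,
\]
and the last integral is finite by the assumed absolute convergence of $J(n_0)$. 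This tail is therefore $\mc O(\e^{-cn})$, negligible against every power $n^{-d/2-k}$.

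\textbf{Taylor expansion and rescaling.} On $B_\delta$ I would expand, using $\nabla U(x_\star)=0$ and the $C^3$-smoothness of $U$, $U(x) = \tfrac12 (x-x_\star)^\top H_\star (x-x_\star) + R(x)$ with $R(x) = \mc O(\|x-x_\star\|^3)$, and using the $C^2$-smoothness of $f$, $f(x) = f(x_\star) + \nabla f(x_\star)^\top (x-x_\star) + \mc O(\|x-x_\star\|^2)$. Substituting $x = x_\star + n^{-1/2} z$ (so $\d x = n^{-d/2}\,\d z$) turns the ball integral into
\[
n^{-d/2}\int_{\|z\|\leq \delta\sqrt n} f(x_\star + n^{-1/2}z)\,\exp\!\left(-\tfrac12 z^\top H_\star z - n R(x_\star + n^{-1/2}z)\right)\d z,
\]
where $n R(x_\star+n^{-1/2}z) = \mc O(n^{-1/2}\|z\|^3)$. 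Expanding the integrand in powers of $n^{-1/2}$, the factor $f(x_\star+n^{-1/2}z)$ contributes $f(x_\star) + n^{-1/2}\nabla f(x_\star)^\top z + \mc O(n^{-1}\|z\|^2)$ and $\exp(-nR) = 1 + \mc O(n^{-1/2}\|z\|^3)$, so the leading term is the Gaussian integral $f(x_\star)\int_{\bbR^d} \e^{-\frac12 z^\top H_\star z}\,\d z = f(x_\star)\,\sqrt{\det(2\pi H_\star^{-1})}$.

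\textbf{Error control — the main obstacle.} The crucial observation is that the $\mc O(n^{-1/2})$ corrections — the linear term $n^{-1/2}\nabla f(x_\star)^\top z$ and the leading cubic part of $nR$ — are \emph{odd} in $z$ and hence integrate to zero against the even Gaussian weight; the first surviving corrections (the quadratic term of $f$, the square $\tfrac12(nR)^2$, and the even part of the cubic remainder) are $\mc O(n^{-1})$. The genuine difficulty is making this rigorous with only $C^3$ control on $U$ and $C^2$ on $f$: I must bound the truncated integral uniformly in $n$ so that a Gaussian dominates the polynomial factors $\|z\|^k$ arising from the remainders. I would obtain this by shrinking $\delta$: since on $B_\delta$ one has $\abs{nR(x_\star+n^{-1/2}z)} \leq C_3 n^{-1/2}\|z\|^3 \leq C_3\delta\|z\|^2$ for $\|z\|\leq \delta\sqrt n$, choosing $\delta$ small gives $\tfrac12 z^\top H_\star z + nR \geq \tfrac14 z^\top H_\star z$, furnishing the $n$-independent majorant $\|z\|^k\,\e^{-\frac14 z^\top H_\star z}$, which is integrable for every $k$. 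Dominated convergence together with the parity cancellation then shows the neglected contributions are genuinely $\mc O(n^{-1})$ rather than merely $o(1)$, and combining with the exponentially small tail yields the claimed expansion.
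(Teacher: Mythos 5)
Your overall strategy (exponential localization, rescaling by $n^{-1/2}$, parity cancellation of the $\mc O(n^{-1/2})$ terms, and an $n$-independent Gaussian majorant obtained by shrinking $\delta$) is the standard Laplace argument; note that the paper gives no proof of this theorem but imports it as a ``variant'' of Wong, so there is nothing internal to compare against. There is, however, a genuine gap at exactly the step you single out as the main obstacle. With $U$ only three times continuously differentiable, Taylor's theorem gives $R(h) = \tfrac16 D^3U(x_\star)[h,h,h] + \omega(h)\,\|h\|^3$ with $\omega(h)\to 0$ as $h \to 0$ but with \emph{no rate}. Only the first summand is odd; the second, after substituting $h = n^{-1/2}z$ and multiplying by $n$, becomes $n^{-1/2}\,\omega(n^{-1/2}z)\,\|z\|^3$, which your parity argument does not touch and which dominated convergence only bounds by $o(n^{-1/2})$ --- not $\mc O(n^{-1})$. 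Calling it ``the even part of the cubic remainder'' does not help: evenness does not make it small. Concretely, take $d=1$, $f\equiv 1$ and $U(x) = x^2/2 + |x|^{7/2}$, which is $C^3$ with $U'''(0)=0$, has $H_\star = 1>0$, and satisfies every hypothesis of the theorem; a direct computation gives $J(n) = n^{-1/2}\bigl(\sqrt{2\pi} - c\,n^{-3/4} + \mc O(n^{-3/2})\bigr)$ with $c = \int_{\bbR}|z|^{7/2}\e^{-z^2/2}\,\d z > 0$, so the error is of exact order $n^{-3/4}$ and the stated conclusion fails.

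The gap is therefore not one you could have closed: the theorem's smoothness hypotheses are too weak for the claimed $\mc O(n^{-1})$, and your argument, run honestly, proves the sharp conclusion under $C^3/C^2$, namely a remainder that is $o(n^{-1/2})$. If one strengthens the hypothesis to $U\in C^4$ near $x_\star$, then $R(h) = \tfrac16 D^3U(x_\star)[h,h,h] + \mc O(\|h\|^4)$, the non-odd part scales as $\mc O(n^{-1}\|z\|^4)$, and every step of your proof goes through to yield $\mc O(n^{-1})$. The remaining ingredients of your write-up are correct as written: the exponential tail bound via absolute convergence of $J(n_0)$, the treatment of $f\in C^2$ (whose $\mc O(\|h\|^2)$ remainder contributes an honest $\mc O(n^{-1})$), the vanishing of the odd terms over the symmetric domain $\|z\|\le\delta\sqrt n$, and the bound $\tfrac12(nR)^2\e^{|nR|} = \mc O\bigl(n^{-1}\|z\|^6\e^{C\delta\|z\|^2}\bigr)$ dominated by the Gaussian after shrinking $\delta$.
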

In fact, if $f:D\to \mathbb R$ is $(2p+2)$ times continuously differentiable and $U\colon \bbR^d \to \bbR$ is $(2p+3)$ times continuously differentiable, then $J(n)$ allows for an asymptotic expansion 
\[
	J(n) 
	=
	\e^{-n U(x_\star)}\,
	n^{-d/2}\,
	\left(\sum_{k=0}^p c_k(f) n^{-k}
	+ \mc O\left(n^{-p-1}\right) \right)
\]
with $c_0(f) = \sqrt{\det(2\pi H^{-1}_\star)}\, f(x_\star)$ and explicit formulas for $c_k(f)$, $k\geq 1$, see \cite{SchillingsEtAl2020,Wong2001}.
Based on these expansions, one can show the following asymptotic behaviour of $\pi_n(f)$ and $\Var_{\pi_n}(f)$.

\begin{propo}[{\cite[Section 3]{SchillingsEtAl2020}}]\label{propo:Schillings}
Let $\pi_n$ be given as in \eqref{equ:post} where $\pi_0$ allows for a Lebesgue density denoted also by $\pi_0\colon \bbR^d \to [0,\infty)$.
Given the assumptions of Theorem \ref{theo:laplace_method} for an $f \in L^2_{\pi_0}(\bbR)$ and, moreover, that $x_\star$ belongs to the interior of $\S_0 = \{x \in \bbR^d\colon \pi_0(x) >0\}$ as well as $\pi_0$ being twice continuously differentiable in a neighborhood of $x_\star$, we have 
\begin{align} 
    \label{equ:Exp_mu}
	\pi_n(f)
	& = f(x_\star) + \mc O(n^{-1}),
	\qquad
	\Var_{\pi_n}(f)
	\in
	\mc O(n^{-1}).
\end{align}
Furthermore, if $f$ and $\pi_0$ are four times continuously differentiable and $U$ five times continuously differentiable in a neighborhood of $x_\star$, then \begin{align} \label{equ:Var_mu_2}
	\Var_{\pi_n}(f)
	=
	n^{-1} \|\nabla f(x_\star)\|^2_{H_\star^{-1}} + \mc O(n^{-2}).
\end{align}
\end{propo}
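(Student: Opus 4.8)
The plan is to write $\pi_n(f)$ as a ratio of two Laplace-type integrals and apply Theorem~\ref{theo:laplace_method} (together with its higher-order expansion) to numerator and denominator separately. Setting $J_g(n) \coloneqq \int_D g(x)\, \e^{-nU(x)}\, \d x$, formula \eqref{equ:post} gives $\pi_n(f) = J_{f\pi_0}(n)/J_{\pi_0}(n)$, where $\pi_0$ also denotes the Lebesgue density of the reference measure. The hypotheses of Theorem~\ref{theo:laplace_method} carry over from $f$ to the prefactors $f\pi_0$, $f^2\pi_0$ and $\pi_0$, because $\pi_0$ is twice continuously differentiable near $x_\star$ and $\pi_0(x_\star) > 0$ (as $x_\star$ lies in the interior of $\S_0$).

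First I would prove \eqref{equ:Exp_mu}. Applying the leading-order asymptotics to $J_{f\pi_0}(n)$ and $J_{\pi_0}(n)$, the common prefactor $\e^{-nU(x_\star)}\, n^{-d/2}\sqrt{\det(2\pi H_\star^{-1})}$ cancels in the ratio, leaving
\[
\pi_n(f) = \frac{f(x_\star)\pi_0(x_\star) + \mc O(n^{-1})}{\pi_0(x_\star) + \mc O(n^{-1})} = f(x_\star) + \mc O(n^{-1}),
\]
where the last step uses $\pi_0(x_\star) > 0$. Applying the same identity with $f$ replaced by $f^2$ yields $\pi_n(f^2) = f(x_\star)^2 + \mc O(n^{-1})$, while squaring the display above gives $\pi_n(f)^2 = f(x_\star)^2 + \mc O(n^{-1})$. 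Subtracting, the leading $f(x_\star)^2$ terms cancel and we obtain $\Var_{\pi_n}(f) = \pi_n(f^2) - \pi_n(f)^2 \in \mc O(n^{-1})$, which completes \eqref{equ:Exp_mu}.

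For the sharp formula \eqref{equ:Var_mu_2} I would rescale around the minimizer via $x = x_\star + y/\sqrt{n}$, turning $\pi_n$ into a measure $\widetilde\pi_n$ in $y$ with density proportional to $\exp(-\tfrac12 y^\top H_\star y - n^{-1/2}T_3(y) - \cdots)\,\pi_0(x_\star + y/\sqrt n)$, where $T_3$ is the cubic Taylor term of $U$. Taylor-expanding $f(x_\star + y/\sqrt n) = f(x_\star) + n^{-1/2}\nabla f(x_\star)^\top y + \tfrac{1}{2n}\, y^\top \nabla^2 f(x_\star)\, y + \mc O(n^{-3/2})$ and using $\Var_{\pi_n}(f) = \bbE_{\widetilde\pi_n}[(f - \bbE_{\widetilde\pi_n} f)^2]$, the leading contribution is $n^{-1}\Var_{\widetilde\pi_n}(\nabla f(x_\star)^\top y)$. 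Since $\widetilde\pi_n$ converges to $\Nv(0, H_\star^{-1})$ with covariance agreeing with $H_\star^{-1}$ up to $\mc O(n^{-1})$, this equals $n^{-1}\nabla f(x_\star)^\top H_\star^{-1}\nabla f(x_\star) + \mc O(n^{-2}) = n^{-1}\|\nabla f(x_\star)\|_{H_\star^{-1}}^2 + \mc O(n^{-2})$. The remaining Taylor terms are controlled by the extra smoothness (four derivatives of $f$ and $\pi_0$, five of $U$), which equivalently justifies the expansion coefficients $c_0, c_1$ of Theorem~\ref{theo:laplace_method} to the order needed.

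The main obstacle is showing that every contribution to the $n^{-1}$ coefficient of $\Var_{\pi_n}(f)$ other than the gradient term vanishes. This hinges on a parity argument: under the centered Gaussian $\Nv(0, H_\star^{-1})$ the linear form $\nabla f(x_\star)^\top y$ is odd in $y$, the quadratic form $\tfrac12 y^\top\nabla^2 f(x_\star) y$ is even, and the cubic remainder $T_3$ is odd. Hence the cross-covariance $\Cov(\nabla f(x_\star)^\top y, \tfrac12 y^\top\nabla^2 f(x_\star) y)$ vanishes, and the leading $n^{-1/2}$ correction to the covariance of $\widetilde\pi_n$, governed by $\Cov(y_i y_j, T_3)$, vanishes as well. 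Consequently the quadratic-in-$f$ and cubic-in-$U$ effects enter only at order $n^{-2}$, leaving precisely $n^{-1}\|\nabla f(x_\star)\|_{H_\star^{-1}}^2$ at leading order. Making these cancellations rigorous, i.e.\ bounding the Taylor remainders uniformly so that the formal expansion is legitimate, is the technical heart of the argument and is exactly where the stated higher differentiability of $f$, $\pi_0$ and $U$ is consumed.
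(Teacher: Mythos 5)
This proposition is imported from \cite[Section 3]{SchillingsEtAl2020}; the paper itself gives no proof, only the remark that the result follows from the asymptotic expansion of Laplace-type integrals stated around Theorem~\ref{theo:laplace_method}. Your argument is correct and follows exactly that intended route for \eqref{equ:Exp_mu}: writing $\pi_n(f)=J_{f\pi_0}(n)/J_{\pi_0}(n)$, applying the leading-order expansion to the prefactors $f\pi_0$, $f^2\pi_0$, $\pi_0$ (which inherit the required $C^2$ regularity and positivity at $x_\star$), and cancelling the common factor $\e^{-nU(x_\star)}n^{-d/2}$.

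For \eqref{equ:Var_mu_2} you deviate slightly from the cited source: rather than inserting the explicit second-order coefficients $c_1(\cdot)$ of the expansion into the ratio $\pi_n(f^2)-\pi_n(f)^2$ and computing the $n^{-1}$ coefficient directly, you rescale by $x=x_\star+y/\sqrt n$ and kill the spurious $n^{-1}$ contributions by parity under $\Nv(0,H_\star^{-1})$. The two computations are equivalent (the parity cancellations are precisely how the formula for $c_1$ is derived), and your version is arguably more transparent about \emph{why} only the gradient term survives. Two small points to tighten: the $\mc O(n^{-1/2})$ odd perturbation of the rescaled density consists of the cubic term of $U$ \emph{and} the linear term of $\log\pi_0$ at $x_\star$ — you only name $T_3$, though the same parity argument disposes of both — and since $\bbE_{\widetilde\pi_n}[y]=\mc O(n^{-1/2})$ is generically nonzero, one should note that the product $\bbE[y_i]\,\bbE[y_j]$ enters $\Cov_{\widetilde\pi_n}(y_i,y_j)$ only at order $n^{-1}$ and hence $\Var_{\pi_n}(f)$ only at order $n^{-2}$. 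As you acknowledge, the localization of all Taylor expansions to a neighbourhood of $x_\star$ and the exponential smallness of the tails (guaranteed by the separation condition $\inf_{\|x-x_\star\|>r}U(x)-U(x_\star)>0$ and absolute convergence of the integrals) is the remaining technical work, which is carried out in the cited reference.
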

These results yield kind of a weak convergence of $\pi_n$ to $\LAn$ (both converging weakly to $\delta_{x_\star}$) for sufficiently smooth functionals $f$. 
However, we require for the following a stronger convergence of $\pi_n$ to $\LAn$, namely, in Hellinger distance.
This was recently shown in \cite{SchillingsEtAl2020}.
For convenience, we recall the definition of the total variation (TV) and Hellinger distance:
\begin{align*}
	d_\text{TV}(\pi,\widetilde \pi)
	& \coloneqq 
	\sup_{A \in \mcH}
	\left|\pi(A) - \widetilde\pi(A)\right|
	=
	\frac 12 \int_{\bbR^d} \left|\frac{\d \pi}{\d \mu}(x) - \frac{\d \widetilde \pi}{\d \mu}(x)\right| \mu(\d x)\\
	d_\text{H}(\pi,\widetilde \pi),
	&
	\coloneqq 
	\left(\int_{\bbR^d} \left|\sqrt{\frac{\d \pi}{\d \mu}(x)} - \sqrt{\frac{\d \widetilde \pi}{\d \mu}(x)}\right|^2 \mu(\d x) \right)^{1/2},
\end{align*}
where $\mu$ denotes a common dominating measure for $\pi,\widetilde \pi \in \mc P(\HS)$.
It holds true that
\[
	\frac{d^2_\text{H}(\pi,\widetilde \pi)}2
	\leq
	d_\text{TV}(\pi,\widetilde \pi)
	\leq
	d_\mathrm{H}(\pi,\widetilde \pi),
\]
see, e.g., \cite[Equation (8)]{GibbsSu2002}. 
Moreover, 
\begin{align}\label{eq:Hell_Mom}
    \left|\pi(f) - \widetilde \pi(f) \right|
    & \leq
    2 \left(\pi(|f|^2) + \widetilde \pi(|f|^2) \right)^{1/2} \ d_\text{H}(\pi, \widetilde \pi),
    \qquad
    f \in L^2_\pi \cap L^2_{\widetilde \pi},
\end{align}
see, e.g., \cite[Lemma 21]{DashtiStuart2017}.

\begin{theo}[{\cite[Theorem 2]{SchillingsEtAl2020}}]\label{theo:conv_H}
Let $\pi_n$ be given as in \eqref{equ:post} with $\pi_0$ possessing a Lebesgue density $\pi_0\colon \bbR^d \to [0,\infty)$. % and set $\S_0 \coloneqq \{x \in \bbR^d\colon \pi_0(x) >0\}$.
Furthermore, let $U, \pi_0 \in C^3(\S_0,\bbR)$ with $\S_0 = \{x \in \bbR^d\colon \pi_0(x) >0\}$ and let Assumption \ref{assum:LA1} be satisfied.
If
\begin{enumerate}
\item
there exist the limit $x_\star \coloneqq \lim_{n\to \infty} x_n$ within the interior of $\S_0$ and $H_\star \coloneqq \nabla^2U(x_\star)$ being positive definite;

\item
for each $r > 0$ there exists an $n_r\in\bbN$ such that
\[
	\inf_{x \in B^c_r(x_n) \cap \S_0} U(x) - \frac 1n \log \pi_0(x) > 0  \qquad \forall n\geq n_r,
\]
with $B^c_r(x_n) \coloneqq \{x \in \HS\colon \|x-x_n\| > r\}$;

\item
the prior density satisfies $\int_{\HS} \pi_0^{1-\epsilon}(x)\ \d x < \infty$ for an $\epsilon\in(0,1)$;

\end{enumerate}
then, there holds
\[
		d_\mathrm{H}(\mu_n, \LAn) \in \mc O(n^{-1/2}).
\]
\end{theo}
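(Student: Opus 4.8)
The plan is to establish the quantitative Bernstein--von Mises--type bound $d_\mathrm{H}(\pi_n, \LAn) \in \mc O(n^{-1/2})$ by reducing the problem to a comparison of a rescaled posterior density against the standard Gaussian density, and then splitting the estimate into a local (Gaussian) regime and a tail regime. Writing the negative log-density as $\Phi_n(x) \coloneqq U(x) - \frac 1n \log \pi_0(x)$, so that $\pi_n$ has Lebesgue density proportional to $\exp(-n\Phi_n)$, I would first exploit the invariance of the Hellinger distance under the bijective affine map $x \mapsto z \coloneqq \sqrt n\, H_n^{1/2}(x - x_n)$. Under this map $\LAn$ is transported exactly to $\Nv(0, I_d)$, whose density I denote $\phi$, while $\pi_n$ is transported to some density $g_n$; hence $d_\mathrm{H}(\pi_n, \LAn) = d_\mathrm{H}(g_n, \phi)$, and it suffices to control the Hellinger affinity $A_n \coloneqq \int_{\bbR^d} \sqrt{g_n(z)\,\phi(z)}\,\d z$, since $d_\mathrm{H}^2(g_n, \phi) = 2(1 - A_n)$.

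For the local regime I would Taylor expand $n\Phi_n$ about the minimiser $x_n$. Because $x_n$ is the interior, nondegenerate minimiser, the gradient term vanishes and $\nabla^2(n\Phi_n)(x_n) = n H_n$, so in $z$-coordinates the quadratic part is exactly $\tfrac12\|z\|^2$ and the remainder is the cubic term $\tfrac{1}{6\sqrt n}\nabla^3\Phi_n(\xi)[z,z,z]$, which is $\mc O(n^{-1/2})$ pointwise thanks to the $C^3$-regularity of $U$ and $\pi_0$ together with $x_n \to x_\star$ and $H_n \to H_\star$. The normalising constant $Z_n = \int \exp(-nU)\,\pi_0\,\d x$ would be pinned down by the Laplace expansion of Theorem~\ref{theo:laplace_method} applied with $f = \pi_0$, giving $Z_n = \e^{-nU(x_\star)} n^{-d/2}\bigl(\sqrt{\det(2\pi H_\star^{-1})}\,\pi_0(x_\star) + \mc O(n^{-1})\bigr)$. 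Using the elementary bound $(\sqrt{1+u}-1)^2 \le \tfrac14 u^2$ and the fact that the leading cubic correction is an \emph{odd} polynomial in $z$, which integrates to zero against the symmetric Gaussian weight, the affinity defect contributed by a moderate-deviation region $\{\|z\| \le R_n\}$ with $R_n = o(\sqrt n)$ is seen to be of order $n^{-1}$, after integrating the square-integrable polynomial corrections against $\phi$. This oddness-induced cancellation is precisely what yields $d_\mathrm{H}^2 = \mc O(n^{-1})$, i.e. the claimed $d_\mathrm{H} = \mc O(n^{-1/2})$ rate, rather than the cruder $\mc O(n^{-1/4})$ one would get without it.

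It remains to show that the tail region contributes negligibly, and this I expect to be the main obstacle. On $\{\|x-x_n\|>r\}$ I would use the separation condition (the second listed hypothesis, $\inf \Phi_n > 0$ uniformly for large $n$) to extract decay in $n$ from $\exp(-n\Phi_n)$, while the prior-integrability condition $\int \pi_0^{1-\epsilon}\,\d x < \infty$ (the third listed hypothesis) enters through a H\"older-type splitting $\exp(-nU)\pi_0 = \bigl(\exp(-nU)\pi_0^{\epsilon}\bigr)\cdot \pi_0^{1-\epsilon}$ that trades the integrability of $\pi_0^{1-\epsilon}$ against the exponential gain from the separation, thereby forcing the posterior tail mass to decay faster than any power of $n^{-1}$ relative to $Z_n$; combined with the Gaussian tail bound for $\phi$ this makes the tail contribution to $1-A_n$ a $o(n^{-1})$ term. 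The delicate points are (i) obtaining constants in the cubic remainder and in the tail bounds that are \emph{uniform} in $n$, despite the moving centre $x_n$ and Hessian $H_n$, and (ii) matching the intermediate moderate-deviation region correctly so that the local Gaussian approximation captures the posterior mass down to the required order. Assembling the local estimate $1-A_n^{\mathrm{loc}} = \mc O(n^{-1})$ with the tail estimate gives $d_\mathrm{H}^2(\pi_n,\LAn) = 2(1-A_n) = \mc O(n^{-1})$, which is the assertion; the two-sided inequality $\tfrac12 d_\mathrm{H}^2 \le d_\text{TV} \le d_\mathrm{H}$ recalled in the excerpt then also delivers $d_\text{TV}(\pi_n,\LAn) \in \mc O(n^{-1/2})$ as a by-product.
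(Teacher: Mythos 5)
The paper does not prove this statement at all: it is imported verbatim from \cite{SchillingsEtAl2020} (Theorem~2 there), so there is no in-paper proof to compare against. Your sketch reconstructs essentially the strategy of that reference — affine rescaling $z=\sqrt n\,H_n^{1/2}(x-x_n)$ under which $\LAn$ becomes $\Nv(0,I_d)$ and the Hellinger distance is invariant, a local third-order Taylor expansion of $n\Phi_n$ with $\Phi_n=U-\frac1n\log\pi_0$, Laplace asymptotics for $Z_n$, and a tail estimate combining the separation hypothesis with the H\"older splitting $\e^{-nU}\pi_0=(\e^{-nU}\pi_0^{\epsilon})\,\pi_0^{1-\epsilon}$ to exploit $\int\pi_0^{1-\epsilon}\,\d x<\infty$ — so the route is the right one. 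Two remarks. First, your attribution of the rate to the oddness of the cubic correction is mistaken: writing $g_n=\phi\,\e^{-\delta_n}/c_n$ with $\delta_n=\mc O(n^{-1/2})$ in $L^2(\phi)$, one has $1-A_n=\tfrac18\Var_\phi(\delta_n)+\mc O(\|\delta_n\|^3)$, i.e.\ the first-order term cancels identically in the Hellinger affinity regardless of parity, so $d_\mathrm{H}^2=\mc O(n^{-1})$ follows from the quadratic structure alone; oddness is what upgrades \emph{moment} errors such as $\pi_n(f)-\LAn(f)$ to $\mc O(n^{-1})$, not the Hellinger rate. Second, the genuinely hard part — controlling the intermediate region $r_n\le\|x-x_n\|\le r$ with $r_n\to0$, where the fixed-$r$ separation hypothesis does not apply and one must instead use uniform local strong convexity of $\Phi_n$ near $x_n$ (via $H_n\to H_\star>0$ and continuity of the second derivatives) to match the local Gaussian estimate to the tail bound — is flagged by you as a delicate point but not carried out; as written the argument is a correct and faithful outline rather than a complete proof. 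Note also the typo in the paper's statement: the conclusion should read $d_\mathrm{H}(\pi_n,\LAn)$, not $d_\mathrm{H}(\mu_n,\LAn)$.
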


The first condition of Theorem \ref{theo:conv_H} relates the (pathwise) convergence of the MAP estimate $x_n$ to the maximum-Likelihood estimate $x_\star$ in the small noise limit.
The second assumptions ensures that the maximum of the density of $\pi_n$ at $x_n$ is well separated from any other local peaks and the third assumptions is rather mild and a technical requirement for the proof of the theorem.

A rather strong assumption for practical applications here is the positive definiteness of $H_\star \coloneqq \nabla^2U(x_\star)$ which yields convergence of $\pi_n$ to a point mass located at $x_\star$.
This assumption can be relaxed under suitable conditions, namely, if the prior $\pi_0$ is Gaussian and if $\pi_n$ concentrates around a linear manifold.

\begin{corollary}[{\cite[Corollary 1]{SchillingsEtAl2020}}]\label{cor:conv_H_sing}
Let $\pi_n$ be given as in \eqref{equ:post} with $\pi_0 = \Nv(x_0,C_0)$ and let Assumption \ref{assum:LA1} be satisfied.
If there exists a linear subspace $\mc X \subset \HS$ such that for the orthogonal projection $\mathrm P_{\mc X} \colon \HS \to \mc X$ we have
\[
	 U = U \circ \mathrm P_{\mc X} \qquad \text{ on } \HS,
\]
and if the restriction $U\colon \mc X \to [0,\infty)$ of $U$ on $\mc X$ and marginal prior density $\pi_0$ on $\mc X$ satisfy the assumptions of Theorem \ref{theo:conv_H} on $\mc X$, then we have $d_\mathrm{H}(\pi_n, \LAn) \in \mc O(n^{-1/2})$.
\end{corollary}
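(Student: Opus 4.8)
The plan is to exploit the product-like structure forced by the hypothesis $U = U\circ \mathrm P_{\mc X}$: since the likelihood is constant along the inactive directions $\mc X^\perp$, both $\pi_n$ and its Laplace approximation $\LAn$ must inherit \emph{exactly} the Gaussian prior conditional there, so that their Hellinger distance collapses to that of their marginals on the active subspace $\mc X$, where Theorem~\ref{theo:conv_H} applies. Concretely, I would write every $x\in\HS$ in coordinates adapted to the orthogonal splitting $\HS = \mc X\oplus\mc X^\perp$ as $x=(u,w)$ with $u=\mathrm P_{\mc X}x$ and $w=(I-\mathrm P_{\mc X})x$, and set $A\coloneqq C_0^{-1}$ with blocks $A_{uu},A_{uw},A_{wu},A_{ww}$. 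Because $U$ is $w$-independent, $\nabla^2U(x_n)$ is supported on the $\mc X\times\mc X$ block, call it $G\coloneqq\nabla^2(U|_{\mc X})(u_n)$, while $-\nabla^2\log\pi_0=A$; hence from \eqref{equ:Hn} the joint precision $nH_n = n\,\nabla^2U(x_n)+A$ has $ww$-block $A_{ww}$ and $wu$-block $A_{wu}$, independent of $n$. In particular $H_n$ is positive definite (as required by Assumption~\ref{assum:LA1}) even though $\nabla^2U(x_n)$ is degenerate, the prior precision filling in the inactive block.

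The central step is to show that $\pi_n$ and $\LAn$ share the identical conditional law of $w$ given $u$, namely the prior conditional $\pi_0(\mathrm d w\mid u)=\Nv\!\big((x_0)_w-A_{ww}^{-1}A_{wu}(u-(x_0)_u),\,A_{ww}^{-1}\big)$. For $\pi_n$ this is immediate, since the likelihood does not involve $w$, so $\pi_n(\mathrm d u,\mathrm d w)=\pi_n^{\mc X}(\mathrm d u)\,\pi_0(\mathrm d w\mid u)$ with $\pi_n^{\mc X}(\mathrm d u)\propto \exp(-nU(u))\,\pi_0^{\mc X}(\mathrm d u)$ the marginal posterior on $\mc X$. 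For $\LAn=\Nv(x_n,\tfrac1n H_n^{-1})$ the conditional covariance of $w$ given $u$ is the inverse of the $ww$-precision block, i.e.\ $A_{ww}^{-1}$, matching the prior; the conditional mean $w_n-A_{ww}^{-1}A_{wu}(u-u_n)$ agrees with the prior conditional mean for all $u$ precisely because the MAP point $x_n=(u_n,w_n)$ satisfies the first-order condition $\partial_w\log\pi_0(u_n,w_n)=0$, which forces $w_n=(x_0)_w-A_{ww}^{-1}A_{wu}(u_n-(x_0)_u)$, i.e.\ $w_n$ lies on the prior conditional-mean line. Thus $\LAn(\mathrm d u,\mathrm d w)=\LAn^{\mc X}(\mathrm d u)\,\pi_0(\mathrm d w\mid u)$.

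Next I would invoke the elementary tensorization identity for the Hellinger distance: if two joint measures admit a common conditional kernel given $u$, then the square roots of their joint densities factorize and the conditional density integrates to one, so $d_\mathrm{H}(\pi_n,\LAn)=d_\mathrm{H}(\pi_n^{\mc X},\LAn^{\mc X})$. It remains to identify $\LAn^{\mc X}$ as the genuine Laplace approximation of $\pi_n^{\mc X}$ on $\mc X$. Taking the Schur complement of the $ww$-block of the joint precision $nH_n$ gives the marginal precision $nG+\big(A_{uu}-A_{uw}A_{ww}^{-1}A_{wu}\big)=nG+C_{uu}^{-1}=nH_n^{\mc X}$, where $H_n^{\mc X}=\nabla^2(U|_{\mc X})(u_n)+\tfrac1n C_{uu}^{-1}$ and $\pi_0^{\mc X}=\Nv((x_0)_u,C_{uu})$; moreover $u_n$ is the MAP of $\pi_n^{\mc X}$, since profiling out the jointly Gaussian variable $w$ shifts the objective in \eqref{equ:MAP} only by an additive constant. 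By hypothesis the restricted pair $(U|_{\mc X},\pi_0^{\mc X})$ satisfies the assumptions of Theorem~\ref{theo:conv_H} on $\mc X$, whence $d_\mathrm{H}(\pi_n^{\mc X},\LAn^{\mc X})\in\mc O(n^{-1/2})$, and combining with the tensorization identity yields $d_\mathrm{H}(\pi_n,\LAn)\in\mc O(n^{-1/2})$.

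The main obstacle is the exact agreement of the two conditionals in the second step, which is the only place demanding genuine care: one must check both that the conditional covariances coincide — a statement that the $ww$-block of $nH_n$ is the $n$-independent prior precision block $A_{ww}$ — and that the conditional means coincide, which hinges on the MAP optimality in the $w$-direction. Since $\mc X$ need not be aligned with the principal axes of $C_0$, these identities are cleanest when phrased through the block \emph{precision} matrix rather than the covariance; once the conditionals are shown identical, everything else reduces to standard Gaussian Schur-complement algebra and a direct application of Theorem~\ref{theo:conv_H}.
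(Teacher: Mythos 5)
Your argument is correct, but note that the paper does not prove this corollary at all: it is imported verbatim as \cite[Corollary 1]{SchillingsEtAl2020}, so there is no in-paper proof to compare against. Your reconstruction is a clean, self-contained route: the two load-bearing observations are (i) that $\pi_n$ and $\LAn$ share the \emph{exact} prior conditional of $w=(I-\mathrm P_{\mc X})x$ given $u=\mathrm P_{\mc X}x$, which you correctly reduce to the facts that the $ww$- and $wu$-blocks of the joint precision $nH_n=n\nabla^2U(x_n)+C_0^{-1}$ are the $n$-independent prior blocks and that the stationarity condition $\partial_w\log\pi_0(x_n)=0$ places $w_n$ on the prior conditional-mean line; and (ii) that the $\mc X$-marginal of $\LAn$ is the intrinsic Laplace approximation of $\pi_n^{\mc X}$, which follows from the Schur-complement identity $C_{0,uu}^{-1}=A_{uu}-A_{uw}A_{ww}^{-1}A_{wu}$ together with the Gaussian-specific fact that profiling out $w$ in the MAP objective \eqref{equ:MAP} reproduces the marginal prior log-density up to an additive constant, so that $u_n=\mathrm P_{\mc X}x_n$ is indeed the MAP of $\pi_n^{\mc X}$. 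With identical conditionals the Hellinger distance tensorizes to the distance between the $\mc X$-marginals, and the hypothesis that Theorem~\ref{theo:conv_H} applies on $\mc X$ finishes the argument. All steps check out; the only point I would make fully explicit in a write-up is the verification that the two conditional \emph{means} agree as affine functions of $u$ (equality of both slope and intercept), since the slope agreement is automatic from the shared precision blocks while the intercept agreement genuinely uses the first-order optimality of $x_n$ in the $w$-direction, exactly as you flag.
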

%Corollary \ref{cor:conv_H_sing} includes Theorem \ref{theo:conv_H} as special case for $\mc X = \HS$.
Let $\mc X^\perp$ denote the orthogonal complement to the linear subspace $\mc X$.
Then, the assumptions of Corollary \ref{cor:conv_H_sing} yield that the posterior $\pi_n$ concentrates along the linear manifold $\mc M = x_\star + \mc X^\perp = \argmin_{x\in\HS} U(x)$.
%Moreover, let $\mc X^\perp_{C_0}$ denote the $C_0^{-1}$-orthogonal complement to $\mc X$, i.e., $v^\top C_0^{-1} w = 0$ for any $v\in\mc X$ and $w\in\mc X^\perp_{C_0}$.
%Further, let us introduce the decomposition $x = x_{\mc X} + x_\perp$ where $x_{\mc X} \in \mc X$ and $x_\perp \in \mc X^\perp_{C_0}$. Then the posterior density $\pi_n$ can be written as
%\[
%    \pi_n(x)
%    =
%    \frac 1{Z_n}
%    \exp(- n U(x_{\mc X}))
%    \pi_0(x_{\mc X}) \ \pi_0(x_{\perp}).
%\]
Similarly, the Laplace approximation concentrates along $\mc M$. In particular, given the assumption of Corollary \ref{cor:conv_H_sing} we have for the covariance $C_n = \frac 1n \left( \nabla^2 U(x_n) + \frac 1n C_0^{-1} \right)^{-1}$ of $\LAn$ that
%the kernel of the Hessian $\nabla^2 U(x)$ coincides with $\mc X^\perp$.
%By straighforward calculation, one can then show that for covariance matrix $C_n = \frac 1n H_n^{-1}$ of the associated Laplace approximation we have
\begin{align} \label{equ:Var_LAn_sing}
    \lim_{n\to \infty} n\ v^\top C_n v
    < \infty
    \quad
    \text{iff}
    %Longleftrightarrow
    \quad
    v \in \mc X,
\end{align}
i.e., the marginal variance of $\LAn$ decays like $\mc O(n^{-1})$ just in the directions of the `likelihood-informed' subspace $\mc X$. 
The statement of \eqref{equ:Var_LAn_sing} is based on the fact that the null space of $\nabla^2 U(x_n) \in \bbR^{d\times d}$ coincides for sufficiently large $n\in\bbN$ with $\mc X^\perp$ as defined above, and can be verified rigorously by a straightforward computation.

\subsubsection{Main result}
Given the results for $d_\mathrm{H}(\pi_n, \LAn)\to0$ we are now ready to state our main result for the concentration robustness of MH algorithms using Hessian-based random walk-like proposals.

\begin{theo}[Concentration robustness for non-Gaussian targets]\label{theo:HRW_NonGauss}
Let $\pi_n$ be as in \eqref{equ:post} and let $\pi_0 \in C^4(\S_0;\bbR)$ and $U \in C^5(\S_0;\bbR)$. 
Further, let either the assumptions of Theorem \ref{theo:conv_H} be satisfied with 
\[
	0 < \inf_{x\colon \|x-x_\star\|>r} U(x) - U(x_\star) \qquad \forall r >0
\]
or the assumptions of Corollary \ref{cor:conv_H_sing} such that $x_\star \coloneqq \lim_{n\to\infty} \PM_{\mc X} x_n$ satisfies
\[
	0 < \inf_{x \in \mc X \colon \|x - x_\star\| > r} U(x) - U(x_\star) \qquad \forall r >0.
\]
Then, 
\begin{enumerate}
    \item 
    for the $\pi_n$-reversible MH transition kernel $K_n$ using the Hessian-based Gaussian random walk proposal kernel \eqref{equ:GRW_Gauss} we have
    \[
	\lim_{n\to \infty} \widebar \alpha(K_n)
	=
	\ev{1 \wedge \exp\left(- \frac 12 \|X + s\xi\|^2 + - \frac 12 \|X\|^2\right)} > 0
    \]
    with $X,\xi \sim N(0,I_d)$ independently and $a\wedge b \coloneqq\min\{a,b\}$ for $a,b\in\bbR$, as well as
    \[
	\lim_{n\to \infty} \widebar \rho_{v}(K_n)
	=
	\ev{s^2 \xi_1^2 \left(1 \wedge \exp\left(- \frac 12 \|X + s\xi\|^2 + - \frac 12 \|X\|^2\right)\right) } >0
    \]
    for any $0\neq v\in\bbR^d$ where $\xi_1$ denotes the first random component of $\xi$ as above;

    \item
    for the $\pi_n$-reversible MH transition kernel $K_n$ using the modified pCN proposal kernel \eqref{equ:pCN_Gauss} we have
    \[
	\lim_{n\to \infty} \widebar \alpha(K_n)
	=
	1,
	\qquad
	\lim_{n\to \infty} \widebar \rho_{v}(K_n)
	=
	 2- 2\sqrt{1-s^2}, %\sqrt{1-s^2}
    \]
    for any $0\neq v\in\bbR^d$.
\end{enumerate}
\end{theo}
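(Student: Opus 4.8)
The plan is to transfer the Gaussian-target conclusions of Theorem~\ref{theo:HRW_Gauss} to the concentrating posteriors by a perturbation argument, exactly along the three-step strategy sketched above. Write $\ell_n \coloneqq \varphi_{x_n,C_n}$ for the Lebesgue density of $\LAn = \Nv(x_n,C_n)$, let $p_n$ denote the common proposal density of $P_n$, and let $\widetilde K_n$ be the $\LAn$-reversible MH kernel built from the \emph{same} proposal $P_n$. Since $\LAn$ is genuinely Gaussian, Theorem~\ref{theo:HRW_Gauss} applies verbatim to $\widetilde K_n$, so $\widebar\alpha(\widetilde K_n)$ and $\widebar\rho_v(\widetilde K_n)$ already equal the asserted $n$-independent constants (in particular $\widebar\alpha(\widetilde K_n)=1$ for the modified pCN proposal). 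It therefore suffices to control the discrepancies $|\widebar\alpha(K_n)-\widebar\alpha(\widetilde K_n)|$ and $|\widebar\rho_v(K_n)-\widebar\rho_v(\widetilde K_n)|$ as $n\to\infty$.

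First I would record the symmetric representation of the efficiency quantities. Since $\alpha_{P}(x,y)\,\pi(x)\,p(x,y) = \min\{\pi(x)p(x,y),\,\pi(y)p(y,x)\}$ for a $\pi$-reversible MH kernel, and since the rejection (Dirac) part contributes nothing to a jump distance, one has
\[
	\widebar\alpha(K) = \int_\HS\int_\HS \min\{\pi(x)p(x,y),\pi(y)p(y,x)\}\,\d x\,\d y,
\]
and the same identity for $\rho_v(K)$ with the extra weight $w(x,y)\coloneqq|v^\top x - v^\top y|^2$ inside the integral. Applying the elementary estimate $|\min\{a,b\}-\min\{c,d\}|\le|a-c|+|b-d|$ to $K_n$ and $\widetilde K_n$ (whose only difference is the target density, $\pi_n$ versus $\ell_n$) and using $\int p_n(x,y)\,\d y = \int p_n(y,x)\,\d x = 1$ gives
\[
	\bigl|\widebar\alpha(K_n)-\widebar\alpha(\widetilde K_n)\bigr| \le 2\int_\HS|\pi_n(x)-\ell_n(x)|\,\d x = 4\,d_\mathrm{TV}(\pi_n,\LAn) \le 4\,d_\mathrm{H}(\pi_n,\LAn),
\]
which tends to $0$ by Theorem~\ref{theo:conv_H} (respectively Corollary~\ref{cor:conv_H_sing}). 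This settles the acceptance-rate claims for both proposals at once.

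For the jump distance the same splitting yields
\[
	\bigl|\rho_v(K_n)-\rho_v(\widetilde K_n)\bigr| \le \int_\HS|\pi_n(x)-\ell_n(x)|\,g_n(x)\,\d x + \int_\HS|\pi_n(y)-\ell_n(y)|\,\widetilde g_n(y)\,\d y,
\]
where $g_n(x)\coloneqq\int_\HS w(x,y)p_n(x,y)\,\d y$ and $\widetilde g_n(y)\coloneqq\int_\HS w(x,y)p_n(y,x)\,\d x$ are the directional proposal second moments. For the Gaussian random walk both reduce to the constant $s^2\,v^\top C_n v = s^2\,\Var_{\LAn}(f_v)$, so dividing by $\Var_{\LAn}(f_v)$ leaves $4s^2 d_\mathrm{TV}(\pi_n,\LAn)\to0$. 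For the modified pCN proposal $g_n(x)=(1-\sqrt{1-s^2})^2|v^\top(x-x_n)|^2 + s^2 v^\top C_n v$ carries an unbounded quadratic term; here I would estimate the resulting weighted total variation by the Hellinger--Cauchy--Schwarz inequality $\int|\pi_n-\ell_n|\,h\,\d x \le \sqrt2\,d_\mathrm{H}(\pi_n,\LAn)\,(\pi_n(h^2)+\LAn(h^2))^{1/2}$ and invoke the fourth-moment bounds $\LAn(|v^\top(x-x_n)|^4)=3(v^\top C_n v)^2$ and $\pi_n(|v^\top(x-x_n)|^4)\in\mc O(n^{-2})$ (the latter from the Laplace expansion behind Proposition~\ref{propo:Schillings}) together with $d_\mathrm{H}\in\mc O(n^{-1/2})$. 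This shows the quadratic contribution is $\mc O(n^{-3/2})$, hence $o(v^\top C_n v)$, so in both cases
\[
	\frac{|\rho_v(K_n)-\rho_v(\widetilde K_n)|}{\Var_{\LAn}(f_v)} \xrightarrow[]{n\to\infty} 0.
\]

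Finally I would combine these estimates with a comparison of the normalising variances. By Proposition~\ref{propo:Schillings}, equation~\eqref{equ:Var_mu_2}, $n\,\Var_{\pi_n}(f_v)\to v^\top H_\star^{-1}v$, while $n\,\Var_{\LAn}(f_v)=v^\top(nC_n)v = v^\top H_n^{-1}v\to v^\top H_\star^{-1}v$ as $H_n\to H_\star$; hence $\Var_{\LAn}(f_v)/\Var_{\pi_n}(f_v)\to1$. Writing
\[
	\widebar\rho_v(K_n) = \frac{\Var_{\LAn}(f_v)}{\Var_{\pi_n}(f_v)}\left(\widebar\rho_v(\widetilde K_n) + \frac{\rho_v(K_n)-\rho_v(\widetilde K_n)}{\Var_{\LAn}(f_v)}\right)
\]
and letting $n\to\infty$ gives $\widebar\rho_v(K_n)\to\widebar\rho_v(\widetilde K_n)$, the claimed constant. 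The hard part is precisely the pCN jump-distance step: the quadratic weight $w$ renders the crude total-variation bound useless, and since numerator and denominator are both $\mc O(n^{-1})$ one must show the weighted discrepancy decays \emph{strictly} faster than $n^{-1}$, which forces the Hellinger--moment estimate and genuine fourth-moment control of $\pi_n$ about $x_n$ at rate $\mc O(n^{-2})$. A secondary subtlety is the linear-manifold regime of Corollary~\ref{cor:conv_H_sing}, where $H_\star$ is singular: for directions $v$ outside the informed subspace $\mc X$ the variance no longer decays, so one splits $v$ according to \eqref{equ:Var_LAn_sing} and treats the prior-dominated directions separately, the moment bounds then being $\mc O(1)$ rather than $\mc O(n^{-2})$ but the variance ratio still tending to $1$.
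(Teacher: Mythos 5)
Your proposal is correct and follows essentially the same route as the paper: compare $K_n$ with the auxiliary $\LAn$-reversible kernel $\widetilde K_n$, bound $|\widebar\alpha(K_n)-\widebar\alpha(\widetilde K_n)|$ by total variation via the pointwise inequality $|\min\{a,b\}-\min\{c,d\}|\le|a-c|+|b-d|$ (the paper's Lemma~\ref{lem:AAR_pi_tilde_pi}), control the jump-distance discrepancy by a Hellinger--Cauchy--Schwarz estimate requiring exactly the fourth-moment bounds of Proposition~\ref{propo:SJD_Proposal} (including the $\pi_n(|v^\top(x-x_n)|^4)\in\mc O(n^{-2})$ control that makes the pCN case work), and handle the normalisation via $\Var_{\pi_n}(f_v)/\Var_{\LAn}(f_v)\to1$ as in Proposition~\ref{propo:Var_pin LAn}. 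The only (cosmetic) difference is that you integrate out the proposal variable first and apply the Hellinger--moment bound to weighted marginal integrals, whereas the paper's Lemma~\ref{lem:SJD_pi_tilde_pi} applies \eqref{eq:Hell_Mom} directly to the joint measures $\nu\wedge\nu^\top$ and $\widetilde\nu\wedge\widetilde\nu^\top$ with test function $|\Delta f_v|^2$.
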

\begin{proof}
Let $P_n$ be either of the two proposal kernels and $K_n$ denote the resulting MH transition kernel targeting $\pi_n$ as well as $\widetilde K_n$ the resulting MH transition kernel targeting $\LAn$.
By Lemma \ref{lem:AAR_pi_tilde_pi} shown in the subsequent section, we then have 
\[
    \left|\widebar \alpha(K_n) - \widebar \alpha(\widetilde K_n)\right| 
    \leq 
    2d_\text{TV}(\pi_n, \LAn)
    \leq
    2d_\text{H}(\pi_n, \LAn)
    \to 0
\]
as $n\to \infty$ exploiting Theorem \ref{theo:conv_H} or Corollary \ref{cor:conv_H_sing}, respectively.
Thus, we have
\[
    \lim_{n\to \infty} \widebar \alpha(K_n)
    =
    \lim_{n\to \infty} \widebar \alpha(\widetilde K_n)
\]
and the statement follows by Theorem \ref{theo:HRW_Gauss} applied to $\widetilde K_n$.

Regarding the directional expected squared jump distances $\widebar{\rho}_v(K_n)$ and $\widebar{\rho}_v(\widetilde K_n)$, respectively, we have by Lemma \ref{lem:SJD_pi_tilde_pi} and Proposition \ref{propo:SJD_Proposal} shown in the subsequent section that
\[
    \left|\widebar \rho_v(K_n) - \widebar \rho_v(\widetilde K_n)\right|
    \leq
    C
    \ 
    d_\text{H}(\pi_n,\LAn)
    +
	\widebar \rho_f(K) \ 
	\left|1 - \frac{\Var_{\pi_n}(f_v)}{\Var_{\LAn}(f_v)} \right|
\]
where $f_v(x) = v^\top x$ and $C<\infty$. 
By Proposition \ref{propo:Var_pin LAn} we know that $\lim_{n\to\infty} \frac{\Var_{\pi_n}(f_v)}{\Var_{\LAn}(f_v)} = 1$, and, hence, by the same reasoning as above we obtain 
\[
    \lim_{n\to \infty} \widebar{\rho}_v(K_n)
    =
    \lim_{n\to \infty} \widebar{\rho}_v(\widetilde K_n)
\]
and the statement follows again by Theorem \ref{theo:HRW_Gauss}.
\end{proof}

In the following subsection we collect all auxiliary results required for the proof of our main Theorem~\ref{theo:HRW_NonGauss}.

%\subsubsection{Average Acceptance Rate}\label{sec:AAR}
\subsubsection{Auxiliary Results}\label{sec:AAR}
\paragraph{Stability of the average acceptance rate.}
We first provide a general result regarding the average acceptance rate of MH transition kernels targeting two different probability measures $\pi, \widetilde \pi$ on $\bbR^d$ using the same proposal kernel $P\colon \bbR^d \times \mc B^d\to[0,1]$.
%We state the result in general form. % and later apply it to $\pi = \pi_n$ and $\widetilde \pi = \LAn$.
We introduce the following measures on $\bbR^d\times\bbR^d$:
\begin{align}\label{equ:nu}
	\nu(\d x\, \d y)
	\coloneqq
	P(x, \d y)\ \pi(\d x),
	\qquad
	\nu^\top(\d x\, \d y)
	\coloneqq
	\nu(\d y\, \d x)
\end{align}
as well as
\begin{align}\label{equ:nu_tilde}
	\widetilde \nu(\d x \d y)
	\coloneqq
	P(x, \d y)\ \widetilde \pi(\d x),
	\qquad
	\widetilde \nu^\top(\d x \d y)
	\coloneqq
	\widetilde  \nu(\d y \d x).
\end{align}
Thus, the acceptance probability $\alpha$ of the $\pi$-reversible MH transition kernel $K$ and the acceptance probability $\widetilde \alpha$ of the $\widetilde \pi$-reversible MH transition kernel $\widetilde K$, both employing the proposal kernel $P$, are given by 
\begin{align} \label{eq:AR_K_tilde_K}
	\alpha(x,y)
	& \coloneqq
	1 \wedge \frac{\d \nu^\top}{\d \nu}(x,y),
	&\widetilde
	\alpha(x,y)
	& \coloneqq
	1 \wedge \frac{\d \widetilde\nu^\top}{\d \widetilde\nu}(x,y)
\end{align}
assuming that $\nu^\top \ll \nu$ and $\widetilde \nu^\top \ll \widetilde \nu$ and recalling the notation $a\wedge b \coloneqq\min\{a,b\}$ for $a,b\in\bbR$.
For the corresponding average acceptance rates
\begin{align} \label{eq:AAR_K_tilde_K}
	\widebar \alpha(K)
	& =
	\int_{\HS} \int_{\HS}
	\alpha(x,y)\ P(x,\d y)\ \pi(\d x),
	&
	\widebar\alpha(\widetilde K)
	& =
	\int_\HS \int_\HS
	\widetilde \alpha(x,y)\ P(x,\d y)\ \widetilde \pi(\d x)
\end{align}
we obtain the following stability result.

\begin{lem}\label{lem:AAR_pi_tilde_pi}
Given $\alpha, \widetilde \alpha$ in \eqref{eq:AR_K_tilde_K} are well-defined, we have for $\widebar \alpha(K)$ and $\widebar\alpha(\widetilde K)$ as given in \eqref{eq:AAR_K_tilde_K}
\[
    \left|\widebar \alpha(K) - \widebar \alpha(\widetilde K)\right| 
    \leq 
    2d_\text{TV}(\pi, \widetilde \pi).
\]
\end{lem}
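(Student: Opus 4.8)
The plan is to express the average acceptance rate purely as a total-variation distance between the two joint measures $\nu$ and $\nu^\top$, and then exploit the triangle inequality together with the fact that forming the joint with a fixed proposal kernel is a total-variation isometry. First I would fix a common dominating measure $\lambda$ on $\HS\times\HS$ for $\nu,\nu^\top,\widetilde\nu,\widetilde\nu^\top$ (for instance their sum) and write $g\coloneqq\d\nu/\d\lambda$ and $g^\top\coloneqq\d\nu^\top/\d\lambda$. Since $\alpha(x,y) = 1\wedge(g^\top/g)(x,y)$ holds $\nu$-a.e.\ and $\min\{g,g^\top\}$ vanishes on $\{g=0\}$, I would establish the overlap identity
\[
	\widebar\alpha(K) = \int_\HS\int_\HS \alpha(x,y)\,\nu(\d x\,\d y) = \int_{\HS\times\HS} \min\{g,g^\top\}\,\d\lambda.
\]
Using $\min\{a,b\} = \tfrac{a+b}{2} - \tfrac{|a-b|}{2}$ and $\int g\,\d\lambda = \int g^\top\,\d\lambda = 1$, this reduces to $\widebar\alpha(K) = 1 - d_\text{TV}(\nu,\nu^\top)$, and analogously $\widebar\alpha(\widetilde K) = 1 - d_\text{TV}(\widetilde\nu,\widetilde\nu^\top)$. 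Consequently $|\widebar\alpha(K)-\widebar\alpha(\widetilde K)| = |d_\text{TV}(\nu,\nu^\top) - d_\text{TV}(\widetilde\nu,\widetilde\nu^\top)|$.

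Next I would bound this difference by the triangle inequality for the metric $d_\text{TV}$,
\[
	\left|d_\text{TV}(\nu,\nu^\top) - d_\text{TV}(\widetilde\nu,\widetilde\nu^\top)\right| \leq d_\text{TV}(\nu,\widetilde\nu) + d_\text{TV}(\nu^\top,\widetilde\nu^\top),
\]
and then identify each summand with $d_\text{TV}(\pi,\widetilde\pi)$. For the first term, since $\nu$ and $\widetilde\nu$ share the proposal kernel $P$, for any measurable $C\subseteq\HS\times\HS$ with sections $C_x\coloneqq\{y : (x,y)\in C\}$ one has $\nu(C)-\widetilde\nu(C) = \int_\HS P(x,C_x)\,(\pi-\widetilde\pi)(\d x)$; taking the supremum over $C$, using $0\le P(x,C_x)\le 1$ for the upper bound and the choice $C = A\times\HS$ for the matching lower bound, yields $d_\text{TV}(\nu,\widetilde\nu) = d_\text{TV}(\pi,\widetilde\pi)$. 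For the second term, $\nu^\top$ and $\widetilde\nu^\top$ are the pushforwards of $\nu$ and $\widetilde\nu$ under the measurable bijection $(x,y)\mapsto(y,x)$, which preserves total variation, so $d_\text{TV}(\nu^\top,\widetilde\nu^\top) = d_\text{TV}(\nu,\widetilde\nu) = d_\text{TV}(\pi,\widetilde\pi)$.

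Combining the three displays gives $|\widebar\alpha(K)-\widebar\alpha(\widetilde K)| \le 2\,d_\text{TV}(\pi,\widetilde\pi)$, as claimed. The one delicate point is the first step: one must justify the overlap identity carefully on $\{g=0\}$, equivalently handle the Radon--Nikodym derivative $\d\nu^\top/\d\nu$ where it is only defined $\nu$-a.e., and this is exactly where the hypothesis that $\alpha,\widetilde\alpha$ are well-defined enters. I emphasize that the sharp constant $2$ hinges on the exact identity $\widebar\alpha = 1 - d_\text{TV}(\nu,\nu^\top)$: applying the cruder pointwise estimate $|\min\{a,b\}-\min\{a',b'\}|\le|a-a'|+|b-b'|$ directly under the integral would only deliver the constant $4$.
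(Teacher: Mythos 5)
Your proof is correct, and while it starts from the same overlap representation $\widebar\alpha(K)=\int\min\{g,g^\top\}\,\d\lambda$ that the paper uses (there written as $\nu\wedge\nu^\top(\HS\times\HS)$), the way you finish is genuinely different and, in fact, cleaner. The paper bounds $\bigl|\nu\wedge\nu^\top(\HS\times\HS)-\widetilde\nu\wedge\widetilde\nu^\top(\HS\times\HS)\bigr|$ by applying the pointwise estimate $|a_1\wedge a_2-b_1\wedge b_2|\le|a_1-b_1|+|a_2-b_2|$ under the integral and then invoking $|\mu_1(\HS^2)-\mu_2(\HS^2)|\le d_{\mathrm{TV}}(\mu_1,\mu_2)$ for the two overlap measures; since those overlap measures have unequal total masses ($\widebar\alpha(K)$ versus $\widebar\alpha(\widetilde K)$), the identity $\sup_A|\mu_1(A)-\mu_2(A)|=\frac12\int|f_1-f_2|\,\d\eta$ no longer holds, and with the half-$L^1$ normalization used in the middle display that route only delivers the constant $4$ --- exactly the loss you point out in your closing remark. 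Your detour through the exact identity $\widebar\alpha(K)=1-d_{\mathrm{TV}}(\nu,\nu^\top)$ (valid because $\nu$ and $\nu^\top$ are both probability measures) followed by the quadrilateral inequality $|d_{\mathrm{TV}}(\nu,\nu^\top)-d_{\mathrm{TV}}(\widetilde\nu,\widetilde\nu^\top)|\le d_{\mathrm{TV}}(\nu,\widetilde\nu)+d_{\mathrm{TV}}(\nu^\top,\widetilde\nu^\top)$ avoids this issue entirely and yields the constant $2$ without ambiguity about which definition of total variation is in force; equivalently, one can keep the paper's integral formulation but use $\min\{a,b\}=\frac12(a+b-|a-b|)$ and the reverse triangle inequality for $|\cdot|$ before integrating. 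Your identification $d_{\mathrm{TV}}(\nu,\widetilde\nu)=d_{\mathrm{TV}}(\pi,\widetilde\pi)$ via sections and the choice $C=A\times\HS$ matches the paper's argument via the dominating measure $P(x,\d y)\,\mu(\d x)$, and your handling of the set $\{g=0\}$ correctly isolates where the hypothesis $\nu^\top\ll\nu$ is needed.
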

\begin{proof}
First, we note that
\[
	\widebar \alpha(K) = \nu \wedge \nu^\top(\HS \times \HS),
	\qquad
	\widebar\alpha(\widetilde K) = \widetilde \nu \wedge \widetilde \nu^\top(\HS \times \HS),
\]
where for two measures $\eta_1, \eta_2 \gg \eta$ on $\bbR^d$ with densities $h_i \coloneqq \frac{\d \eta_i}{\d \eta}$ we define the measure $\eta_1 \wedge \eta_2$ on $\bbR^d$ by
\[
	\eta_1 \wedge \eta_2(A) 
	\coloneqq
	\int_A \left(h_1 \wedge h_2\right) \ \d\eta,
	\qquad
	A\in\mc B^d,
\]
which yields $\eta_1 \wedge \eta_2(A) \leq \eta_1(A) \wedge \eta_2(A)$.
Note, that
\begin{align}\label{eq:dTV_nu_pi}
	d_\text{TV}(\nu^\top, \widetilde \nu^\top)
	=	
	d_\text{TV}(\nu, \widetilde \nu)
	=
	d_\text{TV}(\pi, \widetilde \pi),
\end{align}
where the first equality follows by construction and the second by considering a dominating measure $\mu \gg \pi, \widetilde\pi$ and computing the TV distance $d_\text{TV}(\nu, \widetilde \nu)$ with respect to the reference measure $\eta(\d x\, \d y) \coloneqq P(x,\d y) \ \mu(\d x)$, since $\eta \gg \nu, \widetilde \nu$ with $\frac{\d \nu}{\d \eta} = \frac{\d \pi}{\d\mu}$ and $\frac{\d \widetilde \nu}{\d \eta} = \frac{\d \widetilde \pi}{\d\mu}$.
Now, let $\eta$ denote an arbitrary dominating measure $\eta \gg \nu, \nu^\top, \widetilde \nu, \widetilde \nu^\top$ and consider the densities $h \coloneqq \frac{\d \nu}{\d \eta}$, $h^\top \coloneqq \frac{\d \nu^\top}{\d \eta}$, $\widetilde h \coloneqq \frac{\d \widetilde\nu}{\d \eta}$, and $\widetilde h^\top \coloneqq \frac{\d \widetilde \nu^\top}{\d \eta}$.
Since for $a_1,a_2,b_1,b_2 \in \bbR$
\[
	\left| a_1 \wedge a_2 - b_1 \wedge b_2 \right|
	\leq
	|a_1 - b_1| \vee |a_2 - b_2|
	\leq
	|a_1 - b_1| + |a_2 - b_2|
\]
we obtain
\begin{align*}
	d_\text{TV}(\nu \wedge \nu^\top, \widetilde\nu \wedge \widetilde \nu^\top)
	& =
	\frac 12 \int_{\HS\times \HS} 
	\left| h \wedge h^\top - \widetilde h \wedge \widetilde h^\top \right| \ \d \eta
	\leq \frac 12  \int_{\HS\times \HS} \left| h - \widetilde h \right| + \left| h^\top - \widetilde h^\top \right| \d \eta\\
	& = d_\text{TV}(\nu , \widetilde\nu ) + d_\text{TV}(\nu^\top , \widetilde\nu^\top)\\
	& \leq 2 d_\text{TV}(\pi, \widetilde \pi)
\end{align*}
and, thus, the statement follows by 
\[
    \left|\widebar \alpha(K) - \widebar \alpha(\widetilde K)\right|
    =
    \left|\nu \wedge \nu^\top(\HS \times \HS) - \widetilde \nu \wedge \widetilde \nu^\top(\HS \times \HS)\right|
    \leq
    d_\text{TV}(\nu \wedge \nu^\top, \widetilde\nu \wedge \widetilde \nu^\top).
\]
\end{proof}

%\begin{propo}
%For $a_1,a_2,b_1,b_2 \in \bbR$ we have
%\[
%	\left| a_1 \wedge a_2 - b_1 \wedge b_2 \right|
%	\leq
%	|a_1 - b_1| \vee |a_2 - b_2|
%	\leq
%	|a_1 - b_1| + |a_2 - b_2|.
%\]
%\end{propo}
%\begin{proof}
%W.l.o.g.~we assume that $a_1 \wedge a_2 = a_1$.
%We distinguish now two cases: (1) $b_1 \wedge b_2 = b_1$ and (2) $b_1 \wedge b_2 = b_2$.
%For case (1) the statement follows immediately.
%For case (2) we further distinguish the subcases (2.1) $a_1 \leq b_2$ and (2.2) $a_1 > b_2$.
%For (2.1) we have due to $a_1 \leq b_2 \leq b_1$ that
%\[
%	|a_1 - b_2| \leq |a_1 - b_1|
%\]
%and the statement follows.
%For (2.2) we have analogously $b_2 < a_1 \leq a_2$, thus,
%\[
%	|a_1 - b_2| \leq |a_2 - b_2|
%\]
%and the statement follows again.
%This concludes the proof.
%\end{proof}

%\subsubsection{Jump Distance}
\paragraph{Stability of the expected squared jump distance.}
Again we consider two arbitrary target probability measures $\pi, \widetilde \pi$ on $\bbR^d$ and an arbitrary proposal kernel $P\colon \bbR^d \times \mc B^d\to[0,1]$ such that the corresponding acceptance probabilities $\alpha$ and $\widetilde \alpha$ as given in \eqref{eq:AAR_K_tilde_K} are well-defined.
Let $K$ and $\tilde K$ denote the corresponding MH transition kernels targeting $\pi$ and $\widetilde \pi$, respectively, with proposal kernel $P$ and acceptance probability $\alpha$ and $\widetilde \alpha$, respectively.
We then consider a measurable function $f\colon \HS \to \bbR$ and the associated \emph{normalized $f$-jump distance} for $K$ and $\widetilde K$, respectively, given by
\begin{align}\label{eq:FJP_K_tilde_K}
	\widebar \rho_f(K) 
	& \coloneqq
	\frac{\int_{\HS\times \HS} \left| f(x) - f(y)\right|^2 \ \alpha(x,y)\ P(x, \d y) \ \pi(\d x)}{\Var_{\pi}(f)},\\
    \widebar \rho_f(\widetilde K) 
	& \coloneqq
	\frac{\int_{\HS\times \HS} \left| f(x) - f(y)\right|^2 \ \widetilde \alpha(x,y)\ P(x, \d y) \ \widetilde \pi(\d x)}{\Var_{\widetilde \pi}(f)}.
\end{align}

\begin{lem}\label{lem:SJD_pi_tilde_pi}
Let $f\colon \bbR^d\to\bbR$ belong to $L^2_{\pi}(\bbR) \cap L^2_{\widetilde \pi}(\bbR)$ and set $\Delta f(x,y) \coloneqq f(x) - f(y)$. 
Then we have
\[
    \left|\widebar \rho_f(K) - \widebar \rho_f(\widetilde K)\right|
    \leq
    \frac{4\left( \nu (|\Delta f|^4) + \widetilde \nu (|\Delta f|^4)\right)^{1/2}}{\Var_{\widetilde\pi}(f)}
    \ 
    d_\text{H}(\pi,\widetilde\pi)
    +
	\widebar \rho_f(K) \ 
	\left|1 - \frac{\Var_{\pi}(f)}{\Var_{\widetilde\pi}(f)} \right|
\]
with $\nu$ and $\widetilde \nu$ as in \eqref{equ:nu} and \eqref{equ:nu_tilde}, respectively.
\end{lem}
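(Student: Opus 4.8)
The plan is to compare the two ratios defining $\widebar\rho_f(K)$ and $\widebar\rho_f(\widetilde K)$ by splitting the difference into a ``numerator'' part and a ``denominator'' part. Writing $N(K) \coloneqq \int_{\HS\times\HS} |\Delta f(x,y)|^2\, \alpha(x,y)\, P(x,\d y)\,\pi(\d x)$ and analogously $N(\widetilde K)$ with $\widetilde\alpha$ and $\widetilde\pi$, so that $\widebar\rho_f(K) = N(K)/\Var_\pi(f)$ and $\widebar\rho_f(\widetilde K) = N(\widetilde K)/\Var_{\widetilde\pi}(f)$, I would first insert and subtract the hybrid term $N(\widetilde K)/\Var_{\widetilde\pi}(f)$ differently: the cleanest algebraic identity is to use a common denominator $\Var_{\widetilde\pi}(f)$. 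Concretely,
\[
    \widebar\rho_f(K) - \widebar\rho_f(\widetilde K)
    =
    \frac{N(K) - N(\widetilde K)}{\Var_{\widetilde\pi}(f)}
    +
    N(K)\left(\frac{1}{\Var_\pi(f)} - \frac{1}{\Var_{\widetilde\pi}(f)}\right).
\]
The second term equals $\widebar\rho_f(K)\left(1 - \frac{\Var_\pi(f)}{\Var_{\widetilde\pi}(f)}\right)$ after factoring out $\Var_\pi(f)$ from $N(K)$, which is precisely the last summand in the claimed bound, so taking absolute values immediately recovers that piece. The remaining work is to bound $|N(K) - N(\widetilde K)|/\Var_{\widetilde\pi}(f)$ by $4\left(\nu(|\Delta f|^4) + \widetilde\nu(|\Delta f|^4)\right)^{1/2} d_\text{H}(\pi,\widetilde\pi)/\Var_{\widetilde\pi}(f)$.

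For the numerator difference, I would recognize $N(K) = \int |\Delta f|^2\, \alpha\, \d\nu = \nu\wedge\nu^\top(|\Delta f|^2)$ using $\alpha(x,y) = 1\wedge \frac{\d\nu^\top}{\d\nu}(x,y)$, exactly in the spirit of the proof of Lemma~\ref{lem:AAR_pi_tilde_pi}; that is, $\alpha\,\d\nu$ is the measure $\nu\wedge\nu^\top$, and similarly $\widetilde\alpha\,\d\widetilde\nu = \widetilde\nu\wedge\widetilde\nu^\top$. Thus the numerator difference is the difference of the integrals of the fixed function $g \coloneqq |\Delta f|^2$ against the two measures $\nu\wedge\nu^\top$ and $\widetilde\nu\wedge\widetilde\nu^\top$. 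To estimate this against a Hellinger distance rather than a total-variation distance, I would apply the moment bound \eqref{eq:Hell_Mom}, which controls $|\eta_1(g) - \eta_2(g)|$ by $2(\eta_1(|g|^2) + \eta_2(|g|^2))^{1/2} d_\text{H}(\eta_1,\eta_2)$ for the two (sub)probability-normalized measures $\nu\wedge\nu^\top$ and $\widetilde\nu\wedge\widetilde\nu^\top$. Here $g = |\Delta f|^2$ so $|g|^2 = |\Delta f|^4$, producing the $\left(\nu(|\Delta f|^4)+\widetilde\nu(|\Delta f|^4)\right)^{1/2}$ factor once one bounds $\nu\wedge\nu^\top(|\Delta f|^4) \le \nu(|\Delta f|^4)$ and likewise for the transpose and the tilded versions.

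The main obstacle will be the final two steps of controlling $d_\text{H}(\nu\wedge\nu^\top, \widetilde\nu\wedge\widetilde\nu^\top)$ by $d_\text{H}(\pi,\widetilde\pi)$, which is the Hellinger analogue of the total-variation estimate $d_\text{TV}(\nu\wedge\nu^\top,\widetilde\nu\wedge\widetilde\nu^\top) \le 2 d_\text{TV}(\pi,\widetilde\pi)$ already proved in Lemma~\ref{lem:AAR_pi_tilde_pi}. The subtlety is that the pointwise Lipschitz estimate $|a_1\wedge a_2 - b_1\wedge b_2| \le |a_1-b_1| + |a_2-b_2|$ used there works directly for $L^1$-type (TV) distances but must be adapted for the square-root integrand of the Hellinger metric; I expect to use $|\sqrt{a_1\wedge a_2} - \sqrt{b_1\wedge b_2}|$ bounded via the subadditivity of the square root together with $d_\text{H}(\nu,\widetilde\nu) = d_\text{H}(\pi,\widetilde\pi)$ and $d_\text{H}(\nu^\top,\widetilde\nu^\top) = d_\text{H}(\pi,\widetilde\pi)$ (the latter two following by the same change-of-reference-measure argument as in \eqref{eq:dTV_nu_pi}). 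Combining the moment bound, the factor $4$ coming from the constant $2$ in \eqref{eq:Hell_Mom} doubled through the $\nu\wedge\nu^\top$ versus $\widetilde\nu\wedge\widetilde\nu^\top$ comparison, and the denominator $\Var_{\widetilde\pi}(f)$, then yields the stated estimate.
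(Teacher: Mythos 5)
Your proposal is correct and follows essentially the same route as the paper's proof: the same splitting of the ratio difference into a numerator part over $\Var_{\widetilde\pi}(f)$ plus the $\widebar\rho_f(K)\,|1-\Var_\pi(f)/\Var_{\widetilde\pi}(f)|$ remainder, the same identification of the numerators as $\nu\wedge\nu^\top(|\Delta f|^2)$ and $\widetilde\nu\wedge\widetilde\nu^\top(|\Delta f|^2)$, the same application of \eqref{eq:Hell_Mom}, and the same Hellinger analogue $d_\text{H}(\nu\wedge\nu^\top,\widetilde\nu\wedge\widetilde\nu^\top)\le 2\,d_\text{H}(\pi,\widetilde\pi)$ obtained from the pointwise bound $|\sqrt{a_1\wedge a_2}-\sqrt{b_1\wedge b_2}|\le|\sqrt{a_1}-\sqrt{b_1}|+|\sqrt{a_2}-\sqrt{b_2}|$ together with $d_\text{H}(\nu,\widetilde\nu)=d_\text{H}(\nu^\top,\widetilde\nu^\top)=d_\text{H}(\pi,\widetilde\pi)$. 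The step you flag as the main obstacle is resolved exactly as you anticipate, so no gap remains.
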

\begin{proof}
First recall from above that $[\nu \wedge \nu^\top](\d x\, \d y) \coloneqq \alpha(x,y)\ P(x,\d y)\ \pi(\d x)$, analogously for $\widetilde \nu \wedge \widetilde \nu^\top$.
By the definition of $\Delta f$
we have
\[
	\widebar \rho_f(K) 
	=
	\frac{\int_{\HS^2} |\Delta f|^2 \ \d [\nu \wedge \nu^\top]}{\Var_{\pi}(f)},
	\qquad
	\widebar \rho_f(\widetilde K) 
	=
	\frac{\int_{\HS^2} |\Delta f|^2 \ \d [\widetilde \nu \wedge \widetilde\nu^\top]}{\Var_{\widetilde\pi}(f)}.
\]
Since for $a_1,a_2,b_1,b_2 \in \bbR$ with $b_1,b_2 \neq 0$
\[
	\left| \frac{a_1}{b_1} - \frac{a_2}{b_2} \right|
	\leq
	\left| \frac{a_1}{b_1} - \frac{a_2}{b_1} \right|
	+
	\left| \frac{a_2}{b_1} - \frac{a_2}{b_2} \right|
	=
	\frac{|a_1-a_2|}{|b_1|}
	+
	\frac{|a_2|}{|b_2|}
	\left| 1 - \frac{b_2}{b_1} \right|
\]
we obtain
\[
	\left|
	\widebar \rho_f(K)
	-
	\widebar \rho_f(\widetilde K)
	\right|
	\leq
	\frac{\left|\int_{\HS^2} |\Delta f|^2 \ \d [\nu \wedge \nu^\top] - \int_{\HS^2} |\Delta f|^2 \ \d [\widetilde \nu \wedge \widetilde \nu^\top]\right| }{\Var_{\tilde \pi}(f)}
	+
	\widebar \rho_f(K)
	\left|1 - \frac{\Var_{\pi}(f)}{\Var_{\widetilde\pi}(f)} \right|.
\]
The assertion follows now by applying \eqref{eq:Hell_Mom} which also holds for general nonnegative measures such as $\nu \wedge \nu^\top$, $\widetilde \nu \wedge \widetilde \nu^\top$:
\begin{align*}
	\left| \nu \wedge \nu^\top(|\Delta f|^2) - \widetilde \nu \wedge \widetilde \nu^\top(|\Delta f|^2) \right|
	\leq
	2\left( \nu \wedge \nu^\top(|\Delta f|^4) + \widetilde \nu \wedge \widetilde \nu^\top(|\Delta f|^4)\right)^{1/2}\
	d_\text{H}(\nu \wedge \nu^\top, \widetilde\nu \wedge \widetilde \nu^\top).
\end{align*}
Note, that by the same arguments for deriving \eqref{eq:dTV_nu_pi} we have
\begin{align}\label{eq:dH_nu_pi}
	d_\text{H}(\nu^\top, \widetilde \nu^\top)
	=	
	d_\text{H}(\nu, \widetilde \nu)
	=
	d_\text{H}(\pi, \widetilde \pi)
\end{align}
and by
\[
	\left| \sqrt{a_1 \wedge a_2} - \sqrt{b_1 \wedge b_2} \right|
	\leq
	|\sqrt{a_1} - \sqrt{b_1}| \vee |\sqrt{a_2} - \sqrt{b_2}|
	\leq
	|\sqrt{a_1} - \sqrt{b_1}| + |\sqrt{a_2} - \sqrt{b_2}|
\]
for arbitrary $a_1,a_2,b_1,b_2 \geq 0$, we obtain analogously to the proof of Lemma \ref{lem:AAR_pi_tilde_pi}
\[
    d_\text{H}(\nu \wedge \nu^\top, \widetilde\nu \wedge \widetilde \nu^\top)
    \leq
    2d_\text{H}(\pi, \widetilde \pi).
\]
Moreover, we have obviously $\nu \wedge \nu^\top( |\Delta f|^4 ) \leq  \nu(|\Delta f|^4 )$ and analogously $\widetilde \nu \wedge \widetilde\nu^\top( |\Delta f|^4 ) \leq \widetilde \nu(|\Delta f|^4 )$ which concludes the proof.
\end{proof}

We now focus on linear functionals $f_v(x) = v^\top x$ with $0 \neq v \in \bbR^d$ and on $\pi = \pi_n$, $\widetilde \pi = \LAn$.
For these we show that $|1 - \frac{\Var_{\pi_n}(f_v)}{\Var_{\LAn}(f_v)}| \to 0$.

\begin{propo}\label{propo:Var_pin LAn}
Given the assumptions of Theorem \ref{theo:HRW_NonGauss} we have for any $0\neq v\in\bbR^d$ and $f_v(x) \coloneqq v^\top x$
\[
	\lim_{n\to\infty} \frac{\Var_{\pi_n}(f_v)}{\Var_{\LAn}(f_v)}
	=
	1.
\]
\end{propo}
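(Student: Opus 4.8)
The plan is to treat the two sets of hypotheses allowed in Theorem~\ref{theo:HRW_NonGauss} separately, after one observation common to both. Since $\LAn = \Nv(x_n,C_n)$ with $C_n = \frac1n H_n^{-1}$ is Gaussian and $f_v$ is linear, the denominator is available in closed form,
\[
    \Var_{\LAn}(f_v) = v^\top C_n v = \tfrac1n\, v^\top H_n^{-1} v > 0 ,
\]
where positivity follows from $H_n$ being positive definite by Assumption~\ref{assum:LA1}. The first step is to establish $H_n\to H_\star$: using the limit $x_n\to x_\star$ provided by the respective hypotheses together with the smoothness $U\in C^5(\S_0;\bbR)$, $\pi_0\in C^4(\S_0;\bbR)$, continuity gives $\nabla^2 U(x_n)\to\nabla^2 U(x_\star)=H_\star$, while the prior term $\frac1n\nabla^2\log\pi_0(x_n)$ stays bounded and hence vanishes, so $H_n\to H_\star$ by \eqref{equ:Hn}.

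In the point-mass case (the hypotheses of Theorem~\ref{theo:conv_H}) the argument is short. The assumed smoothness lets me apply the expansion \eqref{equ:Var_mu_2} of Proposition~\ref{propo:Schillings} to $f=f_v$; since $\nabla f_v(x_\star)=v$ this gives $\Var_{\pi_n}(f_v)=n^{-1}v^\top H_\star^{-1}v+\mc O(n^{-2})$. As $H_\star$ is positive definite and $H_n\to H_\star$, we have $v^\top H_n^{-1}v\to v^\top H_\star^{-1}v>0$, and therefore
\[
    \frac{\Var_{\pi_n}(f_v)}{\Var_{\LAn}(f_v)}
    = \frac{v^\top H_\star^{-1} v + \mc O(n^{-1})}{v^\top H_n^{-1} v}
    \xrightarrow{n\to\infty} 1 .
\]

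The manifold case (the hypotheses of Corollary~\ref{cor:conv_H_sing}, with $\pi_0=\Nv(x_0,C_0)$) is the substantive part, because $\nabla^2 U(x_\star)$ is now singular with null space $\mc X^\perp$, so Proposition~\ref{propo:Schillings} no longer applies on the whole space. I would decompose $v=\PM_{\mc X}v+\PM_{\mc X^\perp}v$ and analyse both ingredients in block form relative to $\mc X\oplus\mc X^\perp$. For the denominator, a Schur-complement computation of $H_n^{-1}$ (using that the $\mc X$-block of $\nabla^2U(x_n)$ converges to a positive definite matrix while the $\mc X^\perp$-block vanishes) shows that the $\mc X^\perp$-block of $H_n^{-1}$ grows like $n$ times the conditional prior precision, in line with \eqref{equ:Var_LAn_sing}; hence $\Var_{\LAn}(f_v)$ converges to the conditional prior variance of the functional $(\PM_{\mc X^\perp}v)^\top x$ whenever $\PM_{\mc X^\perp}v\neq0$, and otherwise decays like $n^{-1}v^\top(H_\star|_{\mc X})^{-1}v$. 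For the numerator I would exploit $U=U\circ\PM_{\mc X}$: the likelihood is independent of the $\mc X^\perp$-coordinates, so the posterior conditional of $\PM_{\mc X^\perp}x$ given $\PM_{\mc X}x$ coincides with the Gaussian prior conditional, whose covariance does not depend on $\PM_{\mc X}x$. The law of total variance then splits $\Var_{\pi_n}(f_v)$ into this constant conditional variance plus the variance of a linear functional under the $\mc X$-marginal $\pi_n^{\mc X}$; the latter is itself a restricted posterior on $\mc X$ to which Proposition~\ref{propo:Schillings} applies, so it decays like $n^{-1}$ and is negligible. Matching the leading terms of numerator and denominator direction by direction then yields ratio $\to 1$ in this case too.

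The main obstacle is precisely this manifold case: the bookkeeping of the block/Schur-complement asymptotics of $H_n^{-1}$ together with the total-variance decomposition of $\Var_{\pi_n}(f_v)$. One must verify that the conditional prior covariance governing the numerator is exactly the inverse of the $\mc X^\perp$-precision block that produces the leading $\mc O(n)$ growth of $(H_n^{-1})_{\mc X^\perp\mc X^\perp}$ in the denominator, and that all cross terms between $\mc X$ and $\mc X^\perp$, as well as the informed-direction contributions, are of strictly lower order. The point-mass case, by contrast, is essentially immediate once \eqref{equ:Var_mu_2} and $H_n\to H_\star$ are in hand.
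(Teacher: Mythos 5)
Your point-mass case coincides with the paper's: both apply \eqref{equ:Var_mu_2} with $\nabla f_v(x_\star)=v$ and use $x_n\to x_\star$ plus continuity of $\nabla^2 U$ to get $H_n\to H_\star$, whence the ratio of the two $n^{-1}$-terms tends to $1$. In the manifold case, however, you take a genuinely different and considerably more laborious route. The paper never touches the block structure of $H_n^{-1}$: it notes that $\pi_n(f_v^4)$ and $\LAn(f_v^4)$ are uniformly bounded (both measures being controlled by Gaussians), so the moment bound \eqref{eq:Hell_Mom} combined with $d_\mathrm{H}(\pi_n,\LAn)\to 0$ from Corollary \ref{cor:conv_H_sing} yields the \emph{absolute} convergence $|\Var_{\pi_n}(f_v)-\Var_{\LAn}(f_v)|\to 0$; writing the ratio as $1+\bigl(\Var_{\pi_n}(f_v)-\Var_{\LAn}(f_v)\bigr)/\Var_{\LAn}(f_v)$, the case $v\notin\mc X$ is then immediate because $\lim_n\Var_{\LAn}(f_v)>0$ by \eqref{equ:Var_LAn_sing}, while $v\in\mc X$ reduces to the point-mass argument on the marginal. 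Your Schur-complement/total-variance plan is sound -- the $\mc X^\perp\mc X^\perp$ block of $(n\nabla^2U(x_n)+C_0^{-1})^{-1}$ indeed converges to the inverse of the $\mc X^\perp$-block of the prior precision, which is exactly the prior conditional covariance appearing in your decomposition of the numerator, so the leading terms do match -- but it buys nothing that the Hellinger argument does not already deliver, and you have left precisely this bookkeeping (the cross terms for $v$ with components in both subspaces, the affine conditional mean, the $\mc O(n^{-1})$ remainder on the $\mc X$-marginal) unexecuted, flagging it yourself as the main obstacle. The lesson from the paper's proof is that once Hellinger convergence of $\pi_n$ to $\LAn$ is in hand, one only needs to distinguish directions in which the limiting variance is positive (where absolute convergence of variances suffices) from directions in which it vanishes (where one falls back on the Laplace expansion); no explicit inversion of $H_n$ is required.
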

\begin{proof}
Obviously, $\Var_{\LAn}(f_v) = v^\top C_n v = n^{-1} v^\top H_n^{-1} v$ with $H_n$ as in \eqref{equ:Hn}.
First, we consider the case that the assumptions of Theorem \ref{theo:conv_H} are satisfied, i.e., $\pi_n$ and $\LAn$ converge to a point mass.
Then, by Proposition \ref{propo:Schillings} and the assumptions we have
\[
	\Var_{\pi_n}(f_v)
	\sim
	n^{-1} v^\top H_\star^{-1} v + \mc O(n^{-2}).
\]
Since $x_n\to x_\star$ and $\nabla^2 U$ being continuous, we also have $H_n \to H_\star$ which yields
\[
    \lim_{n\to\infty} \frac{\Var_{\pi_n}(f_v)}{\Var_{\LAn}(f_v)}
    =
    \lim_{n\to\infty} \frac{n^{-1} v^\top H_\star^{-1} v}{n^{-1} v^\top H_n^{-1} v}
    =
    1.
\]
We now consider the case that the assumptions of Corollary \ref{cor:conv_H_sing} are satisfied, i.e., $\pi_n$ and $\LAn$ concentrate around a linear manifold $\mc M$.
We note that, since $\pi_0$ and $\LAn$ are Gaussian, we have for any $f_v$ that $\pi_n(f_v^4)$ and $\LAn(f_v^4)$ exist and are uniformly bounded which in combination with $d_\mathrm{H}(\pi_n, \LAn) \to 0$ yields due to \eqref{eq:Hell_Mom}
\[
   |\Var_{\pi_n}(f_v) - \Var_{\LAn}(f_v)| \to 0
\]
for any $v\in\HS$.
Next, we write
\[
    \lim_{n\to\infty} \frac{\Var_{\pi_n}(f_v)}{\Var_{\LAn}(f_v)}
    =
    1 + \lim_{n\to\infty} \frac{\Var_{\pi_n}(f_v) - \Var_{\LAn}(f_v)}{\Var_{\LAn}(f_v)}
\]
and notice that for any $v \notin \mc X$ we have $\lim_{n\to \infty} \Var_{\LAn}(f_v) > 0$.
Thus, for any $v \notin \mc X$ we have that $\frac{\Var_{\pi_n}(f_v)}{\Var_{\LAn}(f_v)}\to1$.
Let now $v \in \mc X$. 
In this case, the marginals of $\pi_n$ and $\LAn$ in this direction converge to a point mass and we can apply the reasoning from the first part of this proof to conclude that $\frac{\Var_{\pi_n}(f_v)}{\Var_{\LAn}(f_v)}\to1$.
%We briefly discuss the behaviour of the variance of $\Var_{\pi_n}(f_v)$ and $\Var_{\LAn}(f_v)$ for linear functionals $f_v(x) = v^\topx$, i.e., the directional variances of $\pi_n$ and $\LAn$, respectively, for the case described in Corollary \ref{cor:conv_H_sing}.
%To this end, let $\mc X^\perp$ denote the orthogonal complement of the linear manifold $\mc X$, i.e., any $v\in\bbR^d$ can be uniquely decomposed into $v = v_{\mc X} + v_{\perp}$ where $v_{\mc X} \in \mc X$ and $v_\perp \in \mc X^\perp$. Since 
\end{proof}

Finally, we provide another auxilliary result on the decay of the fourth moment of the directional squared jump distance regarding the Hessian-based proposal kernels.

\begin{propo}\label{propo:SJD_Proposal}
Given the assumptions of Theorem \ref{theo:HRW_NonGauss}, let $\LAn = \Nv(x_n, C_n)$ denote the Laplace approximation of $\pi_n$.
Then, for 
\begin{enumerate}
\item
the proposal kernel $P_n(x) = \Nv(x, s^2 C_n)$ with $s > 0$

\item
and the proposal kernel $P_n(x) = \Nv(x_n + \sqrt{1-s^2}(x-x_n), s^2 C_n)$ with $s \in (0,1]$
\end{enumerate}
we have for any $0\neq v \in \bbR^d$ and $f_v(x) \coloneqq v^\top x$
\[
    \sup_{n\in\bbN}
    \frac{\left( \nu_n (|\Delta f_v|^4) + \widetilde \nu_n (|\Delta f_v|^4)\right)^{1/2}}{\Var_{\LAn}(f_v)}
    < + \infty,
\]
where $\nu_n(\d x\, \d y) = P_n(x, \d y)\, \pi_n(\d x)$ and $\widetilde \nu_n(\d x\, \d y) = P_n(x, \d y)\, \LAn(\d x)$.
\end{propo}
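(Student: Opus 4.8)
The plan is to reduce the whole statement to a single fourth--moment bound for the linear functional $f_v$ under $\pi_n$, after disposing of the parts that are exact Gaussian computations. Writing $\Delta f_v(x,y)=v^\top(x-y)$, I would first treat the Hessian--based Gaussian random walk $P_n(x)=\Nv(x,s^2C_n)$. Here a draw $y\sim P_n(x,\cdot)$ satisfies $x-y=-sC_n^{1/2}\xi$ with $\xi\sim\Nv(0,I_d)$ independent of $x$, so $\Delta f_v=-s(C_n^{1/2}v)^\top\xi\sim\Nv(0,s^2\,v^\top C_n v)$ is independent of the current state. Consequently $\nu_n(|\Delta f_v|^4)=\widetilde\nu_n(|\Delta f_v|^4)=3s^4(v^\top C_n v)^2$, and since $\Var_{\LAn}(f_v)=v^\top C_n v$ the ratio in the statement equals the constant $\sqrt6\,s^2$ for every $n$ and every $v$. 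This settles the first proposal in both regimes with no further work.

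For the modified pCN proposal the increment is $x-y=a(x-x_n)-sC_n^{1/2}\xi$ with $a\coloneqq 1-\sqrt{1-s^2}\in(0,1)$, hence $\Delta f_v=a\,v^\top(x-x_n)-s(C_n^{1/2}v)^\top\xi$. Under $\widetilde\nu_n$ the state is Gaussian, $x\sim\LAn=\Nv(x_n,C_n)$, and $\xi$ is an independent Gaussian, so $\Delta f_v$ is centred Gaussian with variance $(a^2+s^2)\,v^\top C_n v$ and $\widetilde\nu_n(|\Delta f_v|^4)=3(a^2+s^2)^2(v^\top C_n v)^2$, contributing the bounded constant $\sqrt3\,(a^2+s^2)$ to the ratio. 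For $\nu_n$, where $x\sim\pi_n$, I would apply Minkowski's inequality in $L^4(\nu_n)$, using the independence of $x$ and $\xi$, to obtain
\[
\nu_n(|\Delta f_v|^4)^{1/4}\;\le\; a\,\E_{\pi_n}\big[|v^\top(x-x_n)|^4\big]^{1/4}+s\,\big(3(v^\top C_n v)^2\big)^{1/4}.
\]
Thus the entire statement reduces to establishing the uniform bound $\sup_{n\in\bbN}\E_{\pi_n}[(v^\top(x-x_n))^4]\,/\,(v^\top C_n v)^2<\infty$.

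This fourth--moment estimate for the non-Gaussian $\pi_n$ is the main obstacle. In the point-mass regime (assumptions of Theorem \ref{theo:conv_H}) I would apply Proposition \ref{propo:Schillings} to the smooth function $g(x)\coloneqq(v^\top(x-x_\star))^2$, whose required smoothness is covered by $\pi_0\in C^4$, $U\in C^5$. Since $g(x_\star)=0$ one gets $\pi_n(g)=\mc O(n^{-1})$, and since $\nabla g(x_\star)=0$ the \emph{refined} variance formula \eqref{equ:Var_mu_2} gives $\Var_{\pi_n}(g)=n^{-1}\|\nabla g(x_\star)\|^2_{H_\star^{-1}}+\mc O(n^{-2})=\mc O(n^{-2})$; here it is essential to use the second-order expansion and not merely $\Var_{\pi_n}(g)=\mc O(n^{-1})$. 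Hence $\E_{\pi_n}[(v^\top(x-x_\star))^4]=\Var_{\pi_n}(g)+\pi_n(g)^2=\mc O(n^{-2})$. To recentre at the MAP estimate $x_n$, I would derive $x_n-x_\star=\mc O(n^{-1})$ from the optimality condition $\nabla U(x_n)=\frac1n\nabla\log\pi_0(x_n)$ together with $\nabla U(x_\star)=0$ and the nondegeneracy of $H_\star$, via the implicit function theorem; a further Minkowski step gives $\E_{\pi_n}[(v^\top(x-x_n))^4]^{1/4}\le\E_{\pi_n}[(v^\top(x-x_\star))^4]^{1/4}+|v^\top(x_n-x_\star)|=\mc O(n^{-1/2})$. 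Since $v^\top C_n v=\Theta(n^{-1})$ in this regime, the ratio stays bounded. (If the prior lacks a finite fourth moment so that $g\notin L^2_{\pi_0}$, the same conclusion follows by applying Theorem \ref{theo:laplace_method} with $p=1$ directly to the numerator $\int(v^\top(x-x_\star))^4\,\pi_0\,e^{-nU}\,\d x$, whose amplitude vanishes to fourth order at $x_\star$ and therefore forces $c_0=c_1=0$, leaving an $\mc O(n^{-d/2-2})$ contribution against the $\Theta(n^{-d/2})$ denominator.)

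Finally, in the linear-manifold regime (assumptions of Corollary \ref{cor:conv_H_sing}, $\pi_0$ Gaussian) I would split according to whether $v\in\mc X$. If $v\in\mc X$, then $v^\top C_n v=\Theta(n^{-1})$ by \eqref{equ:Var_LAn_sing} and the marginal of $\pi_n$ in direction $v$ concentrates to a point mass, so the restricted analysis of the previous paragraph on $\mc X$ again yields $\E_{\pi_n}[(v^\top(x-x_n))^4]=\mc O(n^{-2})$ and hence a bounded ratio. If $v\notin\mc X$, then \eqref{equ:Var_LAn_sing} gives $\liminf_{n}v^\top C_n v>0$, so the denominator is bounded away from zero; simultaneously, because $\pi_0$ is Gaussian and $U$ acts only through $\mathrm P_{\mc X}$, the moments $\pi_n((v^\top x)^4)$ are uniformly bounded (as already used in the proof of Proposition \ref{propo:Var_pin LAn}), and since $x_n$ converges the centred moment $\E_{\pi_n}[(v^\top(x-x_n))^4]$ is uniformly bounded as well. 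In both sub-cases the ratio is bounded, and combining the $\nu_n$- and $\widetilde\nu_n$-contributions for both proposals yields the claimed finite supremum. Everything but the fourth-moment estimate under $\pi_n$ is an exact Gaussian moment computation or a direct appeal to the Laplace-expansion results recalled above.
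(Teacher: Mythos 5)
Your proof is correct and follows essentially the same route as the paper: exact Gaussian moment computations for the random-walk case and for $\widetilde\nu_n$ under pCN, a reduction of the remaining $\nu_n$-term to the uniform bound $\E_{\pi_n}[(v^\top(x-x_n))^4]=\mc O(n^{-2})$, and a proof of that bound via Proposition~\ref{propo:Schillings} applied to a function with vanishing gradient at $x_\star$ together with $|v^\top(x_n-x_\star)|=\mc O(n^{-1})$ and the case split on whether $v^\top C_n v$ decays. The only differences are cosmetic (Minkowski in $L^4$ instead of the paper's crude $2^4$-splitting, and deriving $x_n-x_\star=\mc O(n^{-1})$ from the optimality condition rather than citing \cite[Remark~5]{SchillingsEtAl2020}).
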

\begin{proof}
We have $\Var_{\LAn}(f_v) = v^\top C_n v$.
For the case that $v^\top C_n v \not\to 0$, the statement follows immediately.
Thus, we consider the case, that $\pi_n$ and $\LAn$ concentrate along the direction $v$.
We first consider the case of $P_n(x) = \Nv(x,  s^2 C_n)$.
Then, since $f_v(x) = v^\top x$, we have with $\xi \sim \Nv(0, s^2 C_n)$
\[
    \nu_n (|\Delta f_v|^4)
    =
    \widetilde \nu_n (|\Delta f_v|^4)
    =
    \ev{|v^\top \xi|^4}.
\]
Now, since $v^\top \xi \sim \Nv(0, s^2 v^\top C_n v)$ and since for univariate mean normal Gaussian distributions the fourth moment coincides with three times the squared variance, we obtain for this case
\[
    \frac{\left( \nu_n (|\Delta f_v|^4) + \widetilde \nu_n (|\Delta f_v|^4)\right)^{1/2}}{\Var_{\LAn}(f_v)}
    =
    \frac{\left( 6 s^4 (v^\top C_n v)^2 \right)^{1/2}}{v^\top C_n v}
    =
    \sqrt6 s^2. 
\]
The case $P_n(x) = \Nv(x_n + \sqrt{1-s^2}(x-x_n), s^2 C_n)$ is slightly more involved.
First, We consider 
\[
    \widetilde \nu_n (|\Delta f_v|^4)
    =
    \ev{| (1-\sqrt{1-s^2}) v^\top X - v^\top\xi|^4}
\]
with $X \sim \Nv(0, C_n)$ and $\xi \sim \Nv(0, s^2 C_n)$ independently.
By introducing $Z \coloneqq (1-\sqrt{1-s^2}) X - \xi \sim \Nv(0, 2(1-\sqrt{1-s^2}) C_n)$, we get 
\[
    \widetilde \nu_n (|\Delta f_v|^4) 
    = \ev{| v^\top Z|^4}
    = 12(1-\sqrt{1-s^2})^2 \ \left(v^\top C_n v\right)^2.
\]
Moreover, we have
\begin{align*}
    \nu_n(|\Delta f_v|^4)
    & = \int_{\bbR^d} \int_{\bbR^d} \ev{|(1 - \sqrt{1-s^2}) v^\top(x-x_n) - v^\top\xi|^4} \pi_n(\d x) \ \Nv(0, s^2 C_n)(\d \xi) \\
    & \leq 2^4 \left( \int_{\bbR^d} |1 - \sqrt{1-s^2}|^4 \ |v^\top(x-x_n)|^4 \ \pi_n(\d x)  + 3 s^4 (v^\top C_n v)^2 \right).
\end{align*}
Furthermore, 
\begin{align*}
    \int_{\bbR^d} |v^\top(x - x_n)|^4\ \pi_n(\d x)
    &\leq 2^4\left( |v^\top(x_n - x_\star)|^4 + \int_{\bbR^d} |v^\top(x - x_\star)|^4\ \pi_n(\d x)
\right).
\end{align*}
Now, we know by \cite[Remark 5]{SchillingsEtAl2020} that $|v^\top (x_n - x_\star)| \in \mc O(n^{-1})$ given the assumptions of Theorem \ref{theo:laplace_method} hold on $\mathrm{span}(v)$ which is ensured by the assumptions of Theorem \ref{theo:HRW_NonGauss}.
Since for $f(x) \coloneqq |v^\top(x-x_\star)|^4$ we have $\nabla f(x_\star) = 0$ it follows by Proposition \ref{propo:Schillings} that
\begin{align*}
	\int_{\bbR^d} \left|v^\top(x-x_\star)\right|^{4}\ \pi_n(\d x) \in \mc O(n^{-2}),
\end{align*}
and, hence,
\[
    \nu_n(|\Delta f_v|^4) \in \mc O(n^{-2}).
\]
Thus, we obtain in summary for the modified pCN-proposal kernel
\begin{align*}
    \sup_{n\in\bbN} 
    \frac{\left( \nu_n (|\Delta f_v|^4) + \widetilde \nu_n (|\Delta f_v|^4)\right)^{1/2}}{\Var_{\LAn}(f_v)}
    & \leq
    \sup_{n\in\bbN}
    \frac{\left(12(1-\sqrt{1-s^2})^2 \ \left(v^\top C_n v\right)^2  + 3 s^4 (v^\top C_n v)^2 + \mc O(n^{-2}) \right)^{1/2}}
    {v^\top C_n v}\\
    & < +\infty,
\end{align*}
since $v^\top C_n v = \frac 1n v^\top H_n^{-1}v \in \mc O(n^{-1})$ due to $H_n \to H_\star>0$.
\end{proof}

\appendix

\section{Pushforward transition kernels}\label{sec:pushforward}
Let $(E,\mc E)$ be a Borel space and $(F, \mc F)$ be a measurable space. Let the mapping $T\colon E \to F$ be measurable and surjective.
% W.l.o.g.~we assume that $T$ is surjektice---otherwise, we can always restrict the following considerations to $\widetilde F \coloneqq F\cap \rg T$.
If the surjectivity property is not satisfied, we can restrict the following consideration to $F$ being the image of $T$, that is, $F=T(E)$.  
Let $\pi$ be a probability measure on $(E,\mc E)$ and $K\colon E\times \mc E\to[0,1]$ be a transition kernel.
The pushforward measure of $\pi$ under $T$ is given by $T_*\pi(B) \coloneqq \pi(T^{-1}(B))$, $B\in \mc F$, and we define the \emph{pushforward transition kernel} $T_*K\colon F\times \mc F\to[0,1]$ of $K$ under $T$ by
\begin{equation}\label{equ:TK}
	T_*K(y, B)
	\coloneqq
	\ev{K(X, T^{-1}(B))\mid T(X) = y},
	\qquad
	X\sim\pi,
\end{equation}
where $y \in F,\ B\in\mc F$. We comment on this definition:
\begin{rem}
By the fact that $(E,\mc E)$ is a Borel space and \cite[Theorem~6.3]{Kallenberg2002} there exists a probability kernel $\kappa \colon F\times \mc E \to [0,1]$ as regular version of the conditional distribution of $X$ given $T(X)$ such that
\[
    \ev{K(X, T^{-1}(B))\mid T(X) = y} = \int_{E} K(x,T^{-1}(B))\, \kappa(y,\d x), \quad B\in \mc F,
\]
almost surely w.r.t. $T_*\pi$ in $y$. The right hand-side of the latter is a transition kernel and therefore we consider the right hand-side of \eqref{equ:TK} also as transition kernel. Additionally, note that for a bijective mapping $T$ we simply have
\[
	T_*K(y, B)
	=
	K(T^{-1}(y), T^{-1}(B)).
\]
\end{rem}

\begin{rem}[Algorithmic realization]
Assuming that we can sample from $K(x,\cdot)$ for any $x\in E$, drawing a sample according to $T_*K(y,\cdot)$ with $y\in F$, can be realized as follows:
\begin{enumerate}
\item
Draw $Z\sim \kappa(y,\cdot)$, where $\kappa$ denotes a regular version of the conditional distribution of $X\sim \pi$ given that $T(X)=y$, and call the result $z\in E$.
\item
Draw $X'\sim K(z, \cdot)$ and call the result $x' \in E$.
\item
Return $y' \coloneqq T(x')$.
\end{enumerate}
Thus, we need to be able to sample from the probability kernel $\kappa$, which, for bijective $T$ simplifies to sampling w.r.t. a Dirac measure at $z=T^{-1}(y)$.
%i.e., the conditional distributions of the stationary measure $\pi$ conditioned on events $T(X)=y$.
% Again, for bijective $T$ the latter reduces to a Dirac measure at $x=T^{-1}(y)$.
\end{rem}

We have the following basic properties of $T_*K$.

\begin{propo}\label{propo:TK}
If $K\colon E \times \mc E\to [0,1]$ is $\pi$-reversible, then $T_*K$ 
%as defined in \eqref{equ:TK} 
is $T_*\pi$-reversible.
\end{propo}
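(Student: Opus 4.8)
The plan is to verify reversibility through its test-function (weak) formulation: $T_*K$ is $T_*\pi$-reversible if and only if the measure $T_*K(y,\d y')\,T_*\pi(\d y)$ on $F\times F$ is symmetric, i.e.\ invariant under swapping the two coordinates. By a standard functional monotone class argument it suffices to check this for product test functions $\phi(y,y')=a(y)\,b(y')$ with $a,b$ bounded and measurable on $F$.

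First I would fix a regular version $\kappa\colon F\times\mc E\to[0,1]$ of the conditional distribution of $X\sim\pi$ given $T(X)$, whose existence is guaranteed by the Borel space assumption (Kallenberg, Theorem~6.3), so that $T_*K(y,B)=\int_E K(x,T^{-1}(B))\,\kappa(y,\d x)$ and, crucially, $\pi$ disintegrates as $\pi(\d x)=\int_F\kappa(y,\d x)\,T_*\pi(\d y)$. The central computation then rewrites the action of $T_*K$ on $b$ in terms of the original Markov operator pulled back along $T$: writing $\tilde b\coloneqq b\circ T$, one checks that $\int_F b(y')\,T_*K(y,\d y')=\ev{(K\tilde b)(X)\mid T(X)=y}$, where $(K\tilde b)(x)=\int_E\tilde b(x')\,K(x,\d x')$, because integrating $b$ against the pushforward of $K(x,\cdot)$ under $T$ yields precisely $(K\tilde b)(x)$.

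Then I combine this with the tower property. Integrating against $a(y)\,T_*\pi(\d y)$ and using that $\tilde a\coloneqq a\circ T$ is $\sigma(T(X))$-measurable gives
\[
\int_F\int_F a(y)\,b(y')\,T_*K(y,\d y')\,T_*\pi(\d y)
=\ev{\tilde a(X)\,(K\tilde b)(X)}
=\int_E\int_E \tilde a(x)\,\tilde b(x')\,K(x,\d x')\,\pi(\d x).
\]
Now apply the reversibility of $K$ to swap the roles of $x$ and $x'$ on the right-hand side, and run the same chain of identities backwards with $a$ and $b$ interchanged. This yields exactly $\int_F\int_F a(y')\,b(y)\,T_*K(y,\d y')\,T_*\pi(\d y)$, which is the desired symmetry for product test functions and hence, by the monotone class theorem, for all bounded measurable $\phi$.

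The routine parts---measurability of $x\mapsto K(x,T^{-1}(B))$ (immediate since $T$ is measurable and $K(\cdot,A)$ is measurable), Fubini--Tonelli for the nonnegative integrands, and the monotone class extension from product to general test functions---I would only mention. The main obstacle is the measure-theoretic bookkeeping around the regular conditional distribution: $T_*K(y,\cdot)$ is defined only up to a $T_*\pi$-null set of $y$'s, so one must be careful that the disintegration $\pi(\d x)=\kappa(y,\d x)\,T_*\pi(\d y)$ and the conditional-expectation rewriting hold for a fixed version, and that all manipulations are integrals against $T_*\pi$, for which such null sets are harmless. This is exactly where the Borel space hypothesis enters, and it is the only genuinely delicate point; once the disintegration is in place, the symmetry is a one-line consequence of the reversibility of $K$.
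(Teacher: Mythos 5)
Your argument is correct and follows essentially the same route as the paper's proof: taking $a=\boldsymbol 1_A$ and $b=\boldsymbol 1_B$ in your product test functions recovers exactly the paper's computation, namely rewriting $\int_A T_*K(y,B)\,T_*\pi(\d y)$ via the tower property as $\int_{T^{-1}(A)}K(x,T^{-1}(B))\,\pi(\d x)$ and then invoking the reversibility of $K$. The only cosmetic difference is that you phrase the symmetry in terms of bounded test functions plus a monotone class step, whereas the paper works directly with measurable rectangles.
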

\begin{proof}
For $A,B \in \mc F$ and $X\sim \pi$ we have
\begin{align*}
	\int_A T_*K(y, B)\ T_*\pi(\d y)
	& = \int_{A} \ev{K(X , T^{-1}(B)) \mid T(X)=y} \ T_*\pi(\d y)\\
	& = \int_{E} \boldsymbol 1_A(y)\, \ev{K(X , T^{-1}(B)) \mid T(X)=y} \ T_*\pi(\d y)\\
	& = \ev{\boldsymbol 1_A(T(X))\, \ev{K(X , T^{-1}(B)) \mid T(X)}} \\
	& = \ev{\ev{\boldsymbol 1_A(T(X))\, K(X , T^{-1}(B)) \mid T(X)}} \\
	& = \ev{\boldsymbol 1_A(T(X))\, K(X , T^{-1}(B))} \\
	& = \int_{T^{-1}(A)} K(x , T^{-1}(B))\ \pi(\d x).
\end{align*}
Analogously, we obtain
\begin{align*}
	\int_B T_*K(y, A)\ T_*\pi(\d y)
	& 
	= \int_{T^{-1}(B)} K(x , T^{-1}(A))\ \pi(\d x),
\end{align*}
such that the desired statement follows by the $\pi$-reversibility of $K$.
\end{proof}

For a $\pi$-reversible transition kernel $K$ we define the 
\emph{stationary transition measures} $\nu$ on $E\times E$ and $\nu_T$ on $F\times F$ of transition kernel $K$ and the $T_*\pi$-reversible pushforward kernel $T_*K$, respectively, as
%These are given by
\begin{equation}\label{equ:TK_nu}
	\nu_K(\d x\, \d x') \coloneqq K(x, \d x') \, \pi(\d x),
	\qquad
	\nu_{T_*K}(\d y\, \d y') \coloneqq T_*K(y, \d y') \, T_*\pi(\d y).
\end{equation}
Those are related as follows:
\begin{propo}\label{propo:TK_nu}
Define $\boldsymbol T\colon E\times E \to F\times F$ by $\boldsymbol T(x,x') \coloneqq (T(x), T(x'))$ and
let $K$ be reversible w.r.t. $\pi$. Then
%Given the assumptions of Proposition \ref{propo:TK} and defining the measure $\nu(\d x\, \d x') \coloneqq K(x, \d x') \, \pi(\d x)$ on $E\times E$ as well as $\boldsymbol T\colon E\times E \to F\times F$ by $\boldsymbol T(x,x') \coloneqq (T(x), T(x'))$, we have 
\[
	\boldsymbol T_*\nu_K(\d y\, \d y')
	=
	T_*K(y, \d y') \, T_*\pi(\d y) = \nu_{T_*K}(\d y\, \d y').
\]
\end{propo}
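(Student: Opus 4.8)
The plan is to verify the claimed identity on a generating $\pi$-system of the product $\sigma$-algebra $\mc F\otimes\mc F$ and then invoke uniqueness of measures. Since both $\boldsymbol T_*\nu_K$ and $\nu_{T_*K}$ are finite (indeed probability) measures on $(F\times F, \mc F\otimes \mc F)$, it suffices to show that they agree on all measurable rectangles $A\times B$ with $A,B\in\mc F$. These rectangles form a $\pi$-system generating $\mc F\otimes\mc F$, so the Dynkin $\pi$--$\lambda$ theorem then yields equality on the whole product $\sigma$-algebra.

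First I would compute the left-hand side on a rectangle. By the definition of the pushforward measure and of $\boldsymbol T$ we have $\boldsymbol T^{-1}(A\times B) = T^{-1}(A)\times T^{-1}(B)$, and hence, using the definition $\nu_K(\d x\,\d x') = K(x,\d x')\,\pi(\d x)$ from \eqref{equ:TK_nu},
\[
	\boldsymbol T_*\nu_K(A\times B)
	=
	\nu_K\big(T^{-1}(A)\times T^{-1}(B)\big)
	=
	\int_{T^{-1}(A)} K\big(x, T^{-1}(B)\big)\ \pi(\d x).
\]

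Next I would compute the right-hand side on the same rectangle, namely $\nu_{T_*K}(A\times B) = \int_A T_*K(y, B)\, T_*\pi(\d y)$. The key observation is that the chain of equalities already carried out in the proof of Proposition \ref{propo:TK} establishes, purely from the definition \eqref{equ:TK} of $T_*K$ as a conditional expectation together with the tower property, that
\[
	\int_A T_*K(y, B)\ T_*\pi(\d y)
	=
	\int_{T^{-1}(A)} K\big(x, T^{-1}(B)\big)\ \pi(\d x).
\]
This step uses no reversibility; the hypothesis that $K$ is $\pi$-reversible enters only via Proposition \ref{propo:TK} to guarantee that $T_*K$ is $T_*\pi$-reversible, so that $\nu_{T_*K}$ is a genuine stationary transition measure. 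Comparing the two displays shows that $\boldsymbol T_*\nu_K$ and $\nu_{T_*K}$ coincide on all rectangles.

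I do not expect any serious obstacle here: the entire content is the rectangle identity, which has effectively been proven already inside Proposition \ref{propo:TK}. The only point requiring care is the measure-theoretic extension from rectangles to the full product $\sigma$-algebra, but this is routine once one notes that both measures are finite and that rectangles form a generating $\pi$-system.
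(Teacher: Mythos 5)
Your proposal is correct and follows essentially the same route as the paper: the paper's own proof establishes the identity on rectangles $A\times B$ by exactly the tower-property computation you cite from the proof of Proposition~\ref{propo:TK}, leaving the extension from rectangles to $\mc F\otimes\mc F$ implicit where you spell it out via the $\pi$--$\lambda$ theorem. Your observation that reversibility of $K$ is not actually used in this computation is also accurate.
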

\begin{proof}
%Let $A, B \in \mc F$ and consider $\nu_T$ as in \eqref{equ:TK_nu}.
%Then with $X\sim\pi$ we have 
For any $A, B \in \mc F$, with $X\sim\pi$, we have 
\begin{align*}
	\nu_{T_*K}(A\times B)
	& \coloneqq \int_A T_*K(y, B)\ T_*\pi(\d y)
	= \int_{A} \ev{K(X , T^{-1}(B)) \mid T(X)=y} \ T_*\pi(\d y)\\
	& = \int_{F} \boldsymbol 1_A(y) \ev{K(X , T^{-1}(B)) \mid T(X)=y} \ T_*\pi(\d y)\\
	& = \int_{F} \ev{\boldsymbol 1_A(T(X))\, K(X , T^{-1}(B)) \mid T(X)=y} \ T_*\pi(\d y)\\
	& = \ev{\ev{\boldsymbol 1_A(T(X))\, K(X , T^{-1}(B)) \mid T(X)}}\\
	& = \ev{\boldsymbol 1_A(T(X))\, K(X , T^{-1}(B))}\\
	& = \int_{T^{-1}(A)} K(x , T^{-1}(B))\, \pi(\d x)
	= \nu_K(T^{-1}(A)\times T^{-1}(B))
	= \boldsymbol T_*\nu_K(A\times B).
	\qedhere
\end{align*}
\end{proof}
We add a consequence of the former proposition.
\begin{propo}\label{propo:TK_Corr}
For a $\pi$-reversible transition kernel $K$ we have for any $f \in L^2_{T_*\pi}$ that
% Given the assumptions of Proposition \ref{propo:TK} we have for any $f \in L^2_{T_*\pi}$
\[
	\int_{F\times F} (f(y)-f(y'))^2 \, \nu_{T_*K}(\d y\, \d y')
	=
	\int_{E\times E} (f\circ T(x)-f\circ T(x'))^2 \, \nu_K(\d x\, \d x')
\]
with $\nu$ and $\nu_T$ as defined in \eqref{equ:TK_nu}. 
Thus, for a Markov chains $(X_k)_{k\in\bbN}$ on $E$ with transition kernel $K$ and initial distribution $\pi$ as well as for a Markov chain $(Y_k)_{k\in\bbN}$ on $F$ with transition kernel $T_*K$ and initial distribution $T_* \pi$ we have
\begin{equation}\label{equ:TM_Corr}
	\Corr\left(f\circ T(X_k), f\circ T(X_{k+1}) \right)
	=
	\Corr\left(f(Y_k), f(Y_{k+1}) \right).
\end{equation}
\end{propo}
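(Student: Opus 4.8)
The plan is to read off the integral identity directly from the pushforward relation $\nu_{T_*K} = \boldsymbol T_*\nu_K$ established in Proposition~\ref{propo:TK_nu}, and then to convert it into the correlation statement \eqref{equ:TM_Corr} using the elementary variance--autocorrelation identity already employed in Section~\ref{sec:preliminaries}. First I would record the integrability bookkeeping: by the change-of-variables formula for the pushforward measure, $f\in L^2_{T_*\pi}$ is equivalent to $f\circ T\in L^2_\pi$, and moreover $\Var_{T_*\pi}(f) = \Var_\pi(f\circ T)$. Since $K$ is $\pi$-reversible, $\pi$ is stationary for $K$, so both marginals of $\nu_K(\d x\,\d x') = K(x,\d x')\,\pi(\d x)$ equal $\pi$; hence $(f\circ T(x)-f\circ T(x'))^2$ is $\nu_K$-integrable. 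The same argument applied to $T_*K$ (which is $T_*\pi$-reversible by Proposition~\ref{propo:TK}) shows $(f(y)-f(y'))^2$ is $\nu_{T_*K}$-integrable, so all the integrals below are finite.

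The integral identity then follows by applying the pushforward change-of-variables formula to the function $g(y,y')\coloneqq (f(y)-f(y'))^2$. By Proposition~\ref{propo:TK_nu},
\[
	\int_{F\times F} g\ \d\nu_{T_*K}
	= \int_{F\times F} g\ \d(\boldsymbol T_*\nu_K)
	= \int_{E\times E} (g\circ \boldsymbol T)\ \d\nu_K,
\]
and since $g\circ \boldsymbol T(x,x') = (f\circ T(x)-f\circ T(x'))^2$ by the definition $\boldsymbol T(x,x')=(T(x),T(x'))$, this is exactly the claimed equality.

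For the correlation consequence, I would use that a stationary chain $(X_k)_{k\in\bbN}$ with kernel $K$ and $X_1\sim\pi$ has $(X_k,X_{k+1})\sim\nu_K$, so that $\ev{(f\circ T(X_{k+1})-f\circ T(X_k))^2}$ equals the right-hand side of the identity; likewise $(Y_k,Y_{k+1})\sim\nu_{T_*K}$ gives that $\ev{(f(Y_{k+1})-f(Y_k))^2}$ equals the left-hand side. Hence the two expected squared increments coincide. Combining this with the stationary identity
\[
	\Corr\bigl(g(X_k),g(X_{k+1})\bigr)
	= 1 - \frac{\ev{(g(X_{k+1})-g(X_k))^2}}{2\Var_\pi(g)},
	\qquad g\coloneqq f\circ T,
\]
together with $\Var_\pi(f\circ T)=\Var_{T_*\pi}(f)$ and the analogous identity for $(Y_k)$, yields \eqref{equ:TM_Corr}. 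The computation is essentially routine once Proposition~\ref{propo:TK_nu} is in hand; the only point requiring care is to confirm that each chain is started in its stationary distribution, so that the pairs $(X_k,X_{k+1})$ and $(Y_k,Y_{k+1})$ are genuinely distributed according to $\nu_K$ and $\nu_{T_*K}$, and that the relevant second moments are finite and the two variances match.
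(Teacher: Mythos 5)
Your proposal is correct and follows essentially the same route as the paper: the integral identity is read off from Proposition~\ref{propo:TK_nu} via the pushforward change of variables, and the correlation statement follows from the stationary identity $\Corr(g(X_k),g(X_{k+1})) = 1 - \ev{(g(X_{k+1})-g(X_k))^2}/(2\Var_\pi(g))$ together with $\Var_{T_*\pi}(f)=\Var_\pi(f\circ T)$. Your added remarks on integrability and on the pairs $(X_k,X_{k+1})\sim\nu_K$, $(Y_k,Y_{k+1})\sim\nu_{T_*K}$ are sound bookkeeping that the paper leaves implicit.
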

\begin{proof}
The first statement is an immediate consequence of Proposition~\ref{propo:TK_nu}. 
The second statement is an implication of the first one. We have for any $g \in L^2_\pi$ (by the fact that the Markov chain starts in stationarity) that
\[
	\Corr\left(g(X_k), g(X_{k+1}) \right)
	=
	1 - \frac1{2\Var_\pi(g)} \int_{E\times E} (g(x)-g(x'))^2 \, \nu_K(\d x\, \d x')
\]
and, analogously,
\begin{align*}
	\Corr\left(f(Y_k), f(Y_{k+1}) \right)
	& =
	1 - \frac1{2\Var_{T_*\pi}(f)} \int_{F\times F} (f(y)-f(y'))^2 \, \nu_{T_*K}(\d y\, \d y')\\
	& =
	1 - \frac1{2\Var_\pi(f\circ T)}\int_{E\times E} (f\circ T(x)-f\circ T(x'))^2 \, \nu_K(\d x\, \d x'),
\end{align*}
since $T_*\pi(f) = \pi(f\circ T)$ and $T_*\pi(f^2) = \pi( (f\circ T)^2 )$, thus, $\Var_{T_*\pi}(f) = \Var_{\pi}(f\circ T)$.
\end{proof}

\paragraph{Pushforwards of Metropolis--Hastings transition kernels}

Let $P\colon E\times \mc E \to [0,1]$ be a proposal transition kernel and $\pi$ be the target distribution on $(E,\mc E)$. Similarly as in \eqref{equ:TK_nu} define the measures
\[
    	\nu_P(\d x\, \d x')
	\coloneqq
	P(x, \d x') \, \pi(\d x),
	\qquad
	\nu_P^\top(\d x\, \d x')
	\coloneqq
	\nu_P(\d x'\, \d x)
\]
on $E\times E$ and assume that the density $\frac{\d \nu^\top_P}{\d \nu_P}(x,x')$ exists, i.e., we have absolute continuity $\nu_P^\top \ll \nu_P $. For any $x,x' \in E$ let $\alpha_P(x,x')$ be given by 
\[
    \alpha_P(x,x') := \min\left\{1,\frac{\d \nu^\top_P}{\d \nu_P}(x,x') \right\}.
\]
Then, the Metropolis-Hastings transition kernel $K$ on $(E,\mc E)$ with proposal kernel $P$ and acceptance probability $\alpha_P\colon E\times E \to[0,1]$ takes the form
%In the following, we assume that $K$ is a Metropolis--Hastings transition kernel on $(E,\mc E)$ with proposal kernel $P\colon E\times \mc E \to [0,1]$ and acceptance probability $\alpha \colon E\times E \to[0,1]$, i.e.,
\begin{align}\label{equ:K_MH}
	K(x, \d x')
	=
	\alpha_P(x, x') \ P(x, \d x')
	+ r(x) \delta_x(\d x'),
	\qquad
	r(x) \coloneqq  \int_E (1 -\alpha_P(x,x')) \ P(x, \d x').
\end{align}
It is well known that $K$ is reversible w.r.t. $\pi$, see \cite{Ti98}. For completeness we define the averaged acceptance rate of $K$ as 
\[
\widebar{\alpha}(K) := \int_E \int_E \alpha_P(x,x') P(x,\d x') \pi(\d x).
\]

With $T\colon E\to F$ we consider the \emph{pushforward proposal kernel $T_*P$} on $F$, i.e., 
\[
	T_*P(y, B) \coloneqq \ev{P(X , T^{-1}(B)) \mid T(X)=y}, \qquad X\sim \pi,
\]
where $y\in F$ and $B\in\mc F$. Using the notation and the result of Proposition~\ref{propo:TK_nu} we have 
\[
 \boldsymbol T_*\nu_P(\d y\, \d y')
	= 	T_*P(y, \d y') \, T_*\pi(\d y) 
	=   \nu_{T_*P}(\d y\,\d y')
\]
on $F\times F$. Define
\begin{equation}\label{eq:alpha_T}
    \alpha_{T_*P}(y, y')
    \coloneqq
    \min\left\{ 1, \frac{\d \boldsymbol T_*\nu^\top_P}{\d \boldsymbol T_*\nu_P}(y,y') \right\}, \qquad y,y'\in F,
\end{equation}
whenever it exists. If $\alpha_{T_*P}$ is well-defined, then the MH transition kernel on $F$ with proposal kernel $T_*P$ and acceptance probability $\alpha_{T_*P}$ is reversible w.r.t. $T_*\pi$. We have the following relation between $\alpha_P$ and $\alpha_{T_*P}$.  
 
%We then consider the $T_*\pi$-reversible MH transition kernel resulting from the \emph{pushforward proposal kernel $T_*P$} under the mapping $T$, i.e., 
%\[
%	T_*P(y, B) \coloneqq \ev{P(X , T^{-1}(B)) \mid T(X)=y}, \qquad X\sim \pi,
%\]
%where $y\in F$ and $B\in\mc F$. 
%Regarding the associated acceptance probability $\alpha_T\colon F\times F\to[0,1]$ we formally have
%\begin{equation}\label{eq:alpha_T}
%    \alpha_T(y, y')
%    \coloneqq
%    1\wedge \frac{\d \boldsymbol T_*\nu^\top_P}{\d \boldsymbol T_*\nu_P}(y,y')
%\end{equation}
%where by means of Proposition \ref{propo:TK_nu}
%\[
%	\boldsymbol T_*\nu_P(\d y\, \d y')
%	=
%	T_*P(y, \d y') \, T_*\pi(\d y),
%	\qquad
%	\nu_P(\d y\, \d y')
%	\coloneqq
%	P(y, \d y') \, \pi(\d y),
%	\quad
%	\nu_P^\top(\d y\, \d y')
%	\coloneqq
%	\nu_P(\d y'\, \d y).
%\]

\begin{propo}\label{propo:TK_alpha}
Under the assumption that the density $\frac{\d \nu^\top_P}{\d \nu_P}$ of $\nu^\top_P$ w.r.t. $\nu_P$ on $E\times E$ exists, we have that the
acceptance probability $\alpha_{T_*P}$ of \eqref{eq:alpha_T} is well-defined in the sense that $\frac{\d \boldsymbol T_*\nu^\top_P}{\d \boldsymbol T_*\nu_P}$ exists on $F\times F$. Moreover, for any $A,B \in \mc F$ holds
\[
    \int_{T^{-1}(A)}
    \int_{T^{-1}(B)}
    \alpha_P(x,x')\ 
    P(x, \d x')\ 
    \pi(\d x)
    =
    \int_A
    \int_B
    \alpha_{T_*P}(y, y')\ 
    T_*P(y,\d y')\ 
    T_*\pi(\d y).
\]
\end{propo}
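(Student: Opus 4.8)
The plan is to recast the claimed identity as a statement about a single pushforward measure and then split the proof into the two announced parts. Writing $g \coloneqq \frac{\d \nu_P^\top}{\d \nu_P}$ and recalling $\alpha_P = 1\wedge g$, the left-hand side equals $\boldsymbol T_*\big((1\wedge g)\,\nu_P\big)(A\times B)$, since $(1\wedge g)\,\nu_P = \nu_P\wedge\nu_P^\top$ is carried by $\boldsymbol T$ from $\boldsymbol T^{-1}(A\times B) = T^{-1}(A)\times T^{-1}(B)$ onto $A\times B$. For the right-hand side I would first record two structural facts: by Proposition~\ref{propo:TK_nu} the stationary proposal measure of $T_*P$ is $\nu_{T_*P} = \boldsymbol T_*\nu_P$, and since $\boldsymbol T$ commutes with the coordinate swap $(x,x')\mapsto(x',x)$ one has $\boldsymbol T_*\nu_P^\top = (\boldsymbol T_*\nu_P)^\top$. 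Hence the right-hand side is $\int_{A\times B} \big(1\wedge \tfrac{\d (\boldsymbol T_*\nu_P)^\top}{\d \boldsymbol T_*\nu_P}\big)\,\d \boldsymbol T_*\nu_P$.

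For the well-definedness claim I would argue that absolute continuity is preserved under pushforward: if $\boldsymbol T_*\nu_P(C)=0$ then $\nu_P(\boldsymbol T^{-1}(C))=0$, so $\nu_P^\top(\boldsymbol T^{-1}(C)) = \int_{\boldsymbol T^{-1}(C)} g\,\d\nu_P = 0$ and therefore $\boldsymbol T_*\nu_P^\top(C)=0$. Thus $\boldsymbol T_*\nu_P^\top \ll \boldsymbol T_*\nu_P$ and $\alpha_{T_*P}$ exists. The next key step is to identify the Radon--Nikodym derivative on $F\times F$ as a conditional expectation: the defining property of $\boldsymbol T_*\nu_P$ yields $\frac{\d \boldsymbol T_*\nu_P^\top}{\d \boldsymbol T_*\nu_P}\circ \boldsymbol T = \ev{g \mid \boldsymbol T}$, the conditional expectation of $g$ under the probability measure $\nu_P$ given $\boldsymbol T$.

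With these reductions the identity becomes the single measure equation
\[
\boldsymbol T_*\big((1\wedge g)\,\nu_P\big) = \big(1 \wedge \ev{g\mid \boldsymbol T}\big)\,\boldsymbol T_*\nu_P,
\]
equivalently $\ev{1\wedge g \mid \boldsymbol T} = 1 \wedge \ev{g \mid \boldsymbol T}$ almost everywhere. This is the heart of the matter and the step I expect to be the \emph{main obstacle}: the minimum does not commute with the conditional expectation in general. By conditional Jensen applied to the concave map $t\mapsto 1\wedge t$, only the inequality $\ev{1\wedge g\mid\boldsymbol T}\le 1\wedge\ev{g\mid\boldsymbol T}$ is automatic, and equality forces $g-1$ to have $\nu_P$-almost surely constant sign along each fibre $T^{-1}(y)\times T^{-1}(y')$.

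I would resolve this by exploiting that $T$ is a bijection, which is precisely the situation in every application of the result, namely the affine maps $T_n(x)=x_n+C_n^{1/2}x$ of Propositions~\ref{propo:HRW_Gauss} and~\ref{propo:pCN_Gauss}. For bijective $T$ the fibres are singletons, the conditional expectation collapses to composition with $\boldsymbol T^{-1}$, and the derivative is simply $\frac{\d\boldsymbol T_*\nu_P^\top}{\d\boldsymbol T_*\nu_P} = g\circ\boldsymbol T^{-1}$; consequently $\alpha_{T_*P} = \alpha_P\circ\boldsymbol T^{-1}$ and the desired equality reduces to the change of variables $\int_{A\times B}\alpha_P\circ\boldsymbol T^{-1}\,\d\boldsymbol T_*\nu_P = \int_{\boldsymbol T^{-1}(A\times B)}\alpha_P\,\d\nu_P$. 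The genuinely non-injective case would additionally require the fibre sign condition above, without which the interchange -- and hence the identity -- can fail; I would therefore either state the result under bijectivity or impose that condition explicitly.
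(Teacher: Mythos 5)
Your treatment of the first claim is exactly the paper's: absolute continuity is transported by the pushforward, so $\frac{\d \boldsymbol T_*\nu^\top_P}{\d \boldsymbol T_*\nu_P}$ exists. Where you diverge is in identifying this derivative, and your caution is justified. The paper's proof computes $\int_{\boldsymbol T^{-1}(A\times B)} h\,\d\nu_P = \boldsymbol T_*\nu_P^\top(A\times B) = \int_{\boldsymbol T^{-1}(A\times B)} \widebar h\circ\boldsymbol T\,\d\nu_P$ with $h = \frac{\d\nu_P^\top}{\d\nu_P}$ and $\widebar h = \frac{\d \boldsymbol T_*\nu^\top_P}{\d \boldsymbol T_*\nu_P}$, and from this equality of integrals over the sets $\boldsymbol T^{-1}(A\times B)$ it concludes that $h = \widebar h\circ\boldsymbol T$ holds $\nu_P$-almost surely; the remaining step is then the same change of variables you perform in the bijective case, using $1\wedge h = (1\wedge\widebar h)\circ\boldsymbol T$. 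As you observe, that equality of integrals only identifies $\widebar h\circ\boldsymbol T$ as the conditional expectation $\ev{h\mid\sigma(\boldsymbol T)}$ under $\nu_P$; the pointwise identity requires $h$ to be $\sigma(\boldsymbol T)$-measurable, which is automatic for injective $T$ but not in general. So in the bijective case your argument and the paper's coincide, and your conditional-expectation formulation is the correct general statement of what the paper's display actually proves.

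Your diagnosis of the non-injective case points at a genuine gap in the paper's own proof, not in yours. The interchange $\ev{1\wedge h\mid\boldsymbol T} = 1\wedge\ev{h\mid\boldsymbol T}$ can indeed fail: take $E=\{1,2\}$ with $\pi$ uniform, $P(1,\cdot)=(1/2,1/2)$, $P(2,\cdot)=(1/4,3/4)$, and $T$ the constant map onto a one-point space $F$. Then the left-hand side of the asserted identity with $A=B=F$ equals $\nu_P\wedge\nu_P^\top(E\times E)=7/8$, while $\boldsymbol T_*\nu_P=\boldsymbol T_*\nu_P^\top$ forces $\alpha_{T_*P}\equiv 1$ and the right-hand side equals $1$. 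Hence the proposition, and with it \eqref{eq:average_acc_coincide}, fails for general measurable surjective $T$, and your proposal to assume bijectivity (or a constant-sign condition for $h-1$ on the fibres of $\boldsymbol T$, which is essentially the factorization hypothesis isolated in Proposition~\ref{propo:TK_MH}) is a necessary repair rather than a limitation of your proof. Since every application in the paper uses the affine bijections $T_n$ of Propositions~\ref{propo:HRW_Gauss} and~\ref{propo:pCN_Gauss}, nothing downstream is affected, but the statement should carry the injectivity (or factorization) hypothesis you propose.
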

\begin{proof}
By the theorem of Radon-Nicodym the existence of $h:=\frac{\d \nu^\top_P}{\d \nu_P}$ is equivalent to the absolute continuity
$\nu_P^\top \ll \nu_P$ on $E\times E$. This implies the absolute continuity $\boldsymbol T_*\nu_P^\top \ll \boldsymbol T_*\nu_P$ on $F\times F$. Namely, by $\nu_P^\top \ll \nu_P$ we have for any $A\in\mc F \otimes \mc F$ that
%Now, it is straightforward to verify, that then also the corresponding pushforward measures $\boldsymbol T_*\nu_P, \boldsymbol T_*\nu^\top_P$ inherit this relation, i.e., $\boldsymbol T_*\nu_P^\top \ll \boldsymbol T_*\nu_P$:
\[
    0 = \boldsymbol T_*\nu_P(A)
    = \nu_P(\boldsymbol T^{-1}(A))
    \quad
    \Longrightarrow
    \quad
    0
    = \nu_P^\top(\boldsymbol T^{-1}(A))
    = \boldsymbol T_*\nu^\top_P(A).
\]
Again by the theorem of Radon-Nicodym this yields the existence of $\overline{h} \coloneqq \frac{\d \boldsymbol T_*\nu^\top_P}{\d \boldsymbol T_*\nu_P}$, such that $\alpha_{T_*P}$ as given in \eqref{eq:alpha_T} is well-defined.
% This yields the well-definedness of $\alpha_T$ as given in \eqref{eq:alpha_T}.
% Let now $\widebar f \coloneqq \frac{\d \boldsymbol T_*\nu^\top_P}{\d \boldsymbol T_*\nu_P}$ and $f \coloneqq \frac{\d \nu^\top_P}{\d\nu_P}$.
Furthermore, for $A,B \in \mc F$ a straightforward calculation shows 
\begin{align*}
    \int_{T^{-1}(A) \times T^{-1}(B)}
    h(x,x') 
    \nu_P(\d x \ \d x')
    & =
    \boldsymbol T_*\nu^\top_P(A\times B)
    =
    \int_{A \times B}
    \widebar h(y,y') 
    \boldsymbol T_*\nu_P(\d y \ \d y')\\
    & = 
    \int_{T^{-1}(A) \times T^{-1}(B)}
    \widebar h(T(x),T(x')) 
    \nu_P(\d x \ \d x'),
\end{align*}
which yields that $h = \widebar{h}\circ \boldsymbol T$ almost surely w.r.t. $\nu_P$. Using this we obtain
\begin{align*}
    \int_{T^{-1}(A)}
    \int_{T^{-1}(B)}
    \alpha_P(x,x')\ 
    P(x, \d x')\ 
    \pi(\d x)
    & = \int_{T^{-1}(A)}
    \int_{T^{-1}(B)}
    \min\{1, h(x,x') \} \nu_P(\d x \ \d x')\\
    & =
    \int_{T^{-1}(A) \times T^{-1}(B)}
    \min\{1, \widebar h(T(x),T(x'))\}
    \ \nu_P(\d x \ \d x')\\
    & =
    \int_{A \times B}
    \min\{1, \widebar h(y,y')\}
    \ T_*\nu_P(\d y \ \d y')\\
    & =
    \int_A
    \int_B
    \alpha_{T_*P}(y, y')\ 
    T_*P(y,\d y')\ 
    T_*\pi(\d y).
    \qedhere
\end{align*}
\end{proof}

We now show that the pushforward MH transition kernel $T_*K$ coiincides with the $T_*\pi$-reversible MH transition kernel induced by the pushforward proposal kernel $T_*P$ with acceptance probability $\alpha_{T_*P}$.

\begin{lem}\label{lem:TK_alpha}
For the pushforward transition kernel $T_*K$ of the $\pi$-reversible MH transition kernel $K$ (as defined in \eqref{equ:K_MH}) we have
\begin{equation}
    \label{eq: repres_pushforw_MH}
    T_*K(y, \d y')
    =
    \alpha_{T_*P}(y, y')\ 
    T_*P(y, \d y')
    +
    s(y)  \delta_y(\d y')
\end{equation}
with $s(y) \coloneqq  \int_F (1-\alpha_{T_*P}(y, y')) \ T_*P(y, \d y')$ and $\alpha_{T_*P}$ as given in \eqref{eq:alpha_T}.
In particular, the averaged acceptance rates of $K$ and $T_*K$ coincide, that is,
\begin{equation}
\label{eq:average_acc_coincide}
    \widebar{\alpha}(K) = \widebar{\alpha}(T_*K).
\end{equation}
\end{lem}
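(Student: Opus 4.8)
The plan is to decompose the MH kernel $K$ in \eqref{equ:K_MH} into its \emph{acceptance part} $\alpha_P(x,x')\,P(x,\d x')$ and its \emph{rejection part} $r(x)\,\delta_x(\d x')$, to push each part forward under $T$ separately, and to identify the results using Proposition~\ref{propo:TK_alpha}. Fixing $B\in\mc F$ and writing $K(x,T^{-1}(B))$ explicitly, I would first observe that the rejection part contributes $r(x)\,\delta_x(T^{-1}(B)) = r(x)\,\boldsymbol 1_B(T(x))$, so that under the conditioning $T(X)=y$ the indicator becomes the constant $\boldsymbol 1_B(y)=\delta_y(B)$ and the remaining conditional expectation is $s(y):=\ev{r(X)\mid T(X)=y}$, which will later be shown to agree with the $s(y)$ in the statement.

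The heart of the argument is identifying the pushed-forward acceptance part. I would test against an arbitrary set $A\in\mc F$ and use the tower property: integrating the corresponding conditional expectation against $T_*\pi(\d y)$ over $A$ turns, since $\boldsymbol 1_A(T(X))=\boldsymbol 1_{T^{-1}(A)}(X)$, into the unconditional integral $\int_{T^{-1}(A)}\int_{T^{-1}(B)}\alpha_P(x,x')\,P(x,\d x')\,\pi(\d x)$. By Proposition~\ref{propo:TK_alpha} this equals $\int_A\int_B\alpha_{T_*P}(y,y')\,T_*P(y,\d y')\,T_*\pi(\d y)$. As this holds for every $A\in\mc F$, I conclude that for $T_*\pi$-almost every $y$ the acceptance part of $T_*K(y,\cdot)$ coincides with the measure $B\mapsto\int_B\alpha_{T_*P}(y,y')\,T_*P(y,\d y')$, which together with the rejection term yields the representation \eqref{eq: repres_pushforw_MH}.

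It then remains to verify that $s(y)$ has the claimed form, and to deduce \eqref{eq:average_acc_coincide}. Splitting $r(X)=\int_E(1-\alpha_P(X,x'))\,P(X,\d x')=1-\int_E\alpha_P(X,x')\,P(X,\d x')$ via $P(X,E)=1$, and applying the identity just derived with $B=F$ (so that $T^{-1}(B)=E$ by surjectivity of $T$), gives $s(y)=1-\int_F\alpha_{T_*P}(y,y')\,T_*P(y,\d y')=\int_F(1-\alpha_{T_*P}(y,y'))\,T_*P(y,\d y')$, as desired. Finally, the equality $\widebar\alpha(K)=\widebar\alpha(T_*K)$ follows by applying Proposition~\ref{propo:TK_alpha} once more with $A=B=F$, since then $T^{-1}(A)=T^{-1}(B)=E$ and both sides of that identity reduce precisely to the respective averaged acceptance rates.

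The main obstacle will be the rigorous handling of the conditional expectations: one must justify that equality of the two sides after integration against $T_*\pi$ over all test sets $A$ upgrades to a $T_*\pi$-almost-everywhere identity of the kernels, which is exactly where the regular-version construction from the Remark following \eqref{equ:TK} is needed. Everything else is bookkeeping of the Dirac pushforward $\delta_x(T^{-1}(B))=\boldsymbol 1_B(T(x))$ and of the normalization $P(x,E)=1$, both of which are routine.
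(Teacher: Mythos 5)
Your proposal is correct and follows essentially the same route as the paper: both decompose $K$ into acceptance and rejection parts, identify the pushed-forward acceptance part via the defining property of conditional expectation tested against arbitrary $A\in\mc F$ together with Proposition~\ref{propo:TK_alpha}, and obtain \eqref{eq:average_acc_coincide} from that proposition with $A=B=F$. Your explicit verification of the form of $s(y)$ via the case $B=F$ and $T_*P(y,F)=1$ is a welcome expansion of what the paper dismisses as ``standard arguments.''
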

\begin{proof}
For any fixed $A\in \mc F$ we 
show that almost surely
\begin{equation} \label{eq:todo}
    \ev{ \int_{T^{-1}(A)} \alpha_P(X,x')\ P(X, \d x') \mid T(X)} = \int_A \alpha_{T_*P}(T(X), y')\ T_*P(T(X), \d y')
\end{equation}
where $X\sim \pi$.
This then implies by standard arguments that also 
\[
    \ev{ r(X) \delta_X(T^{-1}(A)) \mid T(X)}
    =
    s(T(X)) \ \delta_{T(X)}(A)
\]
holds almost surely and, hence, \eqref{eq: repres_pushforw_MH} follows.
In order to verify \eqref{eq:todo}, we check the definition of the conditional expectation and exploit Proposition~\ref{propo:TK_alpha}: For arbitrary $B\in \mc F$ we have
\begin{equation}  \label{eq:within_proof}
    \int_{T^{-1}(B)}
    \int_{T^{-1}(A)}
    \alpha_P(x,x')\ 
    P(x, \d x')\ 
    \pi(\d x)
    =
    \int_B
    \int_A
    \alpha_{T_*P}(y, y')\ 
    T_*P(y,\d y')\ 
    T_*\pi(\d y)
\end{equation}
and, by the fact that $X \sim \pi$, we obtain for the left hand-side
\begin{align*}
    \int_{T^{-1}(B)}
    \int_{T^{-1}(A)}
    \alpha_P(x,x')\ 
    P(x, \d x')\ 
    \pi(\d x)
    & =
    \mathbb{E}\Big [\boldsymbol 1_{B}(T(X)) 
    \mathbb{E}\Big [ \int_{T^{-1}(A)} \alpha_P(x,x')\ P(x, \d x') \mid T(X)\Big] \Big].
\end{align*}
Writing the right hand-side of \eqref{eq:within_proof} also in terms of an expectation then yields
\[
    \mathbb{E}\Big [ \boldsymbol 1_{B}(T(X)) 
    \mathbb{E}\Big [ \int_{T^{-1}(A)} \alpha_P(x,x')\ P(x, \d x') \mid T(X) \Big] \Big]
    =
    \mathbb{E}\Big [ \boldsymbol 1_{B}(T(X))
    \int_A
    \alpha_{T_*P}(T(X), y')\ 
    T_*P(T(X),\d y')
    \Big].
\]
Hence, since $B$, and, thus, $T^{-1}(B) \in \sigma(T)$ was chosen arbitrarily, by taking the obvious measurability of the right hand-side of \eqref{eq:todo} into account, we obtain the desired conditional expectation representation of \eqref{eq:todo}.

The statement of \eqref{eq:average_acc_coincide} is now a direct consequence of Proposition~\ref{propo:TK_alpha} for $A=B=F$. 
\end{proof}

% We finally consider special cases:
If the mapping $T\colon E\to F$ satisfies further conditions, then explicit representations of $\alpha_{T_*P}$ can be determined.

\begin{propo}\label{propo:TK_MH}
Let $K\colon E\times \mc E\to[0,1]$ be as in \eqref{equ:K_MH} and $T\colon E\to F$ be a measurable function.
If there exists a measurable function $\beta \colon F \times F \to [0,1]$ such that
\[
	\beta(T(x), T(x') ) = \alpha_P(x,x') \qquad \forall x,x' \in E,
\]
then $\alpha_{T_*P} (y,y') = \beta(y,y')$ for $\nu_{T_*P}$-almost every $(y,y')\in F\times F$. 
In particular, if $T$ is bijective, then $\alpha_{T_*P}(y,y') = \alpha_P(T^{-1}(y), T^{-1}(y'))$.
\end{propo}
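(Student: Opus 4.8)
The plan is to reduce the claim to the almost-sure identity between the Radon--Nikodym densities already established in Proposition~\ref{propo:TK_alpha}, and then to transport that identity from $E\times E$ down to $F\times F$ via the change-of-variables formula for the pushforward measure $\boldsymbol T_*\nu_P$.

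First I would recall that, by definition and by Proposition~\ref{propo:TK_alpha}, both acceptance probabilities are of the form $\alpha_P = \min\{1, h\}$ with $h \coloneqq \frac{\d \nu_P^\top}{\d \nu_P}$ on $E\times E$, and $\alpha_{T_*P} = \min\{1, \widebar h\}$ with $\widebar h \coloneqq \frac{\d \boldsymbol T_*\nu_P^\top}{\d \boldsymbol T_*\nu_P}$ on $F\times F$, the latter being well-defined by the same proposition. The key input is the relation $h = \widebar h \circ \boldsymbol T$ holding $\nu_P$-almost surely, which is exactly what the proof of Proposition~\ref{propo:TK_alpha} delivers. Since $t\mapsto \min\{1,t\}$ is continuous, this gives at once
\[
	\alpha_P(x,x') = \min\{1, h(x,x')\} = \min\{1, \widebar h(\boldsymbol T(x,x'))\} = \alpha_{T_*P}(\boldsymbol T(x,x'))
\]
for $\nu_P$-almost every $(x,x')\in E\times E$. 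Combining this with the hypothesis $\beta(\boldsymbol T(x,x')) = \beta(T(x),T(x')) = \alpha_P(x,x')$, which holds for \emph{every} $(x,x')$, I obtain $\beta(\boldsymbol T(x,x')) = \alpha_{T_*P}(\boldsymbol T(x,x'))$ for $\nu_P$-almost every $(x,x')$.

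Next I would transfer this $\nu_P$-almost-sure equality on $E\times E$ into a $\nu_{T_*P}$-almost-sure equality on $F\times F$. Here the crucial point, and the only genuinely subtle step, is that $\boldsymbol T$ need not be injective, so one cannot simply invoke the equality on the image set directly; instead I integrate. Setting $g(y,y') \coloneqq |\beta(y,y') - \alpha_{T_*P}(y,y')| \geq 0$ and recalling from Proposition~\ref{propo:TK_nu} that $\nu_{T_*P} = \boldsymbol T_*\nu_P$, the change-of-variables formula for pushforward measures yields
\[
	\int_{F\times F} g \ \d \nu_{T_*P}
	=
	\int_{F\times F} g \ \d \boldsymbol T_*\nu_P
	=
	\int_{E\times E} (g\circ \boldsymbol T) \ \d \nu_P
	= 0,
\]
because $g\circ \boldsymbol T = |\beta\circ\boldsymbol T - \alpha_{T_*P}\circ\boldsymbol T|$ vanishes $\nu_P$-almost everywhere by the previous step. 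As $g\geq 0$, this forces $g = 0$, i.e.\ $\alpha_{T_*P} = \beta$, for $\nu_{T_*P}$-almost every $(y,y')$, which is the assertion.

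Finally, for the bijective case I would apply the general statement with the explicit choice $\beta(y,y') \coloneqq \alpha_P(T^{-1}(y), T^{-1}(y'))$. This $\beta$ is measurable and satisfies $\beta(T(x),T(x')) = \alpha_P(x,x')$ for all $x,x'$, so the hypothesis holds and the conclusion gives $\alpha_{T_*P}(y,y') = \alpha_P(T^{-1}(y),T^{-1}(y'))$ for $\nu_{T_*P}$-almost every $(y,y')$. I expect the main obstacle to be precisely the non-injectivity issue in the transfer step: one must resist the temptation to invert $\boldsymbol T$ pointwise and instead argue via integration of the nonnegative defect $g$ against $\nu_P$, using the identity $\nu_{T_*P} = \boldsymbol T_*\nu_P$.
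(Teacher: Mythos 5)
Your proof is correct and follows essentially the same route as the paper: both arguments rest on Proposition~\ref{propo:TK_alpha} together with the change-of-variables identity $\nu_{T_*P}=\boldsymbol T_*\nu_P$, the only cosmetic difference being that the paper compares integrals of $\alpha_P$, $\beta$ and $\alpha_{T_*P}$ over measurable rectangles $A\times B$ and invokes uniqueness of Radon--Nikodym densities, whereas you push forward the nonnegative defect $|\beta-\alpha_{T_*P}|$ directly after extracting the a.e.\ identity $h=\widebar h\circ\boldsymbol T$ from the proof of that proposition. Both are valid, and your handling of the non-injectivity of $\boldsymbol T$ via integration is exactly the right precaution.
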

\begin{proof}
We have
\[
    \int_{T^{-1}(A)}
    \int_{T^{-1}(B)}
    \alpha_P(x,x')\ 
    P(x, \d x')\ 
    \pi(\d x)
    =
    \int_A
    \int_B
    \beta(y, y')\ 
    T_*P(y,\d y')\ 
    T_*\pi(\d y).
\]
Thus, by Proposition~\ref{propo:TK_alpha} and the theorem of Radon-Nicodym the first statement follows. The second statement is obvious.
\end{proof}
%Note, that for bijective and measurable mappings $T$ the assumptions of Proposition \ref{propo:TK_MH} are satisfied with $\beta(y,y') \coloneqq \alpha(T^{-1}(y),T^{-1}(y'))$. 
%In fact, in the case of a bijective $T$ the $T_*\pi$-reversible MH transition kernel resulting from the pushforward proposal kernel $T_*P$ has to use the acceptance probability
%\[
%	\beta(y,y') \coloneqq \alpha(T^{-1}(y),T^{-1}(y')),
%\]
%because for the measure $T_*P(y, \d y')\, T_*\pi(\d y)$ on $F\times F$ we know by Proposition \ref{propo:TK_nu} that it is the pushforward measure of $P(x, \d x')\, \pi(\d x)$ under the mapping $\boldsymbol T(x,x') \coloneqq (T(x),T(x'))$ and the density between pushforward measures is given by the density of the original measures composed with the inverse of the pushforward map.

%Concerning the average acceptance rates of the MH transition kernels $K$ and $T_*K$ we have the equality
%\begin{equation}\label{equ:TK_AAR}
%	\int_{E\times E} \alpha(x,x')\, P(x, \d x') \, \pi(\d x)
%	=
%	\int_{F\times F} \beta(x,x')\, T_*P(x, \d x') \, T_*\pi(\d x)
%\end{equation}
%by means of Proposition \ref{propo:TK_nu}, i.e., because
%\[
%	\beta(T(x),T(x'))\, P(x, \d x') \, \pi(\d x) = \widebar\alpha(y,y')\, T_*P(y, \d y') \, T_*\pi(\d y).
%\]

\paragraph{Spectral gaps of pushforward transition kernels}
We consider the \emph{pushforward Markov operator} $\mr K_T \colon L^2_{T_*\pi} \to L^2_{T_*\pi}$ associated to $T_*K$ defined by
\[
	{\rm K}_T f(y)
	\coloneqq
	\int_F
	f(y')\ T_*K(y, \d y'),
	\qquad
	f \in L^2_{T_*\pi}.
\]
For the action of $\mr K_T$ onto an $f \in L^2_{T_*\pi}$ we have the following result:

\begin{propo} \label{prop:TM_op_norm}
For ${\rm K}_T$ defined as above and any $f \in L^2_{T_*\pi}$ we have
\begin{equation}\label{equ:TM_op_f}
	\mr K_Tf(y)
	=
	\ev{\mr K (f\circ T)(X) \ | \ T(X)=y}
%	\qquad
%	y \in G,
\end{equation}
for $T_*\pi$-almost every $y \in F$
and
\begin{equation}\label{equ:TM_op_norm}
	\|\mr K_T (f)\|_{T_*\pi}
	\leq
	\|\mr K(f \circ T)\|_{\pi}.
\end{equation}
\end{propo}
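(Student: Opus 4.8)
The plan is to establish the pointwise identity \eqref{equ:TM_op_f} first and then deduce the operator norm bound \eqref{equ:TM_op_norm} from it via the $L^2$-contraction property of conditional expectation. Throughout I write $g \coloneqq \mr K(f\circ T)$ and use the regular version $\kappa(y,\cdot)$ of the conditional distribution of $X\sim\pi$ given $T(X)=y$, which exists by the Remark following \eqref{equ:TK} and satisfies $T_*K(y,B) = \int_E K(x, T^{-1}(B))\, \kappa(y,\d x)$.

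First I would verify \eqref{equ:TM_op_f} for indicator functions $f = \boldsymbol 1_B$, $B \in \mc F$. The key algebraic observation is that
\[
    \mr K(\boldsymbol 1_B \circ T)(x) = \int_E \boldsymbol 1_B(T(x'))\, K(x, \d x') = K(x, T^{-1}(B)),
\]
so that integrating against $\kappa(y,\cdot)$ gives
\[
    \mr K_T \boldsymbol 1_B(y) = T_*K(y, B) = \int_E K(x, T^{-1}(B))\, \kappa(y, \d x) = \ev{\mr K(\boldsymbol 1_B \circ T)(X) \mid T(X) = y}.
\]
By linearity this extends to simple functions, and by monotone convergence to all nonnegative measurable $f$; splitting a general $f \in L^2_{T_*\pi}$ into positive and negative parts then yields \eqref{equ:TM_op_f} for every $f \in L^2_{T_*\pi}$. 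Here one uses $\|f\circ T\|_\pi = \|f\|_{T_*\pi}$ together with the fact that $\mr K$ is an $L^2_\pi$-contraction, which guarantees $g = \mr K(f\circ T) \in L^2_\pi$ so that the conditioning is well defined.

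For \eqref{equ:TM_op_norm} I would read \eqref{equ:TM_op_f} as the statement that $y \mapsto \mr K_T f(y)$ is a version of $\ev{g(X) \mid T(X) = y}$, so that $(\mr K_T f)(T(X)) = \ev{g(X)\mid \sigma(T(X))}$ almost surely. The conditional Jensen inequality (equivalently, that conditional expectation is the orthogonal projection onto the $\sigma(T(X))$-measurable functions and hence an $L^2$-contraction) then gives
\[
    \ev{\big|(\mr K_T f)(T(X))\big|^2} = \ev{\big|\ev{g(X)\mid \sigma(T(X))}\big|^2} \leq \ev{|g(X)|^2}.
\]
Since $\ev{|(\mr K_T f)(T(X))|^2} = \int_F |\mr K_T f(y)|^2\, T_*\pi(\d y) = \|\mr K_T f\|_{T_*\pi}^2$ by the change-of-variables identity for the pushforward measure, and $\ev{|g(X)|^2} = \|\mr K(f\circ T)\|_\pi^2$, taking square roots yields \eqref{equ:TM_op_norm}.

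I expect the only delicate point to be the rigorous passage from indicators to general $f$ in \eqref{equ:TM_op_f}, since it requires interchanging the $F$-integration with the conditional expectation; this is handled cleanly by the monotone-class argument above once the regular conditional distribution $\kappa$ is fixed. Everything afterward is a direct application of the $L^2$-contraction property of conditional expectation, so no further obstacle is anticipated.
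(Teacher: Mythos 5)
Your proposal is correct and follows essentially the same route as the paper: the identity \eqref{equ:TM_op_f} is the pushforward change-of-variables computation (which you merely make more explicit via the indicator/monotone-class argument applied to the regular conditional distribution $\kappa$), and \eqref{equ:TM_op_norm} is the conditional Jensen inequality combined with the tower property, i.e., precisely the $L^2$-contraction of conditional expectation that the paper invokes.
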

\begin{proof}
The first statement follows by
\begin{align*}
	\mr K_T f(y)
	& = \int_F f(y') \ T_*K(y, \d y')
	= \ev{ \int_F f(y') \ K(X, T^{-1}(\d y'))\mid T(X)=y}\\
	& = \ev{ \int_E f(T(x')) \ K(X, \d x')\mid T(X)=y}
	= \ev{\mr K (f\circ T)(X) \ | \ T(X)=y}.
\end{align*}
The second statement follows by an application of Jensen's inequality,
\[
	\left| \mr K_T f(y) \right|^2
	=
	\ev{\mr K (f\circ T) \ | \ T(X)=y}^2
	\leq
	\ev{|\mr K (f\circ T)(X)|^2 \ | \ T(X)=y}
\]
which yields
\begin{align*}
	\|\mr K_T (f)\|^2_{T_*\pi}
	& = \int_F \left| \mr K_T f(y) \right|^2\ T_*\pi(\d y)
	\leq 	\int_{F} \ev{|\mr K (f\circ T)(X)|^2 \ | \ T(X)=y}\ T_*\pi(\d y)\\
	& = \ev{\ev{|\mr K (f\circ T)(X)|^2 \ | \ T(X)}}
	= \ev{|\mr K (f\circ T)(X)|^2}\\
	& = \|\mr K(f\circ T)\|^2_{\pi}.
\end{align*}
\end{proof}

We now show that 
\begin{equation}\label{equ:TM_gap}
	\gapm{T_*\pi}{T_*K} \geq \gapm{\pi}{K}.
\end{equation}
To this end, we first state the following.

\begin{propo}  \label{prop:subsetballs}
For probability measure $\pi$ on $E$ we define 
\[
	B_\pi \coloneqq \left\{f \in L^2_\pi \colon \pi(f) = 0 \text{ and } \pi(f^2) = 1\right\}.
\]
Then, for $T\colon E \to F$ we have that
\[
	\{f \circ T\colon f \in B_{T_*\pi}\} \subseteq B_{\pi},
\]
where $B_{T_*\pi}\subset L^2_{T_*\pi}$ is defined correspondingly to $B_\pi$.
\end{propo}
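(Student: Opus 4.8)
The plan is to reduce the whole statement to the change-of-variables (pushforward) identity
\[
	\int_F g\ \d(T_*\pi) = \int_E (g\circ T)\ \d\pi,
\]
which holds for every nonnegative measurable $g\colon F\to\bbR$, and for every $g\in L^1_{T_*\pi}$. Concretely, I would take an arbitrary $f\in B_{T_*\pi}$, so that by definition $f\in L^2_{T_*\pi}$ with $T_*\pi(f)=0$ and $T_*\pi(f^2)=1$, and show that $f\circ T\in B_{\pi}$. Since $B_{T_*\pi}$ is exhausted this way, the claimed inclusion then follows.

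First I would note that $f\circ T$ is measurable, being the composition of the measurable map $T\colon E\to F$ with the measurable function $f$, so it is a well-defined real-valued measurable function on $E$. Next I would apply the identity to the nonnegative function $g=f^2$ and use the pointwise algebraic equality $(f\circ T)^2=f^2\circ T$ to obtain
\[
	\pi\big((f\circ T)^2\big) = \pi(f^2\circ T) = T_*\pi(f^2) = 1 < \infty,
\]
which simultaneously establishes $f\circ T\in L^2_\pi$ and fixes its second $\pi$-moment at $1$. Applying the identity once more to $g=f$, which is legitimate because $f\in L^2_{T_*\pi}\subseteq L^1_{T_*\pi}$ for the probability measure $T_*\pi$, gives
\[
	\pi(f\circ T) = T_*\pi(f) = 0.
\]
These two equalities say precisely that $f\circ T$ has vanishing $\pi$-mean and unit $\pi$-second moment, i.e.\ $f\circ T\in B_\pi$.

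I do not expect a genuine obstacle in this argument; the only points requiring a little care are that the pushforward formula is applied first to the nonnegative function $f^2$ (which is what guarantees the integrability needed to then apply it to $f$ itself), and that the reduction $(f\circ T)^2=f^2\circ T$ is used to transfer the second-moment normalization. Everything else is an immediate consequence of $T_*\pi$ being the pushforward of $\pi$ under $T$.
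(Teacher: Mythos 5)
Your proof is correct and is essentially the paper's own argument: both rest on the change-of-variables identity $\pi(g\circ T)=T_*\pi(g)$ applied to $g=f$ and $g=f^2$. The extra care you take (establishing $f\circ T\in L^2_\pi$ via the nonnegative function $f^2$ before integrating $f\circ T$ itself) is a minor tidying of the same computation, not a different route.
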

\begin{proof}
The statement follows by
\[
	\pi(f \circ T)
	=
	\int_E f(T(x))\ \pi(\d x)
	=
	\int_F f( y ) T_*\pi(\d y)
\]
and
\[
	\pi( (f \circ T)^2 )
	=
	\int_E f^2(T(x))\ \pi(\d x)
	=
	\int_F f^2( y ) T_*\pi(\d y).\qedhere
\]
\end{proof}

\begin{theo}\label{theo:TK_gap}
Let $K\colon E \times \mc E\to [0,1]$ be a $\pi$-reversible transition kernel and $T\colon E \to F$ be a measurable mapping as above.
Then, \eqref{equ:TM_gap} holds for the Markov operators associated to $K$ and $T_*K$, respectively.
If $T$ is, furthermore, bijective, then we have equality in \eqref{equ:TM_gap}.
\end{theo}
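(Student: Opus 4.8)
The plan is to reformulate the spectral gap \eqref{eq:gap_K} as an operator norm over the ``unit sphere of centered functions'' $B_\pi$ from Proposition~\ref{prop:subsetballs}, and then to simply chain the two auxiliary results, Proposition~\ref{prop:TM_op_norm} and Proposition~\ref{prop:subsetballs}. First I would observe that for any $0\neq f\in L^2_\pi$ the splitting $f = \pi(f) + (f-\pi(f))$ together with stationarity ($\pi(\mr K g)=\pi(g)$, hence $\mr K$ fixes constants and maps centered functions to centered functions) gives $\mr K f - \pi(f) = \mr K(f-\pi(f))$, while $\|f\|_{L^2_\pi}^2 = \pi(f)^2 + \|f-\pi(f)\|_{L^2_\pi}^2 \geq \|f-\pi(f)\|_{L^2_\pi}^2$. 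Thus the supremum in \eqref{eq:gap_K} is attained along centered functions, and after normalization it becomes a supremum over $B_\pi$, yielding
\[
1 - \gapm{\pi}{K} = \sup_{f\in B_\pi} \|\mr K f\|_{\pi},
\qquad
1 - \gapm{T_*\pi}{T_*K} = \sup_{f\in B_{T_*\pi}} \|\mr K_T f\|_{T_*\pi}.
\]

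With this reformulation in hand the inequality \eqref{equ:TM_gap} is immediate. For arbitrary $f\in B_{T_*\pi}$, Proposition~\ref{prop:TM_op_norm} gives $\|\mr K_T f\|_{T_*\pi}\leq \|\mr K(f\circ T)\|_\pi$, while Proposition~\ref{prop:subsetballs} ensures $f\circ T\in B_\pi$. Chaining these,
\[
\|\mr K_T f\|_{T_*\pi} \leq \|\mr K(f\circ T)\|_\pi \leq \sup_{g\in B_\pi}\|\mr K g\|_\pi = 1 - \gapm{\pi}{K},
\]
and taking the supremum over $f\in B_{T_*\pi}$ on the left turns it into $1-\gapm{T_*\pi}{T_*K}$, which is exactly \eqref{equ:TM_gap}.

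For the bijective case I would upgrade both auxiliary inequalities to equalities. When $T$ is bijective the conditioning in \eqref{equ:TM_op_f} is trivial, since $T(X)=y$ pins down $X=T^{-1}(y)$; hence $\mr K_T f\circ T = \mr K(f\circ T)$ pointwise, and the pushforward change of variables $\int_F \phi\, \d T_*\pi = \int_E \phi\circ T\, \d\pi$ applied to $\phi = |\mr K_T f|^2$ gives $\|\mr K_T f\|_{T_*\pi} = \|\mr K(f\circ T)\|_\pi$, i.e.\ equality in Proposition~\ref{prop:TM_op_norm}. Moreover $f\mapsto f\circ T$ is then a bijection of $B_{T_*\pi}$ onto $B_\pi$, with inverse $g\mapsto g\circ T^{-1}$ (well-defined and measurable because a measurable bijection of a Borel space has a measurable inverse), so the inclusion in Proposition~\ref{prop:subsetballs} becomes an equality of sets. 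Consequently the two suprema coincide and \eqref{equ:TM_gap} holds with equality.

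The main obstacle I anticipate is not the bookkeeping of the inequality, which is routine once the auxiliary propositions are available, but rather the two measure-theoretic justifications: first, that the supremum defining the gap may legitimately be restricted to the centered unit sphere $B_\pi$, and second, in the bijective case, that $T^{-1}$ is well-defined and measurable so that $g\mapsto g\circ T^{-1}$ maps $B_\pi$ back into $B_{T_*\pi}$.
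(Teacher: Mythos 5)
Your proof is correct and for the main inequality follows exactly the paper's route: rewrite the gap as the operator norm over the centered unit sphere $B_\pi$ and chain Proposition~\ref{prop:TM_op_norm} with Proposition~\ref{prop:subsetballs} (you in fact supply more detail than the paper for why the supremum may be restricted to centered functions). The only divergence is the bijective case, where the paper simply reapplies the inequality with the roles of $K$ and $T_*K$ swapped under $T^{-1}$, whereas you upgrade the two auxiliary inequalities to equalities directly; both arguments rest on the same facts (triviality of the conditioning and measurability of $T^{-1}$ on a Borel space) and are equally valid.
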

\begin{proof}
The first statement follows by
\[
	\gapm{\pi}{K} =
	1 - \|{\rm K}\|_{B_\pi},
	\qquad
	\|\mr K\|_{B_\pi}
	\coloneqq
	\sup_{g \in B_\pi} \|\mr Kg\|_{\pi},
\]
and, analogously, $\gapm{T_*\pi}{T_*K} = 1 - \|{\rm K_T}\|_{B_{T_*\pi}}$ as well as
\begin{align*}
	\|\mr K_T\|_{B_{T_*\pi}}
	& \coloneqq \sup_{f \in B_{T_*\pi}} \|\mr K_T f\|_{T_*\pi}
	\leq \sup_{f \in B_{T_*\pi}} \|\mr K (f\circ T)\|_{\pi}\\
	& = \sup_{f\circ T \in B_\pi \colon f \in B_{T_*\pi}} \|\mr K (f\circ T)\|_{\pi} \leq \|\mr K\|_{B_\pi},
\end{align*}
using Proposition~\ref{prop:TM_op_norm} and Proposition~\ref{prop:subsetballs}.
If $T$ is bijective, then we can use the same argumentation, but starting from $T_*K$ as the given transition kernel and considering $K$ as the pushforward transition kernel of $T_*K$ under $T^{-1}$.
This yields $\gapm{T_*\pi}{T_*K} \leq \gapm{\pi}{K}$ and, thus, the second statement.
\end{proof}

\bibliographystyle{plain}
\bibliography{literature}
\end{document}